\title{Mildly Exponential Time Approximation Algorithms for Vertex Cover, Uniform Sparsest Cut and Related Problems}
\author{
Pasin Manurangsi\thanks{Email: \texttt{pasin@berkeley.edu}.} \vspace{-0.5em}\\
UC Berkeley
\and
Luca Trevisan\thanks{Email: \texttt{luca@berkeley.edu}.} \vspace{-0.5em}\\
UC Berkeley
}
\begin{document}

\maketitle

\begin{abstract}
In this work, we study the trade-off between the running time of approximation algorithms and their approximation guarantees. By leveraging a structure of the ``hard'' instances of the Arora-Rao-Vazirani lemma~\cite{ARV09,Lee05}, we show that the Sum-of-Squares hierarchy can be adapted to provide ``fast'', but still exponential time, approximation algorithms for several problems in the regime where they are believed to be NP-hard. Specifically, our framework yields the following algorithms; here $n$ denote the number of vertices of the graph and $r$ can be any positive real number greater than 1 (possibly depending on $n$). 
\begin{itemize}
\item A $\left(2 - \frac{1}{O(r)}\right)$-approximation algorithm for Vertex Cover that runs in $\exp\left(\frac{n}{2^{r^2}}\right)n^{O(1)}$ time. 
\item An $O(r)$-approximation algorithms for Uniform Sparsest Cut, Balanced Separator, Minimum UnCut and Minimum 2CNF Deletion that runs in $\exp\left(\frac{n}{2^{r^2}}\right)n^{O(1)}$ time.
\end{itemize}
Our algorithm for Vertex Cover improves upon Bansal \etal's algorithm~\cite{BCLNN17} which achieves $\left(2 - \frac{1}{O(r)}\right)$-approximation in time $\exp\left(\frac{n}{r^r}\right)n^{O(1)}$. For the remaining problems, our algorithms improve upon $O(r)$-approximation $\exp\left(\frac{n}{2^r}\right)n^{O(1)}$-time algorithms that follow from a work of Charikar \etal~\cite{CMM10}.
\end{abstract}

\section{Introduction}

Approximation algorithms and fast (sub)exponential time exact algorithms are among the two most popular approaches employed to tackle NP-hard problems. While both have had their fair share of successes, they seem to hit roadblocks for a number of reasons; the PCP theorem~\cite{AS98,ALMSS98} and the theory of hardness of approximation developed from it have established, for many optimization problems, that trivial algorithms are the best one could hope for (in polynomial time). On the other hand, the Exponential Time Hypothesis (ETH)~\cite{IP01,IPZ01} and the fine-grained reductions surrounding it have demonstrated that ``brute force'' algorithms are, or at least close to, the fastest possible for numerous natural problems.

These barriers have led to studies in the cross-fertilization between the two fields, in which one attempts to apply both techniques simultaneously to overcome known lower bounds. Generally speaking, these works study the trade-offs between the running time of the algorithms and the approximation ratio. In other words, a typical question arising here is: what is the best running time for an algorithm with a given approximation ratio $\tau$?

Optimization problems often admit natural ``limited brute force'' approximation algorithms that use brute force to find the optimal solution restricted to a subset of variables and then extend this to a whole solution. Similar to the study of fast exact algorithms for which a general motivating question is whether one can gain a noticeable speedup over ``brute force'',
the analogous question when dealing with approximation algorithms is whether one can do significantly better than these limited brute force algorithms. 

For example, let us consider the E3SAT problem, which is to determine whether a given 3CNF formula is satisfiable. The brute force (exact) algorithm runs in $2^{O(n)}$ time, while ETH asserts that it requires $2^{\Omega(n)}$ time to solve the problem. The optimization version of E3SAT is the Max E3SAT problem, where the goal is to find an assignment that satisfies as many clauses as possible. On the purely approximation front, a trivial algorithm that assigns every variable uniformly independently at random gives 7/8-approximation for Max E3SAT, while Hastad's seminal work~\cite{Has01} established NP-hardness for obtaining $(7/8 + \varepsilon)$-approximation for any constant $\varepsilon > 0$. The ``limited brute force'' algorithm for Max E3SAT chooses a subset of $O(\epsilon n)$ variables, enumerates all possible assignments to those variables and picks values of the remaining variables randomly; this achieves $(7/8 + \varepsilon)$-approximation in time $2^{O(\epsilon n)}$. Interestingly, it is known that running time of $2^{\Omega(\poly(\varepsilon)n)}$ is necessary to gain a $(7/8 +\varepsilon)$-approximation if one uses Sum-of-Squares relaxations~\cite{Gri01,Sch08,KMOW17}, which gives some evidence that the running time of ``limited brute force'' $(7/8 + \varepsilon)$ approximation algorithms for Max E3SAT are close to best possible.

In contrast to Max E3SAT, one can do much better than ``limited brute force'' for Unique Games. Specifically, Arora et al.~\cite{AIMS10} show that one can satisfy an $\epsilon$ fraction of clauses in a $(1-\epsilon)$-satisfiable instance of Unique
Games in time $2^{n/exp(1/\epsilon)}$, a significant improvement over the trivial $2^{O(\varepsilon n)}$ time ``limited brute force'' algorithm. This algorithm was later improved by the celebrated algorithm of Arora, Barak and Steurer~\cite{ABS15}  that runs in time $2^{n^{poly(\epsilon)}}$.

A number of approximation problems, such as $(2 - \varepsilon)$-approximation of Vertex Cover~\cite{KR08,BK09}, $(0.878\cdots + \varepsilon)$ approximation of Max Cut~\cite{KKMO07}, and constant approximation of Non-uniform Sparsest Cut~\cite{CKKRS06,KV15} are known to be at least as hard as Unique Games, but are not known to be equivalent to Unique Games. If they were equivalent, the subexponential algorithm of \cite{ABS15} would also 
extend to these other problems. It is then natural to ask whether these problems admit subexponential time algorithms, or at least ``better than brute force'' algorithms. Indeed, attempts have been made to design such algorithms~\cite{ABS15,GS11}, although these algorithms only achieve significant speed-up for specific classes of instances, not all worst case instances.

Recently, Bansal et al. \cite{BCLNN17} presented a ``better than brute force'' algorithm for Vertex Cover, which achieve a $(2 - 1/O(r))$-approximation in time $2^{O(n / r^r)}$. Note that the trade-off between approximation and running time is more analogous to the \cite{AIMS10} algorithm for Unique Games than with the ``limited brute force'' algorithm for Max 3ESAT discussed above.

The algorithm of Bansal et al. is partially combinatorial and is based on a reduction to the Vertex Cover problem in bounded-degree graphs, for which better approximation algorithms are known compared to general graphs. Curiously, the work of Bansal et al. does not subsume the best known polynomial time algorithm for Vertex Cover: Karakostas \cite{Kar09} shows that there is a polynomial time algorithm for Vertex Cover that achieves a $\left(2 - \frac {\Omega(1)}  {\sqrt{\log n}}\right)$ approximation ratio, but if one set $r:= \sqrt {\log n}$ in the algorithm of Bansal et al. one does not get a polynomial running time.

This overview raises a number of interesting questions: is it possible to replicate, or improve, the vertex
cover approximation of Bansal et al.  \cite{BCLNN17} using Sum-of-Square relaxations? A positive result
would show that, in a precise sense, $(7/8 + \varepsilon)$ approximation of Max 3SAT is ``harder'' than
$(2 - \varepsilon)$ approximation for Vertex Cover (since the former requires $\poly(\varepsilon) \cdot n$ rounds while the latter would be achievable with $n / \exp(1/\varepsilon)$ rounds). Is it possible to have a ``better than brute force''
approximation algorithm for Vertex Cover that recovers Karakostas's algorithm as a special case? Is it possible to do the same for other problems that are known to be Unique-Games-hard but not NP-hard, such as constant-factor approximation of Balanced Separator?

\subsection{Our Results}

In this work, we answer the above questions affirmatively by designing ``fast'' exponential time approximation algorithms for Vertex Cover, Uniform Sparsest Cut and related problems. For Vertex Cover, our algorithm gives $(2 - 1/O(r))$-approximation in time $\exp(n/2^{r^2})n^{O(1)}$ where $n$ is the number of vertices in the input graph and $r$ is a parameter that can be any real number at least one (and can depend on $n$). This improves upon the aforementioned recent algorithm of Bansal \etal~\cite{BCLNN17} which, for a similar approximation ratio, runs in time $\exp(n/r^r)n^{O(1)}$.  For the remaining problems, our algorithms give $O(r)$-approximation in the same running time, which improves upon a known $O(r)$-approximation algorithms with running time $\exp(n/2^r)n^{O(1)}$ that follow from~\cite{CMM10} (see the end of Section~\ref{subsec:related-works} for more details):

\begin{theorem}[Main] \label{thm:main}
For any $r > 1$, there is an $\exp(n/2^{r^2})n^{O(1)}$-time $(2 - 1/O(r))$-approximation algorithm for Vertex Cover on $n$-vertex graphs, and, there are $\exp(n/2^{r^2})n^{O(1)}$-time $O(r)$-approximation algorithms for Uniform Sparsest Cut, Balanced Separator, Min UnCut and Min 2CNF Deletion.
\end{theorem}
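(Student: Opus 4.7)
The plan is to combine the Sum-of-Squares (SoS) hierarchy with an ARV-style structural dichotomy, so that on each input we either round directly or branch on a small structured set of vertices. Concretely, I would solve an SoS relaxation at level $k = \Theta(r^2)$ of the natural SDP for each problem (the standard SDP strengthened by triangle inequalities for Vertex Cover; the $\ell_2^2$-metric SDP for the cut-type problems Uniform Sparsest Cut, Balanced Separator, Min UnCut, and Min 2CNF Deletion). Computing this relaxation takes $n^{O(r^2)}$ time and yields vectors $\{v_i\}_{i \in V}$ together with higher-order pseudo-moments.

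Next, I would apply an ARV-type structure theorem to the resulting $\ell_2^2$ metric, producing a dichotomy. Either there exist two ``large'' sets $L, R \subseteq V$, each of size $\Omega(n)$, at $\ell_2^2$-distance $\Omega(1/r)$, in which case standard hyperplane-style rounding (Karakostas-style for Vertex Cover) yields the target ratio in polynomial time. Or the metric is heavily ``clustered'', in which case the higher SoS levels should, through an iterated conditioning argument on the pseudo-distribution, single out a subset $S \subseteq V$ with $|S| \leq n/2^{r^2}$ whose assignment determines the rest of the solution up to a polynomial-time-rounded residual. In the latter case, brute-force enumeration over all $2^{|S|}$ assignments to $S$, followed by polynomial-time rounding of the residual on $V \setminus S$, takes total time $\exp(n/2^{r^2})\, n^{O(1)}$. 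Taking the best solution over both branches, and over the enumeration inside the clustered branch, gives the claimed approximation within the stated running time for all five problems uniformly.

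The main obstacle is the clustered branch: proving that $\Theta(r^2)$ SoS levels suffice to drive the branching set down to size $n/2^{r^2}$. The $\exp(n/2^r)$ bound implicit in Charikar \emph{et al.}'s framework corresponds to a weaker reduction where $r$ conditioning steps produce a set of size $n/2^r$. The improvement here should come from showing that each of the $\Theta(r^2)$ available levels of SoS either surfaces a new well-separated pair (allowing us to split to the rounding branch) or halves the ``non-clustered'' mass, so that after $\Theta(r^2)$ iterations the residual undetermined set has size at most $n/2^{r^2}$. Quantifying this halving per level --- making it robust under repeated conditioning on pseudo-distributions, and ensuring the ARV-style separation threshold degrades only by a constant factor per round --- is where the technical heart of the argument lies. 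For Vertex Cover there is the additional step of preserving the SDP's half-integrality structure through conditioning, so that the rounded solution achieves $(2 - 1/O(r))$ rather than merely $O(r)$; this presumably threads Karakostas's analysis through the residual SDP produced at each branch.
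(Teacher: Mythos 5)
There is a genuine gap, and it sits exactly where you locate ``the technical heart'': the clustered branch as you describe it does not work, and the paper's mechanism is different in a way that changes where the exponential cost comes from. You set the SoS degree to $\Theta(r^2)$ and plan to pay exponential time later by brute-forcing over a branching set of size $n/2^{r^2}$. The paper instead solves the relaxation at degree $D \approx n/2^{r^2}$ (this is what costs $\exp(n/2^{\Omega(r^2)})$ time), and then performs up to $D$ greedy conditionings --- no enumeration at all. The reason degree $\Theta(r^2)$ cannot suffice is quantitative: the conditioning argument is a potential argument on $\Phi(\pE) = \sum_i \pE[X_i]^2 \leq n$, where conditioning on a ``bad'' vertex (one correlated with more than $n/(\ell(\gamma-\tau^2)^2)$ other undecided vertices) increases $\Phi$ by at least $n/\ell$. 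Hence after at most $\ell$ conditionings every undecided vertex has at most $\approx n/\ell$ correlated neighbors. To drive this ``ball size'' down to $2^{O(r^2)}$ you need $\ell \approx n/2^{r^2}$ conditionings, i.e., degree $\approx n/2^{r^2}$; with only $\Theta(r^2)$ levels you would get ball size $\approx n/r^2$, and the ARV separation would remain $\Omega(1/\sqrt{\log n})$, no better than the polynomial-time algorithms. Your proposed fix --- that each level ``halves the non-clustered mass'' so $\Theta(r^2)$ levels give $n/2^{r^2}$ --- is asserted but has no supporting mechanism; the actual gain per conditioning is an additive $n/\ell$ in potential, not a multiplicative halving of anything, and no known argument yields a size-$n/2^{r^2}$ set whose assignment ``determines the rest.''

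The second missing ingredient is the \emph{refined} ARV structural theorem, which is what converts hollowness into the improved ratio: if every point's $\ell_2^2$-ball of constant radius contains at most $m$ points, the two well-separated sets are $\Omega(1/\sqrt{\log m})$-separated rather than $\Omega(1/\sqrt{\log n})$-separated. With $m = 2^{O(r^2)}$ from the conditioning step this gives separation $\Omega(1/r)$, which is where the $O(r)$ (respectively $2 - 1/O(r)$) ratio comes from. Your dichotomy only invokes the standard ARV theorem (large separated sets exist or not), so even in your ``round directly'' branch you have no source for the $\Omega(1/r)$ separation. Finally, the paper does need a second easy case --- when most vertices have $\pE[X_i]$ near $\pm 1$ after conditioning --- which is handled by explicit constant-separated sets rather than ARV; your proposal has no analogue of this, and it is needed because the hollowness guarantee only holds on the undecided vertices. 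The Vertex Cover case then goes through Karakostas's antipodal ARV variant applied to the near-half-integral vertices, much as you anticipate, but only once the $\Omega(1/r)$ separation is in hand.
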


We remark that, when $r = C\sqrt{\log n}$ for a sufficiently large constant $C$, our algorithms coincide with the best polynomial time algorithms known for these problems~\cite{Kar09,ARV09,ACMM05}.

\subsection{Other Related Works} \label{subsec:related-works}

To prove Theorem~\ref{thm:main}, we use the Sum-of-Square relaxations of the problems and employ the conditioning framework from~\cite{BRS11,RT12} together with the main structural lemma from Arora, Rao and Vazirani's work~\cite{ARV09}. We will describe how these parts fit together in Section~\ref{sec:overview}. Before we do so, let us briefly discuss some related works not yet mentioned.

{\bf Sum-of-Square Relaxation and the Conditioning Framework.}
The Sum-of-Square (SoS) algorithm~\cite{Nes00,Par00,Las02} is a generic yet powerful meta-algorithm that can be utilized to any polynomial optimization problems. The approach has found numerous applications in both continuous and combinatorial optimization problems. Most relevant to our work is the conditioning framework developed in~\cite{BRS11,RT12}. Barak \etal~\cite{BRS11} used it to provide an algorithm for Unique Games with similar guarantee to~\cite{ABS15}, while Raghavendra and Tan~\cite{RT12} used the technique to give improved approximation algorithms for CSPs with cardinality constraints. A high-level overview of this framework is given in Sections~\ref{sec:conditioning-overview} and~\ref{sec:algo-overview}.


{\bf Approximability of Vertex Cover, Sparsest Cut and Related Problems.} All problems studied in our work are very well studied in the field of approximation algorithms and hardness of approximation. For Vertex Cover, the greedy 2-approximation algorithm has been known since the 70's (see e.g.~\cite{GJ79}). Better $(2 - \Omega(\frac{\log \log n}{\log n}))$-approximation algorithms were independently discovered in~\cite{BE85} and~\cite{MS85}. These were finally improved by Karakostas~\cite{Kar09} who used the ARV Structural Theorem to provide a $(2 - \Omega(1/\sqrt{\log n}))$-approximation for the problem. On the lower bound side, Hastad~\cite{Has01} show that $(7/6 - \varepsilon)$-approximation for Vertex Cover is NP-hard. The ratio was improved in~\cite{DS05} to 1.36. The line of works that very recently obtained the proof of the (imperfect) 2-to-1 game conjecture~\cite{KMS17,DKKMS16,DKKMS17,KMS18} also yield NP-hardness of $(\sqrt{2} - \varepsilon)$-approximate Vertex Cover as a byproduct. On the other hand, the Unique Games Conjecture (UGC)~\cite{Kho02} implies that approximating Vertex Cover to within a factor $(2 - \varepsilon)$ is NP-hard~\cite{KR08,BK09}. We remark here that only Hastad reduction (together with Moshkovitz-Raz PCP~\cite{MR10}) implies an almost exponential lower bond in terms of the running time, assuming ETH. Putting it differently, it could be the case that Vertex Cover can be approximated to within a factor $1.2$ in time say $2^{O(\sqrt{n})}$, without refuting any complexity conjectures or hypotheses mentioned here. Indeed, the question of whether a subexponential time $(2 - \varepsilon)$-approximation algorithm for Vertex Cover exists for some constant $\varepsilon > 0$ was listed as an ``interesting'' open question in~\cite{ABS15}, and it remains so even after our work.

As for (Uniform) Sparsest Cut and Balanced Separator, they were both studied by Leighton and Rao who gave $O(\log n)$-approximation algorithms for the problems~\cite{LR99}. The ratio was improved in~\cite{ARV09} to $O(\sqrt{\log n})$. In terms of hardness of approximation, these problems are \emph{not} known to be NP-hard or even UGC-hard to approximate to even just 1.001 factor. (In contrast, the non-uniform versions of both problems are hard to approximate under UGC~\cite{CKKRS06,KV15}.) Fortunately, inapproximability results of Sparsest Cut and Balanced Separator are known under stronger assumptions~\cite{Fei02,Kho06,RST12}. Specifically, Raghavendra \etal~\cite{RST12} shows that both problems are hard to approximate to any constant factor under the Small Set Expansion Hypothesis (SSEH)~\cite{RS10}. While it is not known whether SSEH follows from UGC, they are similar in many aspects, and indeed subexponential time algorithms for Unique Games~\cite{ABS15,BRS11} also work for the Small Set Expansion problem. This means, for example, that there could be an $O(1)$-approximation algorithm for both problems in subexponential time without contradicting with any of the conjectures. Whether such algorithm exists remains an intriguing open question.

Finally, both Min UnCut and Min 2CNF Deletion are shown to be approximable to within a factor of $O(\sqrt{\log n})$ in polynomial time by Agarwal \etal~\cite{ACMM05}, which improves upon previous known $O(\log n)$-approximation algorithm for Min UnCut and $O(\log n \log \log n)$-approximation algorithm for Min 2CNF Deletion by Garg \etal~\cite{GVY96} and Klein \etal~\cite{KPRT97} respectively. On the hardness side, both problems are known to be NP-hard to approximate to within $(1 + \varepsilon)$ factor for some $\varepsilon > 0$~\cite{PY91}. Furthermore, both are UGC-hard to approximate to within any constant factor~\cite{KKMO07,CKKRS06,KV15}. That is, the situations for both problems are quite similar to Sparsest Cut and Balanced Separator: it is still open whether there are subexponential time algorithms that yield $O(1)$-approximation for Min UnCut and Min 2CNF Deletion.

{\bf Fast Exponential Time Approximation Algorithms.}
As mentioned earlier, Bansal \etal~\cite{BCLNN17} recently gave a ``better than brute force'' approximation algorithm for Vertex Cover. Their technique is to first observe that we can use branch-and-bound on the high-degree vertices; once only the low-degree vertices are left, they use Halperin's (polynomial time) approximation algorithm for Vertex Cover on bounded degree graphs~\cite{Hal02} to obtain a good approximation. This approach is totally different than ours, and, given that the only way known to obtain $(2 - \Omega(1/\sqrt{\log n}))$-approximation in polynomial time is via the ARV Theorem, it is unlikely that their approach can be improved to achieve similar trade-off as ours.

\cite{BCLNN17} is not the first work that gives exponential time approximation algorithms for Vertex Cover. Prior to their work, Bourgeois \etal~\cite{BEP11} gives a $(2 - 1/O(r))$-approximation $\exp(n/r)$-time algorithm for Vertex Cover; this is indeed a certain variant of the ``limited brute force'' algorithm. Furthermore, Bansal \etal~\cite{BCLNN17} remarked in their manuscript that Williams and Yu have also independently come up with algorithms with similar guarantees to theirs, but, to the best of our knowledge, Williams and Yu's work is not yet made publicly available. 

For Sparsest Cut, Balanced Separator, Min UnCut and Min 2CNF Deletion, it is possible to derive $O(r)$-approximation algorithms that run in $\exp(n/2^r)$-time from a work of Charikar \etal~\cite{CMM10}. In particular, it was shown in~\cite{CMM10} that, for any metric space of $n$ elements, if every subset of $n/2^r$ elements can be embedded isometrically into $\ell_1$, then the whole space can be embedded into $\ell_1$ with distortion $O(r)$. Since $d$-level of Sherali-Adams (SA) relaxations for these problems ensure that every $d$-size subset of the corresponding distance metric space can be embedded isometrically into $\ell_1$, $(n/2^r)$-level of SA relaxations, which can be solved in $\exp(n/2^{\Omega(r)})$ time, ensures that the entire metric space can be embedded into $\ell_1$ with distortion $O(r)$. An algorithm with approximation ratio $O(r)$ can be derived from here, by following the corresponding polynomial time algorithm for each of the problems (\cite{LR99,Kar09,ACMM05}).

\subsection*{Organization}
In the next section, we describe the overview of our algorithms. Then, in Section~\ref{sec:prelim}, we formalize the notations and state some preliminaries. The main lemma regarding conditioned SoS solution and its structure is proved in Section~\ref{sec:condition-lemma}. This lemma is subsequently used in all our algorithms which are presented in Section~\ref{sec:alg}. We conclude our paper with several open questions in Section~\ref{sec:open}.

\section{Overview of Technique} \label{sec:overview}

Our algorithms follow the ``conditioning'' framework developed in~\cite{BRS11,RT12}. In fact, our algorithms are very simple provided the tools from this line of work, and the ARV structural theorem from~\cite{ARV09,Lee05}. To describe the ideas behind our algorithm, we will first briefly explains the ARV structural theorem and how conditioning works with Sum-of-Squares hierarchy in the next two subsections. Then, in the final subsection of this section, we describe the main insight behind our algorithms. For the ease of explaining the main ideas, we will sometimes be informal in this section; all algorithms and proofs will be formalized in the sequel.

For concreteness, we will use the $c$-Balanced Separator problem as the running example in this section. In the $c$-Balanced Separator problem, we are given a graph $G = (V, E)$ and the goal is to find a partition of $V$ into $S_0$ and $S_1 = V \setminus S_0$ that minimizes the number of edges across the cut $(S_0, S_1)$ while also ensuring that $|S_0|, |S_1| \geqs c'n$ for some constant $c' \in (0, c)$ where $n = |V|$. Note that the approximation ratio is the ratio between the number of edges cut by the solution and the optimal under the condition $|S_0|, |S_1| \geqs cn$. (That is, this is a pseudo approximation rather than a true approximation.) For the purpose of exposition, we focus only on the case where $c = 1/3$.

\subsection{The ARV Structural Theorem}

The geometric relaxation used in~\cite{ARV09} embeds each vertex $i \in V$ into a point $v_i \in \mathbb{R}^d$ such that $\|v_i\|_2 = 1$. For a partition $(S_0, S_1)$, the intended solution is $v_i = v_{\emptyset}$ if $i \in S_0$ and $v_i = -v_{\emptyset}$ otherwise, where $v_{\emptyset}$ is some unit vector. As a result, the objective function here is $\sum_{(i, j) \in E} \frac{1}{4} \|v_i - v_j\|_2^2$, and the cardinality condition $|S_0|, |S_1| \geqs n/3$ is enforced by $\sum_{i, j \in V} \|v_i - v_j\|_2^2 \geqs 8n/9$. Furthermore, Arora et al.~\cite{ARV09} also employ the triangle inequality: $\|v_i - v_j\|_2^2 \leqs \|v_i - v_k\|_2^2 + \|v_k - v_j\|_2^2$ for all $i, j, k \in V$. In other words, this relaxation can be written as follows.

\begin{align}
\text{minimize } & \sum_{(i, j) \in E} \frac{1}{4} \|v_i - v_j\|_2^2 \\ 
\text{subject to } & \sum_{i, j \in [n]} \|v_i - v_j\|_2^2 \geqs 8n/9 \label{eq:spread-cond} \\
& \|v_i\|_2^2 = 1 & \forall i \in V \label{eq:unit-cond} \\
& \|v_i - v_j\|_2^2 \leqs \|v_i - v_k\|_2^2 + \|v_k - v_j\|_2^2 & \forall i, j, k \in V \label{eq:triangle-inq-cond}
\end{align}

Note here that the above relaxation can be phrased as a semidefinite program and hence can be solved to arbitrarily accuracy in polynomial time. The key insight shown by Arora \etal\ is that, given a solution $\{v_i\}_{i \in V}$ to the above problem, one can find two sets of vertices $T, T'$ that are $\Omega(1/\sqrt{\log n})$ apart from each other, as stated below. Note that this version is in fact from~\cite{Lee05}; the original theorem of~\cite{ARV09} has a worst parameter with $\Delta = \Omega((\log n)^{2/3})$.

\begin{theorem}[ARV Structural Theorem~\cite{ARV09,Lee05}]
Let $\{v_i\}_{i \in V}$ be any vectors in $\mathbb{R}^d$ satisfying~\eqref{eq:spread-cond},~\eqref{eq:unit-cond},~\eqref{eq:triangle-inq-cond}. There exist disjoint sets $T, T' \subseteq V$ each of size $\Omega(n)$ such that, for every $i \in T$ and $j \in T'$, $\|v_i - v_j\|_2^2 \geqs \Delta = \Omega(1/\sqrt{\log n})$. Moreover, such sets can be found in randomized polynomial time.
\end{theorem}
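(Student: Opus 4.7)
The plan is to adopt the random-projection-and-matching paradigm of Arora, Rao, and Vazirani, sharpened by Lee. First, I would sample a Gaussian direction $u \sim \mathcal{N}(0, I_d)$ and, for a small absolute constant $\sigma_0 > 0$, define $L = \{i \in V : \langle v_i, u\rangle \leq -\sigma_0\}$ and $R = \{i \in V : \langle v_i, u\rangle \geq \sigma_0\}$. To see that $|L|, |R| = \Omega(n)$ with constant probability, note that the spreading condition \eqref{eq:spread-cond} combined with $\|v_i\|_2 = 1$ forces the centroid $\frac{1}{n}\sum_i v_i$ to have squared norm bounded away from $1$, so the projections $\langle v_i, u \rangle$ have $\Theta(1)$ variance and are not too concentrated near zero; a Paley--Zygmund / second-moment computation then delivers the size bound.

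Given such $L$ and $R$, I would try to extract $T \subseteq L$ and $T' \subseteq R$ of size $\Omega(n)$ with every cross pair at squared distance at least $\Delta := c/\sqrt{\log n}$ (for a small constant $c$) by greedily removing from $L$ and $R$ every vertex that participates in a \emph{close pair}, meaning $(i,j) \in L \times R$ with $\|v_i - v_j\|_2^2 < \Delta$. If this succeeds, we are done. Otherwise, by K\"onig's theorem applied to the bipartite close-pairs graph there must be a matching $M$ of close pairs with $|M| = \Omega(n)$, and the task reduces to showing that, with high probability over $u$, no such large matching can exist.

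The contradiction proceeds by chaining. Starting from an arbitrary $i_0 \in L$, alternately following $M$ and its reverse produces a walk $i_0, j_0, i_1, j_1, \ldots, i_{k-1}, j_{k-1}$ whose consecutive vertices lie at squared distance at most $\Delta$. Iterating the triangle inequality \eqref{eq:triangle-inq-cond} bounds the end-to-end distance $\|v_{i_0} - v_{j_{k-1}}\|_2^2$ by $O(k\Delta)$. On the other hand, by construction consecutive $u$-projections alternate between $\leq -\sigma_0$ and $\geq \sigma_0$, so the sequence of projections $\langle v_{i_0}, u\rangle, \langle v_{j_0}, u\rangle, \ldots$ forms a ``zigzag'' whose individual steps have magnitude at least $2\sigma_0$. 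A Gaussian concentration estimate for this projection random walk of length $k$, conditioned on the small end-to-end $\ell_2^2$ distance, bounds the probability of such a chain's existence by something exponentially small in $k\sigma_0^2/\Delta$.

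The main obstacle is calibrating this last chaining step. A crude Gaussian tail analysis falls short of the $\Omega(1/\sqrt{\log n})$ target; the sharp bound requires Lee's more refined treatment, which exploits the joint distribution of Gaussian projections along the chain (in effect via Gaussian isoperimetry / a measure-concentration argument for correlated Gaussians) rather than handling each step in isolation. Balancing the chain length $k \sim \sqrt{\log n}$ against $\sigma_0 = \Theta(1)$ then yields $\Delta = \Omega(1/\sqrt{\log n})$. The entire construction --- Gaussian sampling, projection, and greedy pruning of close pairs --- runs in randomized polynomial time, giving the algorithmic part of the statement.
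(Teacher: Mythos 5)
First, note that the paper does not prove this statement at all: it is imported verbatim as a black box from \cite{ARV09,Lee05} (with the $\Omega(1/\sqrt{\log n})$ constant coming from Lee's refinement), so the relevant question is whether your sketch would stand on its own as a re-derivation. It would not. You have correctly identified the skeleton of the ARV argument --- random Gaussian projection, extraction of two projected ``sides,'' greedy deletion of close cross pairs, and a matching-versus-chaining dichotomy --- but the two load-bearing steps are either wrong as stated or explicitly deferred. The structural error is in the chain construction: a single matching $M$ of close pairs cannot be ``alternately followed with its reverse'' to produce a long walk; starting from $i_0$ it takes you to its unique partner $j_0$ and then straight back to $i_0$. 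The length-$k$ chains in \cite{ARV09} arise from \emph{composing $k$ different matchings}, produced by an iterative ``boosting'' argument on the set of stretched points, where each new hop is justified by a measure-concentration step (the Gaussian neighborhood of a large set is large) rather than by re-reading the same matching. Without that iteration there is no chain and no contradiction. A secondary issue: defining $L$ and $R$ by absolute thresholds $\langle v_i,u\rangle \le -\sigma_0$ and $\ge \sigma_0$ can leave one side empty (the whole projected cloud may sit on one side of the origin even under \eqref{eq:spread-cond}); the standard fix is to threshold around the median of the projections, using the fact that \eqref{eq:spread-cond} guarantees $\Omega(n^2)$ pairs at constant distance.

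More fundamentally, you acknowledge that ``a crude Gaussian tail analysis falls short'' and that the sharp $\Omega(1/\sqrt{\log n})$ bound ``requires Lee's more refined treatment.'' That refined treatment --- the quantitative interplay between the squared triangle inequality \eqref{eq:triangle-inq-cond} (which caps the end-to-end Euclidean length of a $k$-hop chain at $\sqrt{k\Delta}$), the linear growth $k\sigma_0$ of the projection along the chain, and the Gaussian measure-concentration lower bound on the probability that such chains exist --- is precisely the content of the theorem, and it occupies the bulk of both \cite{ARV09} and \cite{Lee05}. Deferring it means the proposal is an accurate table of contents for the known proof rather than a proof. Since the paper uses the theorem purely as a cited tool (and in fact needs the further hollowness-refined version, Theorem~\ref{thm:arv-refined}, which your sketch does not address), the honest conclusion is that your attempt identifies the right prior-work strategy but does not close the argument.
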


It should be noted that, given the above theorem, it is easy to arrive at the $\Omega(1/\sqrt{\log n})$-approximation algorithm for balanced separator. In particular, we can pick a number $\theta$ uniformly at random from $[0, \Delta)$ and then output $S_0 = \{i \in V \mid \exists j \in T, \|v_i - v_j\|_2^2 \leqs \theta\}$ and $S_1 = V \setminus S_0$. It is easy to check that the probability that each edge $(i, j) \in E$ is cut is at most $\|v_i - v_j\|_2^2/\Delta = O(\sqrt{\log n} \cdot \|v_i - v_j\|_2^2)$. Moreover, we have $|S_0| \geqs |T| \geqs \Omega(n)$ and $|S_1| \geqs |T'| \geqs \Omega(n)$, meaning that we have arrived at an $O(\sqrt{\log n})$-approximate solution for Balanced Separator.

An interesting aspect of the proof of~\cite{Lee05} is that the bound on $\Delta$ can be improved if the solution $\{v_i\}_{i \in V}$ is ``hollow'' in the following sense: for every $i \in V$, the ball of radius\footnote{Here 0.1 can be changed to arbitrary positive constant; we only use it to avoid introducing additional parameters.} 0.1 around $i$ contains few other vectors $v_j$'s. In particular, if there are only $m$ such $v_j$'s, then $\Delta$ can be made $\Omega(1/\sqrt{\log m})$, instead of $\Omega(1/\sqrt{\log n})$ in the above version. We will indeed use this more fine-grained version (in a black-box manner) in our algorithms. To the best of our knowledge, this version of the theorem has not yet been used in other applications of the ARV Structural Theorem.

\begin{theorem}[Refined ARV Structural Theorem~\cite{ARV09,Lee05}] \label{thm:arv-refined}
Let $\{v_i\}_{i \in V}$ be any vectors in $\mathbb{R}^d$ satisfying~\eqref{eq:spread-cond},~\eqref{eq:unit-cond},~\eqref{eq:triangle-inq-cond}. Moreover, let $m = \max_{i \in V} |\{j \in V \mid \|v_i - v_j\|_2^2 \leqs 0.01\}|$. There exist disjoint sets $T, T' \subseteq V$ each of size $\Omega(n)$ such that, for every $i \in T$ and $j \in T'$, $\|v_i - v_j\|_2^2 \geqs \Delta = \Omega(1/\sqrt{\log m})$. Moreover, such sets can be found in randomized polynomial time.
\end{theorem}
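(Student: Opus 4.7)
The plan is to replay Lee's proof of the standard ARV structural theorem and observe that the only place in which the total vertex count $n$ enters is through a union bound over vectors lying in a constant-radius Euclidean ball; replacing that union bound by the hypothesis on $m$ gives the refined conclusion $\Delta = \Omega(1/\sqrt{\log m})$ essentially for free.

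First, I would draw a random Gaussian $g \sim N(0, I_d)$ and form the initial candidates $L = \{i : \langle v_i, g\rangle \geqs \sigma\}$ and $R = \{i : \langle v_i, g\rangle \leqs -\sigma\}$ for a small absolute constant $\sigma$. The unit-norm condition~\eqref{eq:unit-cond} together with the spreading condition~\eqref{eq:spread-cond} and standard Gaussian anti-concentration imply $|L|,|R| = \Omega(n)$ with constant probability over $g$. A pair $(i,j) \in L \times R$ with $\|v_i - v_j\|_2^2 < \Delta$ is called \emph{bad}, and the goal is to delete an $o(1)$-fraction of $L$ and $R$ that kills every bad pair; the surviving sets $T \subseteq L$, $T' \subseteq R$ then witness the conclusion of the theorem.

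Following Lee's chaining argument, if too many bad pairs survive then iterated Hall-type matching produces long chains $i_0, i_1, \ldots, i_\ell$ with $\|v_{i_k} - v_{i_{k+1}}\|_2^2 \leqs \Delta$ for each $k$, starting at $i_0 \in L$ and ending at $i_\ell \in R$. The squared triangle inequality~\eqref{eq:triangle-inq-cond} applied inductively gives $\|v_{i_0} - v_{i_k}\|_2^2 \leqs k\Delta$; in particular, if we choose the chaining depth so that $\ell\Delta \leqs 0.01$, the entire chain sits inside the Euclidean ball of radius $0.1$ about $v_{i_0}$. On the other hand, the projection $\langle v_{i_0} - v_{i_\ell}, g\rangle$ is a centred Gaussian of variance at most $\ell\Delta$, so a union bound over the possible endpoints $v_{i_\ell}$ yields $|\langle v_{i_0} - v_{i_\ell}, g\rangle| = O(\sqrt{\ell\Delta \log N})$ with high probability, where $N$ is the number of candidate endpoints; this quantity must exceed $2\sigma$ for $i_0 \in L$, $i_\ell \in R$, and balancing these estimates through Lee's chaining yields the best $\Delta$ one can force. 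Standard ARV takes $N = n$ (all vertices are possible endpoints) and obtains $\Delta = \Omega(1/\sqrt{\log n})$; our refinement observes that the chain is pinned inside the $0.1$-ball and therefore has at most $m$ possible endpoints, so $N = m$ and $\Delta = \Omega(1/\sqrt{\log m})$.

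The main obstacle is the bookkeeping of Lee's matching iteration: one has to verify that the iterative chain-extension still produces chains of the length required by the above balancing, subject to the extra constraint that all intermediate vertices (and in particular the endpoint $v_{i_\ell}$ on which the union bound is taken) lie in the $0.1$-ball around $v_{i_0}$. Modulo this technical check, everything is Lee's argument with $n$ replaced by $m$ in the union bounds, and the randomized-polynomial-time implementation (via approximate maximum-matching on the graph of bad pairs) carries over unchanged from~\cite{Lee05}.
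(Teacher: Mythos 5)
Your proposal is correct and follows essentially the same route as the paper, which likewise treats this as Lee's chaining argument with the union bound localized: the paper's justification is precisely that the inductive hypothesis of Lemma 4.2 of \cite{Lee05} confines the relevant cover/chain to a ball of bounded radius around its origin, so that Lemma 4.1 forces $|\cB(x,r)| > m$ there, contradicting hollowness. Your observation that the squared triangle inequality pins the chain inside the $0.1$-ball (squared radius $0.01$), so the union bound runs over $m$ rather than $n$ candidate endpoints, is exactly this point, and the bookkeeping you defer is the same bookkeeping the paper defers to \cite{Lee05}.
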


\subsection{Conditioning in Sum-of-Square Hierarchies} \label{sec:conditioning-overview}

Another crucial tool used in our algorithm is Sum-of-Square hierarchy and the conditioning technique developed in~\cite{BRS11,RT12}. Perhaps the most natural interpretation of the sum-of-square solution with respect to the conditioning operation is to view the solution as local distributions. One can think of a degree-$d$ sum-of-square solution for Balanced Separator as a collection of local distributions $\mu_S$ over $\{0, 1\}^S$ for subsets of vertices $S \subseteq V$ of sizes at most $d$ that satisfies certain consistency and positive semi-definiteness conditions, and additional linear constraints corresponding to $|S_0|, |S_1| \geqs n/3$ and the triangle inequalities. More specifically, for every $U \subseteq V$ and every $\phi: U \to \{0, 1\}$, the degree-$d$ sum-of-squares solution gives us $\Pr_{\mu_U}[\forall_{j \in U}, j \in S_{\phi(j)}]$ which is a number between zero and one. The consistency constraints ensures that these distributions are locally consistent; that is, for every $U' \subseteq U \in \{0, 1\}$, the marginal distribution of $\mu_U$ on $U'$ is equal to $\mu_{U'}$. We remark here that, for Balanced Separator and other problems considered in this work, a solution to the degree-$d$ SoS relaxation for them can be found in time $\binom{n}{d}^{O(1)} = O(n/d)^{O(d)}$.

This consistency constraint on these local distributions allow us to define conditioning on local distributions in the same ways as typical conditional distributions. For instance, we can condition on the event $i \in S_0$ if $\Pr_{\mu_i}[i \in S_0] \ne 0$; this results in local distributions $\{\mu'_U\}_{U \subseteq V, |U| \leqs d - 1}$ where $\mu'_U$ is the conditional distribution of $\mu_{U \cup \{i\}}$ on the event $i \in S_0$. In other words, for all $\phi: U \to \{0, 1\}$,
\begin{align*}
\Pr_{\mu'_U}[\forall j \in U, j \in S_{\phi(j)}] =
\frac{\Pr_{\mu_{U \cup \{i\}}}\left[i \in S_0 \wedge \left(\forall j \in U, j \in S_{\phi(j)}\right)\right]}{\Pr_{\mu_i}[i \in S_0]}.
\end{align*}
Notice that the local distributions are now on subsets of at most $d - 1$ vertices instead of on subsets of at most $d$ vertices. In other words, the conditioned solution is a degree-$(d - 1)$ solution.

As for the semi-definiteness constraint, it suffices for the purpose of this discussion to think about only the degree-2 solution case. For this case, the semi-definiteness constraint in fact yields unit vectors $v_{\emptyset}, \{v_{j}\}_{j \in V}$ such that
\begin{align*}
\Pr_{\mu_i}[i \in S_0] &= \frac{1 + \left<v_{\emptyset}, v_i\right>}{2} &\forall i \in V, \\
\Pr_{\mu_{\{i, j\}}}[i, j \in S_0] &= \frac{1 + \left<v_{\emptyset}, v_i\right> + \left<v_{\emptyset}, v_j\right> + \left<v_i, v_j\right>}{4} &\forall i, j \in V.
\end{align*}
It is useful to also note that the probability that $i, j$ are on different side of the cut is exactly equal to $\frac{1}{4}\|v_i - v_j\|_2^2$; this is just because
\begin{align} \label{eq:diff-prob}
\Pr_{\mu_{\{i, j\}}}[Y_i \ne Y_j] &= \Pr_{\mu_i}[i \in S_0] + \Pr_{\mu_j}[j \in S_0] - 2\Pr_{\mu_{\{i, j\}}}[i \in S_0 \wedge j \in S_0] = \frac{1 - \left<v_i, v_j\right>}{2} = \frac{1}{4} \|v_i - v_j\|_2^2,
\end{align}
where $Y_i, Y_j$ are boolean random variables such that $i \in S_{Y_i}$ and $j \in S_{Y_j}$.

Finally, we note that the constraints for $|S_0|, |S_1| \geqs n/3$ and the triangle inequalities are those that, when written in vector forms, translate to inequalities~\eqref{eq:spread-cond} and ~\eqref{eq:triangle-inq-cond} from the ARV relaxation.

\subsection{Our Algorithms: Combining Conditioning and the ARV Theorem} \label{sec:algo-overview}

The conditioning framework initiated in~\cite{BRS11,RT12} (and subsequently used in~\cite{ABG13,YZ14,MR16}) typically proceeds as follows: solve for a solution to a degree-$d$ Sum-of-Square relaxation of the problem for a carefully chosen value of $d$, use (less than $d$) conditionings to make a solution into an ``easy-to-round'' degree-$O(1)$ solution, and finally round such a solution.

To try to apply this with the Balanced Separator problem, we first have to understand what are the ``easy-to-round'' solutions for the ARV relaxation. In this regards, first observe that, due to the more refined version of the ARV Theorem (Theorem~\ref{thm:arv-refined}), the approximation ratio is actually $O(\sqrt{\log m})$ which can be much better than $O(\sqrt{\log n})$. In particular, if $m \leqs 2^{O(r^2)}$, this already yields the desired $O(r)$-approximation algorithm. This will be one of the ``easy-to-round'' situations. Observe also that we can in fact relax the requirement even further: it suffices if $|\{j \in V \mid \|v_i - v_j\|_2^2 \leqs 0.01\}| \leqs m$ holds for a constant fraction of vertices $i \in V$. This is because we can apply Theorem~\ref{thm:arv-refined} on only the set of such $i$'s which would still result in well-separated set of size $\Omega(n)$. Recall also that from~\eqref{eq:diff-prob} the condition $\|v_i - v_j\|_2^2 \leqs 0.01$ is equivalent to $\Pr_{\mu_{\{i, j\}}}[Y_i \ne Y_j] \leqs 0.04$.

Another type of easy-to-round situation is when, for most (i.e. $0.9n$) of $i \in V$, $\Pr_{\mu_i}[i \in S_0] \notin [0.2, 0.8]$. In this latter scenario, we can simply find a pair of large well-separated sets $(T, T')$ by just letting $T = \{i \in V \mid \Pr_{\mu_i}[i \in S_0] < 0.2\}$ and $T' = \{j \in V \mid \Pr_{\mu_j}[j \in S_0] > 0.8\}$. It is not hard to argue that both $T, T'$ are at least $\Omega(n)$ and that, for every $i \in T$ and $j \in T'$, $\|v_i - v_j\|_2^2$ is at least $0.6$.

To recap, it suffices for us to condition degree-$d$ solution so that we end up in one of the following two ``easy-to-round'' cases in order to get $O(r)$ approximation algorithm for the problem.
\begin{enumerate}
\item For at least $n/100$ vertices $i \in V$, we have $|\{j \in V \mid \Pr_{\mu_{\{i, j\}}}[Y_i \ne Y_j] \leqs 0.04\}| \leqs 2^{O(r^2)}$.
\item For at least $9n/10$ vertices $i \in V$, we have $\Pr_{\mu_i}[i \in S_0] \notin [0.2, 0.8]$.
\end{enumerate}
Here we will pick our $d$ to be $n/2^{r^2}$; the running time needed to solve for such a solution is indeed $O(n/d)^{O(d)} = \exp(n/2^{O(r^2)})n^{O(1)}$ as claimed. Now, suppose that we have a degree-$d$ solution that does not belong to any of the two easy-to-round cases as stated above. This means that there must be $i \in V$ such that $\Pr_{\mu_i}[i \in S_0] \notin [0.2, 0.8]$ and that $|\{j \in V \mid \Pr_{\mu_{\{i, j\}}}[Y_i \ne Y_j] \leqs 0.04\}| > 2^{O(r^2)}$. For simplicity, let us also assume for now that $\Pr_{\mu_i}[i \in S_0] = 0.5$. We will condition on the event $i \in S_0$; let the local distributions after conditioning be $\{\mu'_U\}_{U \subseteq V, |U| \leqs d - 1}$. Consider each $j \in V$ such that $\Pr_{\mu_{\{i, j\}}}[Y_i \ne Y_j] \leqs 0.04$. Observe first that, before the conditioning, we have 
\begin{align*}
\Pr_{\mu_j}[j \in S_0] \geqs \Pr_{\mu_i}[i \in S_0] - \Pr_{\mu_\{i, j\}}[Y_i \ne Y_j] > 0.4
\end{align*}
and
\begin{align*}
\Pr_{\mu_j}[j \in S_0] \geqs \Pr_{\mu_i}[i \in S_0] + \Pr_{\mu_\{i, j\}}[Y_i \ne Y_j] < 0.6.
\end{align*}
On the other hand, after the conditioning, we have
\begin{align*}
\Pr_{\mu'_j}[j \in S_0] &= \frac{\Pr_{\mu_{\{i, j\}}}[i \in S_0, j \in S_0]}{\Pr_{\mu_i}[i \in S_0]} \\
&= \frac{\Pr_{\mu_i}[i \in S_0] - \Pr_{\mu_{\{i, j\}}}[i \in S_0, j \in S_1]}{\Pr_{\mu_i}[i \in S_0]} \\
&\geqs \frac{\Pr_{\mu_i}[i \in S_0] - \Pr_{\mu_{\{i, j\}}}[Y_i \ne Y_j]}{\Pr_{\mu_i}[i \in S_0]} \\
&\geqs 1 - 0.04/0.5 \\
&> 0.9.
\end{align*}
Thus, this conditioning makes at least $2^{r^2}$ vertices $j$'s such that $\Pr_{\mu_j}[j \in S_0] \notin [0.2, 0.8]$ beforehand satisfy $\Pr_{\mu'_j}[j \in S_0] \in [0.2, 0.8]$ afterwards. If we ignore how conditioning affects the remaining variables for now, this means that, after $n/2^{r^2}$ such conditioning all vertices $j \in V$ must have $\Pr_{\mu_j}[j \in S_0] \in [0.2, 0.8]$. Hence, we have arrived at an ``easy-to-round'' solution and we are done! The effect to the other variables that we ignored can easily be taken into account via a simple potential function argument and by considering conditioning on both $i \in S_0$ and $i \in S_1$; this part of the argument can be found in Section~\ref{sec:condition-lemma}. This concludes the overview of our algorithm.


\section{Preliminaries} \label{sec:prelim}

\subsection{Sum-of-Square Hierarchy, Pseudo-Distribution, and Conditioning}

We define several notations regarding the Sum-of-Square (SoS) Hierarchy; these notations are based mainly on~\cite{BBHKSZ12,OZ13}. We will only state preliminaries necessary for our algorithms. We recommend interested readers to refer to~\cite{OZ13,BS14} for a more thorough survey on SoS.

We use $\mathbb{R}_d[X_1, \dots, X_n]$ to denote the set of all polynomials on $X_1, \dots, X_n$ of total degree at most $d$. First, we define the notion of \emph{pseudo-expectation}, which represents solutions to SoS Hierarchy:

\begin{definition}[Pseudo-Expectation]
A \emph{degree-$d$ pseudo-expectation} is a linear operator $\pE: \mathbb{R}_d[X_1, \dots, X_n] \to \mathbb{R}$ that satisfies the following:
\begin{itemize}
\item (Normalization) $\pE[1] = 1$.
\item (Linearity) For any $p \in \mathbb{R}_d[X_1, \dots, X_n]$ and $q \in \mathbb{R}_d[X_1, \dots, X_n]$, $\pE[p + q] = \pE[p] + \pE[q]$.
\item (Positivity) For any $p \in \mathbb{R}_{\lfloor d/2\rfloor}[X_1, \dots, X_n]$, $\pE[p^2] \geqs 0$.
\end{itemize}
Furthermore, $\pE$ is said to be \emph{boolean} if $\pE[(X_i^2 - 1)p] = 0$ for all $p \in \mathbb{R}_{d - 2}[X_1, \dots, X_n]$.
\end{definition}

Observe that, while $\pE$ is a function over infinite domain, $\pE$ has a succinct representation: due to its linearity, it suffices to specify the values of all monomials of total degree at most $d$ and there are only $n^{O(d)}$ such monomials. Furthermore, for boolean $\pE$, we can save even further since it suffices to specify only products of at most $d$ different variables. There are only $O(n/d)^{O(d)}$ such terms. From now on, we will only consider boolean pseudo-expectations. Note also that we use $X_i$ as $\pm 1$ variables instead of $0, 1$ variable as used in the proof overview. (Specifically, in the language of the proof overview section, $\Pr_{\mu_i}[i \in S_0]$ is now equal to $\pE[(1 - X_i)/2]$.)

\begin{definition}
A \emph{system of polynomial constraints} $(\cP, \cQ)$ consists of the set of equality constraints $\cP = \{p_i = 0\}_{i \in |\cP|}$ and the set of inequality constraints $\cQ = \{q_j \geqs 0\}_{j \in |\cQ|}$, where all $p_i$ and $q_j$ are polynomials over $X_1, \dots, X_n$. We denote the degree of $(\cP, \cQ)$ by $\deg(\cP, \cQ) := \max\{\deg(p_i), \deg(q_j)\}_{i \in |\cP|, j \in |\cQ|}$ where $\deg(p)$ denote the (total) degree of polynomial $p$.

For every $S \subseteq [n]$, we use $X_S$ to denote the monomial $\prod_{i \in S}X_i$. Furthermore, for every $S \subseteq [n]$ and every $\phi: S \to \{\pm 1\}$, let $X_{\phi}$ be the polynomial $\prod_{i \in S} \left(1 + \phi(i)X_i\right)$.
A boolean degree-$d$ pseudo-expectation $\pE: \mathbb{R}_d[X_1, \dots, X_n] \to \mathbb{R}$ is said to \emph{satisfy} a system of polynomial constraints $(\cP, \cQ)$ if the following conditions hold:
\begin{itemize}
\item For all $p \in \cP$ and all $S \subseteq [n]$ such that $|S| \leqs d - \deg(p)$, we have $\pE[X_S p] = 0$.
\item For all $q \in \cQ$, all $S \subseteq [n]$ such that $|S| \leqs d - \deg(q)$ and all $\phi: S \to \{\pm 1\}$, we have $\pE[X_{\phi} q] \geqs 0$.
\end{itemize}
\end{definition}

Note that there are only $O(n/d)^{O(d)}$ equalities and inequalities generated above; indeed all degree-$d$ SoS relaxations considered in our work can be solved in time $O(n/d)^{O(d)}$ since it can be expressed as a semidefinite program\footnote{It has been recently pointed out by O'Donnell~\cite{O17} that the fact that SoS can be written as small SDP is not sufficient to conclude the bound on the running time. However, this is not an issue for us since we are working with the primal solutions (as opposed to sum-of-square certificates) and we can tolerate small errors in each of the equalities and inequalities. In particular, the ellipsoid algorithm can find, in time polynomial of the size of the program, a solution where the error in each inequality is at most say $2^{-n^{100}}$, and this suffices for all of our algorithms.} of size $O(n/d)^{O(d)}$.

\begin{definition}[Conditioning] \label{def:cond}
Let $\pE: \mathbb{R}_d[X_1, \dots, X_n] \to \mathbb{R}$ be any boolean degree-$d$ pseudo-expectation for some $d > 2$. For any $b \in \{\pm 1\}$ such that $\pE[X_i] \ne -b$, we denote the conditional pseudo-expectation of $\pE$ on $X_i = b$ by $\pE|_{X_i = b}: \mathbb{R}_{d - 1}[X_1, \dots, X_n] \to \mathbb{R}$ where
\begin{align*}
\pE|_{X_i = b}[p] = \frac{\pE[p(1 + bX_i)]}{\pE[1 + bX_i]}
\end{align*}
for all $p \in \mathbb{R}_{d - 1}[X_1, \dots, X_n]$.
\end{definition}

The proposition below is simple to check, using the identity $(1 + bX_i) = \frac{1}{2}(1 + bX_i)^2$.

\begin{proposition}
Let $\pE, b, \pE|_{X_i = b}$ be as in Definition~\ref{def:cond}. If $\pE$ satisfies a system of polynomial constraints $(\cP, \cQ)$, then $\pE|_{X_i = b}$ also satisfies the system $(\cP, \cQ)$.
\end{proposition}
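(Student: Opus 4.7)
The plan is to verify, in sequence, that the denominator $\pE[1+bX_i]$ is strictly positive, that $\pE|_{X_i=b}$ is a genuine boolean degree-$(d-1)$ pseudo-expectation, and that the constraints in $(\cP,\cQ)$ are preserved. The main computational tool is the hinted identity $(1+bX_i) = \tfrac{1}{2}(1+bX_i)^2$, which holds modulo $X_i^2 - 1$ (i.e., under any boolean $\pE$ of adequate degree), together with the companion identities $(1+bX_i)^2 = 2(1+bX_i) + (X_i^2-1)$ and $(1-bX_i)(1+bX_i) = -(X_i^2-1)$.

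Positivity of the denominator is $\pE[1+bX_i] = \tfrac{1}{2}\pE[(1+bX_i)^2] \geqs 0$ by the identity and the positivity of $\pE$, while the hypothesis $\pE[X_i] \ne -b$ rules out the zero case. Linearity and normalization of $\pE|_{X_i=b}$ are immediate from the definition. Its booleanness follows because $(X_j^2-1) r (1+bX_i)$ is $(X_j^2-1)$ times a polynomial of degree at most $d-2$, hence has zero $\pE$-value. Its positivity follows from the identity, which rewrites $p^2(1+bX_i) \equiv \tfrac{1}{2}(p(1+bX_i))^2$ modulo $X_i^2 - 1$; booleanness of $\pE$ and positivity of $\pE$ on this square imply $\pE[p^2(1+bX_i)] \geqs 0$.

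For the equality constraints, fix $p \in \cP$ and $S \subseteq [n]$ with $|S| \leqs (d-1) - \deg(p)$. Expand $\pE[X_S p(1+bX_i)] = \pE[X_S p] + b\,\pE[X_S X_i p]$. The first summand vanishes by the hypothesis on $\pE$ since $|S| \leqs d-\deg(p)$. For the second, split on whether $i \in S$: if $i \notin S$ then $X_S X_i = X_{S \cup \{i\}}$ with $|S \cup \{i\}| \leqs d - \deg(p)$; if $i \in S$, booleanness of $\pE$ lets us replace $X_i^2$ by $1$, reducing the expression to $\pE[X_{S \setminus \{i\}} p]$ with $|S \setminus \{i\}| \leqs d - \deg(p)$. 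Either way the summand is $0$. For the inequality constraints, fix $q \in \cQ$, $S \subseteq [n]$ with $|S| \leqs (d-1)-\deg(q)$, and $\phi : S \to \{\pm 1\}$. Analyze $X_\phi(1+bX_i)$ by cases: if $i \notin S$, then $X_\phi(1+bX_i) = X_{\phi'}$ for the extension $\phi' : S \cup \{i\} \to \{\pm 1\}$ with $\phi'(i) = b$, so $\pE[X_{\phi'} q] \geqs 0$ by hypothesis; if $i \in S$ and $\phi(i) = b$, then $X_\phi(1+bX_i) = 2 X_\phi + (X_i^2 - 1)\prod_{j \in S \setminus \{i\}}(1 + \phi(j)X_j)$, and the $(X_i^2-1)$ piece is annihilated by booleanness, leaving $2 \pE[X_\phi q] \geqs 0$; if $i \in S$ and $\phi(i) = -b$, then $X_\phi(1+bX_i) = -(X_i^2 - 1) \prod_{j \in S \setminus \{i\}}(1 + \phi(j)X_j)$, which has $\pE$-value $0$.

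The one place to be careful is degree accounting: every use of booleanness requires that the polynomial multiplying $(X_i^2 - 1)$ or $(X_j^2 - 1)$ has degree at most $d - 2$. In each case above, this follows from the hypothesis $|S| \leqs (d-1) - \deg(\cdot)$, since after dropping one factor $(1 + \phi(j)X_j)$ (which costs one degree) and multiplying by the relevant constraint polynomial, the total degree is at most $(|S| - 1) + \deg(\cdot) \leqs d - 2$. Combining the nonnegativity (or vanishing) of each numerator with the positivity of the denominator $\pE[1+bX_i]$ then yields the proposition.
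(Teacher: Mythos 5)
Your proof is correct and follows exactly the route the paper intends: the paper leaves this proposition as ``simple to check'' with only the hint $(1+bX_i)=\tfrac{1}{2}(1+bX_i)^2$ (modulo booleanness), and your argument is a full write-out of that check, with the right case analysis on whether $i\in S$ and on the sign $\phi(i)$, and with correct degree bookkeeping for the uses of booleanness.
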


\subsection{ARV Structural Theorems}

Having defined appropriate notations for SoS, we now move on to another crucial preliminary: the ARV Structural Theorem. It will be useful to state the theorem both in terms of metrics and in terms of pseudo-expectation. Let us start by definitions of several notations for metrics.

\begin{definition}[Metric-Related Notations] \label{def:metric}
A \emph{metric} $d$ on $X$ is a distance function $d: X \times X \to \mathbb{R}_{\geqs 0}$ that satisfies\footnote{Here we do not require ``identity of indiscernibles
'' (i.e. $d(x, y) = 0$ if and only if $x = y$), which is sometimes an axiom for metrics in literature. Without such a requirement, $d$ is sometimes referred to as a \emph{pseudometric}.} (1) $d(x, x) = 0$, (2) symmetry $d(x, y) = d(y, x)$ and (3) triangle inequality $d(x, z) \leqs d(x, y) + d(y, z)$, for all $x, y, z \in X$. We use the following notations throughout this work:
\begin{itemize}
\item For $x \in X$ and $S, T \subseteq X$, $d(x, S) := \min_{y \in S} d(x, y)$ and $d(S, T) := \min_{y \in S} d(y, T)$. 
\item We say that $S, T$ are \emph{$\Delta$-separated} iff $d(S, T) \geqs \Delta$. 
\item The \emph{diameter} of a metric space $(X, d)$ denoted by $\diam(X, d)$ is $\max_{x, y \in X} d(x, y)$.
\item We say that $(X, d)$ is $\alpha$-spread if $\sum_{x, y \in X} d(x, y) \geqs \alpha|X|^2$.
\item An \emph{(open) ball of radius $r$ around $x$} denoted by $\cB_d(x, r)$ is defined as $\{y \in X \mid d(x, y) < r\}$. 
\item A metric space $(X, d)$ is said to be \emph{($r$, $m$)-hollow} if $|\cB_d(x, r)| \leqs m$ for all $x \in X$.
\end{itemize}
\end{definition}

\begin{definition}[Negative Type Metric]
A metric space $(X, d)$ is said to be of \emph{negative type} if $\sqrt{d}$ is Euclidean. That is, there exists $f: X \to \mathbb{R}^q$ such that $\|f(x) - f(y)\|_2^2 = d(x, y)$ for all $x, y \in X$.
\end{definition}

The ARV Theorem states that, in any negative type metric space $(X, d)$ that is $\Omega(\diam(d))$-spread and $(\Omega(\diam(d)), m)$-hollow, there exists two large subsets that are $\Omega\left(\frac{\diam(d)}{\sqrt{\log m}}\right)$-separated:

\begin{theorem}[ARV Structural Theorem - Metric Formulation~\cite{ARV09,Lee05}] \label{thm:arv-metric}
Let $\alpha, r > 0$ be any positive real number and $m \in \mathbb{N}$ be any positive integer. For any negative type metric space $(X, d)$ with $\diam(d) \leqs 1$ that is $\alpha$-spread and $(r, m)$-hollow, there exist disjoints subsets $T, T' \subseteq X$ each of size $\Omega_{\alpha, r}(|X|)$ such that $d(T, T') \geqs \Omega_{\alpha, r}(1/\sqrt{\log m})$. Moreover, these sets can be found in randomized polynomial time.
\end{theorem}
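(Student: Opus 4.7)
The strategy is to adapt the classical ARV/Lee projection-and-matching argument, replacing the factor of $\log|X|$ that appears in the final ball-counting step by $\log m$. To begin, I would use the negative-type hypothesis to obtain $f : X \to \mathbb{R}^N$ with $\|f(x) - f(y)\|_2^2 = d(x,y)$, and project against a standard Gaussian vector $g$: let $u(x) := \langle g, f(x)\rangle$, so that $u(x) - u(y)$ is centered Gaussian with variance $d(x,y) \leqs 1$. The $\alpha$-spread hypothesis gives $\mathbb{E}_{x,y}[d(x,y)] \geqs \alpha$, so with constant probability over $g$ there is a threshold $t = t(g)$ for which
\begin{align*}
T_0 = \{x : u(x) \leqs t - \sigma\}, \qquad T_0' = \{x : u(x) \geqs t + \sigma\}
\end{align*}
each have size $\Omega_\alpha(|X|)$, for some $\sigma = \Omega(\sqrt{\alpha})$. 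Call such $g$ \emph{good}; the goal becomes to show $d(T_0,T_0') \geqs \Omega_{\alpha,r}(1/\sqrt{\log m})$ for every good $g$.

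Next, I would argue by contradiction: assume that for a good $g$ one has $d(T_0, T_0') < \Delta$ for a parameter $\Delta$ to be fixed. By a Hall-type matching argument as in ARV, there exists a matching $M_1$ of size $\Omega_\alpha(|X|)$ between subsets of $T_0$ and $T_0'$ whose edges have $d$-length less than $\Delta$. Iterate this construction: at step $k$, produce matchings $M_1, \dots, M_k$ and subsets $T_k \subseteq T_0$, $T_k' \subseteq T_0'$ of size still $\Omega_\alpha(|X|)$ (shrinking by a constant factor per iteration) so that the composition $M_k \circ \cdots \circ M_1$ sends $T_k$ into $T_k'$, with each individual edge of $d$-length less than $\Delta$. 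By the triangle inequality, the endpoint of such a length-$k$ walk starting from $x \in T_k$ lies in $\cB_d(x, k\Delta)$; as long as $k\Delta \leqs r$, the $(r,m)$-hollow hypothesis bounds the number of distinct length-$k$ walks out of $x$ by $m^k$. This is precisely the place where $\log m$ will replace $\log|X|$ from the classical proof.

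On the projection side, each matching edge $(y, z)$ satisfies $u(y) - u(z) \sim \mathcal{N}(0, d(y,z))$ with $d(y,z) < \Delta$, so the total $u$-displacement of a fixed length-$k$ walk is Gaussian with variance less than $k\Delta$; yet by construction this displacement must exceed $2\sigma$, an event of probability at most $\exp(-\Omega(\sigma^2/(k\Delta)))$. Union-bounding over the at most $m^k$ walks out of each $x$ (and averaging over good $g$), one obtains a contradiction unless
\begin{align*}
k \log m \;\gtrsim\; \sigma^2/(k\Delta),
\end{align*}
i.e.\ $\Delta \gtrsim \sigma^2/(k^2 \log m)$. Choosing $k \asymp r/\Delta$ (the maximum compatible with the hollowness constraint) and solving for $\Delta$ produces $\Delta = \Omega_{\alpha, r}(1/\sqrt{\log m})$, contradicting the assumption. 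For the algorithmic claim, sampling $g$ and forming $T_0, T_0'$ is randomized polynomial time, and good $g$ occur with constant probability, so a few repetitions suffice. The main obstacle will be the chaining step: ensuring that after $k$ matching compositions the intermediate subsets remain of size $\Omega_\alpha(|X|)$, while the walk-count bound $m^k$ stays valid (which requires $k\Delta \leqs r$ throughout), is the delicate bookkeeping that must be balanced against the Gaussian tail estimate to extract the final $\sqrt{\log m}$ factor rather than $\sqrt{\log |X|}$.
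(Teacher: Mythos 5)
There is a genuine quantitative gap in your chaining step, and it is worth noting up front that the paper itself does not reprove this theorem: it invokes Lee's argument \cite{Lee05} as a black box and only indicates where hollowness enters (when Lee's procedure fails, the inductive hypothesis of his Lemma~4.2 produces a covered set $S$, and his Lemma~4.1 then forces $|\cB_d(x,r)| > m$ for $x \in S$, contradicting $(r,m)$-hollowness). Your attempt to re-derive the whole statement has the right skeleton --- Gaussian projection, matching, chaining, and a union bound confined to a ball so that $\log m$ replaces $\log|X|$ --- but the accounting at its core does not produce $1/\sqrt{\log m}$.

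The problem is that you only require the total projection displacement of a length-$k$ walk to exceed $2\sigma$ (the fixed gap between $T_0$ and $T_0'$), while its Gaussian variance is $k\Delta$. The resulting tail bound $\exp(-\Omega(\sigma^2/(k\Delta)))$ \emph{weakens} as $k$ grows, and your union bound $m^k$ simultaneously worsens, so chaining is counterproductive: the optimal choice in your framework is $k=1$, which yields only $\Delta = \Omega(\sigma^2/\log m)$. Your own algebra confirms the breakdown: the contradiction condition is $\Delta \leqs c\sigma^2/(k^2\log m)$, and substituting $k \asymp r/\Delta$ gives $\Delta \geqs r^2\log m/(c\sigma^2)$, which is unsatisfiable since $\Delta \leqs \diam(d) \leqs 1$; it does not ``solve to'' $\Delta = \Omega(1/\sqrt{\log m})$. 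In the actual ARV/Lee argument the displacement accumulates \emph{linearly} in $k$: each matching edge is stretched by $\geqs \sigma$ in a consistent direction, so after $k$ steps the displacement is $\geqs k\sigma$ while the $d$-distance is $\leqs k\Delta$ (hence Euclidean distance $\leqs \sqrt{k\Delta}$), giving a tail bound $\exp(-\Omega(k\sigma^2/\Delta))$ that \emph{improves} with $k$. Combined with a union bound over the $\leqs m$ \emph{endpoints} in $\cB_d(x,k\Delta)$ --- not over $m^k$ walks, which would re-introduce a factor $k$ in the exponent and again cap you at $1/\log m$ --- one needs $k\sigma^2/\Delta \gg \log m$, i.e.\ $\Delta \ll k\sigma^2/\log m$, and now taking $k$ as large as $k\Delta \leqs r$ permits gives $k \asymp \sqrt{r\log m}/\sigma$ and $\Delta = \Omega(\sigma\sqrt{r/\log m})$. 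Separately, your composition $M_k \circ \cdots \circ M_1$ sending $T_k \subseteq T_0$ into $T_k' \subseteq T_0'$ does not type-check (each $M_i$ already maps between subsets of $T_0$ and $T_0'$); the object that actually supports the induction is Lee's notion of points being covered at geometrically growing scales, which is exactly the bookkeeping you defer.
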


We remark that the quantitative bound $\Delta = \Omega_{\alpha, \beta}(1/\sqrt{\log m})$ comes from Lee's version of the theorem~\cite{Lee05} whereas the original version only have $\Delta = \Omega_{\alpha, \beta}(1/(\log m)^{2/3})$. We also note that even Lee's version of the theorem is not stated exactly in the above form; in particular, he only states the theorem with $m = |X|$, for which the Hollowness condition is trivial. We will neither retread his whole argument nor define all notations from his work here, but we would like to point out that it is simple to see that his proof implies the version that we use as well. Specifically, the inductive hypothesis in the proof of Lemma 4.2 of~\cite{Lee05} implies that when the procedure fails (with constant probability) to find $T, T'$ that are separated by $\Delta = C/\sqrt{\log m}$ where $C = C(\alpha, r)$ is sufficiently large, then there exists $S \subseteq X$ that is $(100\sqrt{\log m/r}, 0.1, \sqrt{2r})$-covered by $X$. Lemma 4.1 of~\cite{Lee05} then implies that, for each $x \in S$, we must have $|\cB(x, r)| > m$.

As we are using the ARV Theorem in conjunction with the SoS conditioning framework, it is useful to also state the theorem in SoS-based notations. To do so, let us first state the following fact, which can be easily seen via the fact that the moment matrix (with $(i, j)$-entry equal to $\pE[X_iX_j]$) is positive semidefinite and thus is a Gram matrix for some set of vectors:

\begin{proposition} \label{prop:pE-metric}
Let $\pE: \mathbb{R}_2[X_1, \dots, X_n] \to \mathbb{R}$ be any degree-2 pseudo-expectation that satisfies the triangle inequality $\pE[(X_i - X_j)^2] \leqs \pE[(X_i - X_k)^2] + \pE[(X_k - X_j)^2]$ for all $i, j, k \in [n]$. Define $d_{\pE}: [n] \times [n] \to \mathbb{R}_{\geqs 0}$ by $d_{\pE}(i, j) = \pE[(X_i - X_j)^2]$. Then, $([n], d_{\pE})$ is a negative type metric space.
\end{proposition}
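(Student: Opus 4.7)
The plan is to verify the metric axioms directly from the definition of pseudo-expectation, and then extract the required Euclidean embedding from the moment matrix of $\pE$ via a standard Gram-matrix factorization.

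First I would check the four metric axioms for $d_{\pE}(i,j) = \pE[(X_i - X_j)^2]$. Nonnegativity is immediate from the positivity axiom applied to the square polynomial $(X_i - X_j)^2$, which has degree $2 = d$, so positivity applies (with $\lfloor d/2 \rfloor = 1$). Reflexivity $d_{\pE}(i,i) = 0$ follows from linearity since $(X_i - X_i)^2 = 0$, combined with normalization giving $\pE[0] = 0$. Symmetry is obvious since $(X_i - X_j)^2 = (X_j - X_i)^2$. The triangle inequality is one of the hypotheses of the proposition.

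Next I would produce the negative-type embedding. Define the $n \times n$ moment matrix $M$ by $M_{ij} := \pE[X_i X_j]$. By linearity and positivity of $\pE$, for any $c \in \mathbb{R}^n$,
\begin{align*}
c^\top M c = \sum_{i,j} c_i c_j \pE[X_i X_j] = \pE\!\left[\left(\sum_i c_i X_i\right)^{\!2}\right] \geqs 0,
\end{align*}
so $M$ is positive semidefinite. Writing $M = V^\top V$ for some matrix $V$ and taking $v_i$ to be the $i$-th column of $V$ yields $\langle v_i, v_j \rangle = \pE[X_i X_j]$, so
\begin{align*}
\|v_i - v_j\|_2^2 = \pE[X_i^2] - 2\pE[X_i X_j] + \pE[X_j^2] = \pE[(X_i - X_j)^2] = d_{\pE}(i, j).
\end{align*}
Taking $f(i) := v_i$ witnesses that $\sqrt{d_{\pE}}$ is Euclidean, establishing the negative-type property.

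There is no real obstacle here: the only conceptual point is that positivity of $\pE$ on degree-$1$ squares is exactly the PSD-ness of the moment matrix, which is the standard bridge between the pseudo-expectation and vector formulations of SoS. Every other step is either an axiom of $\pE$ or a hypothesis of the proposition. I would write the argument as two short paragraphs in the paper: one for the metric axioms and one for the Gram-matrix factorization.
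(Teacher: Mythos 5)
Your proof is correct and follows exactly the route the paper intends: positivity of $\pE$ on squares of linear polynomials gives PSD-ness of the moment matrix $(\pE[X_iX_j])_{i,j}$, and the Gram factorization yields vectors witnessing that $\sqrt{d_{\pE}}$ is Euclidean (the paper states this as the one-line justification preceding the proposition). The additional verification of the metric axioms is a harmless elaboration of what the paper leaves implicit.
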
 

When it is clear which pseudo-expectation we are referring to, we may drop the subscript from $d_{\pE}$ and simply write $d$. Further, we use all metric terminologies with $\pE$ in the natural manner; for instance, we say that $S, T \subseteq [n]$ are $\Delta$-separated if $d_{\pE}(S, T) \geqs \Delta$.

Theorem~\ref{thm:arv-metric} can now be restated in pseudo-expectation notations as follows.

\begin{theorem}[ARV Structural Theorem - SoS Formulation~\cite{ARV09,Lee05}] \label{thm:arv-sos}
For any $\alpha, \beta > 0$ and $m \in \N$, let $\pE: \mathbb{R}_2[X_1, \dots, X_n] \to \mathbb{R}$ be any degree-2 pseudo-expectation such that the following conditions hold:
\begin{itemize}
\item (Boolean) For every $i \in [n]$, $\pE[X_i^2] = 1$. 
\item (Triangle Inequality) For every $i, j, k \in [n]$, $\pE[(X_i - X_j)^2] \leqs \pE[(X_i - X_k)^2] + \pE[(X_k - X_j)^2]$. 
\item (Balance) $\sum_{i, j \in [n]} \pE[(X_i - X_j)^2] \geqs \alpha n^2.$
\item (Hollowness) For all $i \in [n]$, $|\{j \in [n] \mid \pE[X_iX_j] > 1 - \beta\}| \leqs m$.
\end{itemize}
Then, there exists a randomized polynomial time algorithm that, with probability 2/3, produces disjoint subsets $T, T' \subseteq [n]$ each of size at least $\Omega_{\alpha, \beta}(n)$ such that $T, T'$ are $\Delta$-separated for $\Delta = \Omega_{\alpha, \beta}(1/\sqrt{\log m})$.
\end{theorem}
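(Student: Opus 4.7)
The plan is to reduce Theorem~\ref{thm:arv-sos} directly to the already stated metric formulation (Theorem~\ref{thm:arv-metric}) by turning the pseudo-expectation $\pE$ into a negative-type metric on $[n]$ and translating each SoS hypothesis into the corresponding metric hypothesis.

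First, by Proposition~\ref{prop:pE-metric}, the Triangle Inequality hypothesis guarantees that $d_{\pE}(i,j) := \pE[(X_i - X_j)^2]$ is a negative-type metric on $[n]$. The Boolean hypothesis $\pE[X_i^2] = 1$ lets me expand $d_{\pE}(i,j) = 2 - 2\pE[X_iX_j]$, which has two immediate consequences. The pseudo-expectation Cauchy--Schwarz (which follows from positivity applied to $(X_i \pm X_j)^2$, together with $\pE[X_i^2] = \pE[X_j^2] = 1$) gives $|\pE[X_iX_j]| \leq 1$, hence $\diam(d_{\pE}) \leq 4$. More importantly, the condition $\pE[X_iX_j] > 1 - \beta$ is identical to $d_{\pE}(i,j) < 2\beta$, so the Hollowness hypothesis becomes $|\cB_{d_{\pE}}(i, 2\beta)| \leq m$ for every $i$, i.e.\ the metric space $([n], d_{\pE})$ is $(2\beta, m)$-hollow. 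The Balance hypothesis directly reads $\sum_{i, j \in [n]} d_{\pE}(i,j) \geqs \alpha n^2$, so $([n], d_{\pE})$ is $\alpha$-spread.

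To fit the normalization of Theorem~\ref{thm:arv-metric}, which demands diameter at most $1$, I would just rescale the metric by a factor of $1/4$, producing a negative-type metric of diameter at most $1$ that is $(\beta/2, m)$-hollow and $(\alpha/4)$-spread. Since these constants depend only on $\alpha$ and $\beta$, they can be absorbed into any $\Omega_{\alpha, \beta}(\cdot)$. Plugging this metric into Theorem~\ref{thm:arv-metric} yields, in randomized polynomial time, disjoint $T, T' \subseteq [n]$ of size $\Omega_{\alpha, \beta}(n)$ that are $\Omega_{\alpha, \beta}(1/\sqrt{\log m})$-separated in the rescaled metric, hence also in $d_{\pE}$ after multiplying the separation by $4$. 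I do not anticipate any real obstacle; the only thing I would verify carefully is the Cauchy--Schwarz bound $|\pE[X_iX_j]| \leq 1$, since it is the one step where the pseudo-expectation axioms (as opposed to a genuine expectation) need to be invoked, and it is standard.
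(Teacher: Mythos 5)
Your reduction is correct and is essentially the paper's own justification: the paper derives Theorem~\ref{thm:arv-sos} from Theorem~\ref{thm:arv-metric} by invoking Proposition~\ref{prop:pE-metric}, observing that $\pE[X_iX_j] > 1-\beta$ is exactly $d_{\pE}(i,j) < 2\beta$ so Hollowness becomes $(2\beta,m)$-hollowness (and Balance becomes $\alpha$-spread), and rescaling the diameter-$\le 4$ metric down by a factor of $4$. Your extra verification of $|\pE[X_iX_j]|\le 1$ via positivity applied to $(X_i\pm X_j)^2$ is the standard step the paper leaves implicit and is fine.
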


Notice that, for boolean $\pE$, $\pE[X_iX_j] = 1 - \pE[(X_i - X_j)^2]/2 = 1 - d_{\pE}(i, j)/2$. This means that $\{j \in [n] \mid \pE[X_iX_j] > 1 - \beta\}$ is simply $\cB_{d_{\pE}}(i, 2\beta)$. Another point to notice is that the metric $d_{\pE}$ can have $\diam(d_{\pE})$ as large as 4, instead of 1 required in Theorem~\ref{thm:arv-metric}, but this poses no issue since we can scale all distances down by a factor of 4.

We also need a slight variant of the theorem that does not require the balanceness constraint; such variant appears in~\cite{Kar09,ACMM05}. It is proved via the ``antipodal trick'' where, for every $i \in [n]$, one also add an additional variable $X_{-i}$ and add the constraint $\pE[X_i + X_{-i}] = 0$ to the system. Applying the above lemma together with an observation that the procedure to creates a set from~\cite{ARV09} can be modified so that $i \in T$ iff $-i \in T'$ gives the following:

\begin{corollary}[ARV Structural Theorem for Antipodal Vectors~\cite{Kar09}] \label{cor:arv-antipodal}
Let $\pE: \mathbb{R}_2[X_1, \dots, X_n] \to \mathbb{R}$ be any degree-2 pseudo-expectation that satisfies the following conditions for any $\beta > 0$ and $m \in \N$:
\begin{itemize}
\item (Boolean) For every $i \in [n]$, $\pE[X_i^2] = 1$. 
\item (Triangle Inequality) For every $i, j, k \in [n]$, 
\begin{align*}
\pE[(X_i - X_j)^2] \leqs \pE[(X_i - X_k)^2] + \pE[(X_k - X_j)^2], \\
\pE[(X_i - X_j)^2] \leqs \pE[(X_i + X_k)^2] + \pE[(X_k + X_j)^2], \\
\pE[(X_i + X_j)^2] \leqs \pE[(X_i - X_k)^2] + \pE[(X_k + X_j)^2], \\
\pE[(X_i + X_j)^2] \leqs \pE[(X_i + X_k)^2] + \pE[(X_k - X_j)^2].
\end{align*}
\item (Hollowness) For all $i \in [n]$, $|\{j \in [n] \mid |\pE[X_iX_j]| > 1 - \beta\}| \leqs m$.
\end{itemize}
Then, there exists a randomized polynomial time algorithm that, with probability 2/3, produces disjoint subsets $T, T' \subseteq [n]$ such that $|T| + |T'| \geqs \Omega_{\beta}(n)$ and, for every $i, i' \in T$ and $j, j' \in T'$, we have $\pE[(X_i - X_j)^2], \pE[(X_i + X_{i'})^2], \pE[(X_j + X_{j'})^2] \geqs \Omega_{\beta}(1/\sqrt{\log m})$.
\end{corollary}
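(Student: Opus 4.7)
The plan is to reduce the corollary to Theorem~\ref{thm:arv-sos} via the antipodal trick. I double the index set to $\pm[n] := \{-n, \dots, -1, 1, \dots, n\}$ and extend $\pE$ to a boolean degree-$2$ pseudo-expectation $\widetilde{\pE}: \mathbb{R}_2[X_{\pm 1}, \dots, X_{\pm n}] \to \mathbb{R}$ by setting
\[
\widetilde{\pE}\bigl[p(X_{\pm 1}, \dots, X_{\pm n})\bigr] \;:=\; \pE\bigl[p(X_1, \dots, X_n, -X_1, \dots, -X_n)\bigr]
\]
for every $p \in \mathbb{R}_2[X_{\pm 1}, \dots, X_{\pm n}]$. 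Linearity, normalization, positivity, and booleanness of $\widetilde{\pE}$ follow immediately from those of $\pE$, and the substitution hard-wires $\widetilde{\pE}[X_i + X_{-i}] = 0$ for every $i \in [n]$.

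Next, I verify the four hypotheses of Theorem~\ref{thm:arv-sos} for $\widetilde{\pE}$ on $\pm[n]$. Booleanness is immediate. For the triangle inequality, given any $a, b, c \in \pm[n]$, substituting $X_{-i} \mapsto -X_i$ rewrites $\widetilde{\pE}[(X_a - X_b)^2] \leqs \widetilde{\pE}[(X_a - X_c)^2] + \widetilde{\pE}[(X_c - X_b)^2]$ as exactly one of the four inequalities stated in the hypothesis, the choice being determined by the signs of $a, b, c$. For balance, the substitution maps $\sum_{a \in \pm[n]} X_a$ to the zero polynomial, hence $\widetilde{\pE}\bigl[\bigl(\sum_a X_a\bigr)^2\bigr] = 0$; combined with $\widetilde{\pE}[X_a^2] = 1$, this gives $\sum_{a, b \in \pm[n]} \widetilde{\pE}[(X_a - X_b)^2] = 2(2n)^2$, so the extended metric is $2$-spread. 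For hollowness, $\widetilde{\pE}[X_i X_j] = \pE[X_i X_j]$ and $\widetilde{\pE}[X_i X_{-j}] = -\pE[X_i X_j]$, so the number of $b \in \pm[n]$ with $\widetilde{\pE}[X_i X_b] > 1 - \beta$ is exactly $|\{j \in [n] : |\pE[X_i X_j]| > 1 - \beta\}| \leqs m$.

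Applying Theorem~\ref{thm:arv-sos} to $\widetilde{\pE}$ yields disjoint $T_\ast, T'_\ast \subseteq \pm[n]$, each of size $\Omega_\beta(n)$, with $d_{\widetilde{\pE}}(T_\ast, T'_\ast) \geqs \Delta := \Omega_\beta(1/\sqrt{\log m})$. The one place I need to look inside the proof is the following antipodal-symmetry observation: since $\widetilde{\pE}[X_i X_{-i}] = -1$, the Gram vectors $\{v_a\}_{a \in \pm[n]}$ of $\widetilde{\pE}$ satisfy $v_{-i} = -v_i$, and the ARV rounding — which picks $T_\ast$ and $T'_\ast$ from a random projection $\langle v_a, g \rangle$ with threshold cutoffs — can be taken so that $\langle v_a, g\rangle$ being extreme positive coincides with $\langle v_{-a}, g \rangle$ being extreme negative, giving $T'_\ast = -T_\ast$. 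This is a small inspection of the rounding rather than a new estimate, and is the main (and only) obstacle to treating Theorem~\ref{thm:arv-sos} as a pure black box.

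Finally, setting $T := T_\ast \cap [n]$ and $T' := T'_\ast \cap [n]$ gives disjoint subsets of $[n]$ with $|T| + |T'| = |T_\ast| \geqs \Omega_\beta(n)$ by the antipodal symmetry. For $i \in T$ and $j \in T'$, we have $\pE[(X_i - X_j)^2] = d_{\widetilde{\pE}}(i, j) \geqs \Delta$. For $i, i' \in T$, the symmetry gives $-i' \in T'_\ast$, so $\pE[(X_i + X_{i'})^2] = \widetilde{\pE}[(X_i - X_{-i'})^2] = d_{\widetilde{\pE}}(i, -i') \geqs \Delta$; the bound for $j, j' \in T'$ is symmetric. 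All three separation claims follow, completing the plan.
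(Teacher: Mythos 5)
Your proposal is correct and follows essentially the same route as the paper, which proves the corollary by exactly this antipodal trick (adding variables $X_{-i}$ with $\pE[X_i + X_{-i}] = 0$, applying Theorem~\ref{thm:arv-sos}, and observing that the ARV rounding can be made to satisfy $i \in T_\ast$ iff $-i \in T'_\ast$); your write-up just fills in the verification of the four hypotheses more explicitly. The one step you flag as requiring a look inside the rounding --- making the procedure antipodally symmetric, including its deletion/correction phase, so that $T'_\ast = -T_\ast$ --- is precisely the step the paper also delegates to an ``observation'' about the \cite{ARV09} procedure, so your treatment matches the paper's level of rigor there.
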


\subsection{The Problems}

The following are the list of problems we consider in this work.

{\bf Vertex Cover.} A subset $S \subseteq V$ of vertices is said to be a vertex cover of $G = (V, E)$ if, for every edge $\{u, v\} \in E$, $S$ contains at least one of $u$ or $v$. The goal of the vertex cover problem is to find a vertex cover of minimum size.

{\bf Sparsest Cut.} Given a graph $G = (V, E)$. The (edge) expansion of $S \subseteq V$ is defined as $\Phi_G(S) = \frac{|E(S, V \setminus S)|}{\min\{|S|, |V \setminus S|\}}$, where $E(S, V \setminus S)$ denote the set of edges across the cut $(S, V \setminus S)$. In the uniform sparsest cut problem, we are asked to find a subset of vertices $S$ that minimizes $\Phi_G(S)$.

{\bf Balanced Separator.} In the Balanced Separator problem, the input is a graph $G = (V, E)$ and the goal is to find a partition of $V$ into $S_0, S_1$ with $S_0, S_1 \geqs c'|V|$ for some constant $c' > 0$ such that $\Phi_G(S_0)$ is minimized. Note that the approximation ratio is with respect to the minimum $\Phi_G(S_0)$ for all partition $S_0, S_1$ such that $|S_0|, |S_1| \geqs c|V|$ where $c$ is some constant greater than $c'$. In other words, the algorithm is a pseudo (aka bi-criteria) approximation; this is also the notion used in~\cite{LR99,ARV09}.

For simplicity, we only consider the case where $c = 1/3$ in this work; it is easy to see that the algorithm provided below can be extended to work for any constant $c \in (0, 1)$.

{\bf Minimum UnCut.} Given a graph $G = (V, E)$, the Minimum UnCut problem asks for a subset $S \subseteq V$ of vertices that minimizes the number of edges that do \emph{not} cross the cut $(S, V \setminus S)$.

{\bf Minimum 2CNF Deletion.} In this problem, we are given a 2CNF formula and the goal is to find a minimum number of clauses such that, when they are removed, the formula becomes satisfiable. Here we use $n$ to denote the number of variables in the input formula.

\section{Conditioning Yields Easy-To-Round Solution} \label{sec:condition-lemma}

The main result of this section is the following lemma on structure of conditioned solution:

\begin{lemma} \label{lem:main-conditioning}
Let $\tau, \gamma$ be any positive real numbers such that $\tau^2 < \gamma < 1$. Given a boolean degree-$d$ pseudo-expectation $\pE: \mathbb{R}_{d}[X_1, \dots, X_n] \to \mathbb{R}$ for a system $(\cP, \cQ)$ and an integer $\ell < d$, we can, in time $O(n/d)^{O(d)}$, find a boolean degree-$(d-\ell)$ pseudo-expectation $\pE': \mathbb{R}_{d - \ell}[X_1, \dots, X_n] \to \mathbb{R}$ for the system $(\cP, \cQ)$ such that the following condition holds: 
\begin{itemize}
\item Let $V_{(-\tau, \tau)} := \{i \in [n] \mid \pE'[X_i] \in (-\tau, \tau)\}$ denote the set of indices of variables whose pseudo-expectation lies in $(-\tau, \tau)$ and, for each $i \in [n]$, let $C_{\gamma}(i) := \{j \in [n] \mid \pE'[X_iX_j] \in [-\gamma, \gamma]\}$ denote the set of all indices $j$'s such that $\pE'[X_iX_j]$ lies in $[-\gamma, \gamma]$. Then, for all $i \in V_{(-\tau, \tau)}$, we have $$|V_{(-\tau, \tau)} \setminus C_{\gamma}(i)| \leqs \frac{n}{\ell(\gamma - \tau^2)^2}.$$
\end{itemize}
\end{lemma}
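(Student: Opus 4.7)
The plan is to argue by iteratively conditioning the pseudo-expectation while tracking a potential function, in the standard style of Barak--Raghavendra--Steurer and Raghavendra--Tan. Concretely, I would define
\[
\Phi(\pE) \;:=\; \sum_{j \in [n]} \bigl(1 - \pE[X_j]^2\bigr),
\]
which for any boolean pseudo-expectation lies in $[0,n]$. The algorithm is: initialise $\pE^{(0)} := \pE$, and while the conclusion fails for $\pE^{(t)}$, pick an index $i \in V_{(-\tau,\tau)}$ witnessing the failure, form both conditioned pseudo-expectations $\pE^{(t)}|_{X_i = 1}$ and $\pE^{(t)}|_{X_i = -1}$ (both well defined since $|\pE^{(t)}[X_i]| < \tau < 1$), and set $\pE^{(t+1)}$ to whichever has the smaller $\Phi$-value. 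Since each step drops the degree by one, the process stops within $\ell$ iterations, and the final pseudo-expectation still satisfies $(\cP,\cQ)$ by the conditioning proposition; restricting to degree $d-\ell$ polynomials gives the desired $\pE'$.

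The heart of the proof is the claim that each iteration decreases $\Phi$ by at least $n/\ell$. Writing $a := \pE^{(t)}[X_i]$, $b_j := \pE^{(t)}[X_j]$ and $c_j := \pE^{(t)}[X_iX_j]$, the definition of conditioning gives $\pE^{(t)}|_{X_i=b}[X_j] = (b_j + b c_j)/(1 + b a)$, and an elementary expansion of $(1-a)(b_j+c_j)^2 + (1+a)(b_j-c_j)^2$ yields the identity
\[
\tfrac{1+a}{2}\bigl(\pE^{(t)}|_{X_i=1}[X_j]\bigr)^2 + \tfrac{1-a}{2}\bigl(\pE^{(t)}|_{X_i=-1}[X_j]\bigr)^2 \;=\; b_j^2 \;+\; \frac{(c_j - a b_j)^2}{1 - a^2}.
\]
Summing over $j$, this says that if one chose $b \in \{\pm 1\}$ with probabilities $(1 \pm a)/2$, the expected drop in $\Phi$ would be $\sum_j (c_j - a b_j)^2/(1 - a^2) \;\geqs\; \sum_j (c_j - a b_j)^2$, and taking the better of the two deterministic choices can only improve on this. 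For each $j \in V_{(-\tau,\tau)} \setminus C_\gamma(i)$ we have $|a|,|b_j| < \tau$ and $|c_j| > \gamma$, so $|c_j - a b_j| > \gamma - \tau^2$; the assumed lower bound $|V_{(-\tau,\tau)} \setminus C_\gamma(i)| > n/(\ell(\gamma - \tau^2)^2)$ then forces the total drop to exceed $n/\ell$.

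Since $\Phi$ starts in $[0,n]$ and strictly cannot go negative, the loop executes at most $\ell$ times, so the output retains degree $\geqs d-\ell$ and satisfies the required structural property. The running time is dominated by the at most $\ell$ conditioning updates; each step simply recomputes the $O(n/d)^{O(d)}$ coefficients of the explicit representation of $\pE$, giving a total of $O(n/d)^{O(d)}$. I do not anticipate any real obstacle beyond bookkeeping: the only non-trivial step is the variance-decrease identity above, which is a one-line algebraic manipulation; everything else is a straightforward potential argument and invocation of the already-stated fact that conditioning preserves satisfaction of $(\cP,\cQ)$.
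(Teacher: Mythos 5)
Your proposal is correct and matches the paper's argument essentially verbatim: the same iterative conditioning driven by the same variance-increase identity (the paper's Proposition~\ref{prop:diff-cond}), with the only cosmetic difference that you track $\sum_j(1-\pE[X_j]^2)$ decreasing rather than $\sum_j \pE[X_j]^2$ increasing toward the bound $n$.
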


In other words, the lemma says that, when $d$ is sufficiently large, we can condition so that we arrive at a pseudo-expectation with the hollowness condition if we restrict ourselves to $V_{(-\tau, \tau)}$. Note here that, outside of $V_{(-\tau, \tau)}$, this hollowness condition does not necessarily hold. For instance, it could be that after conditioning all variables be come integral (i.e. $\pE[X_i] \in \{\pm 1\}$). However, this is the second ``easy-to-round'' case for ARV theorem, so this does not pose a problem for us.

The proof of Lemma~\ref{lem:main-conditioning} will be based on a potential function argument. In particular, the potential function we use is $\Phi(\pE) = \sum_{i \in [n]} \pE[X_i]^2$. The main idea is that, as long as there is a ``bad'' $i \in [n]$ that violates the condition states in the lemma, we will be able to finding a conditioning that significantly increases $\Phi$. However, $\Phi$ is always at most $n$, meaning that this cannot happens too many times and, thus, we must at some point arrive at a pseudo-distribution with no bad $i$.

To facilitate our proof, let us prove a simple identity regarding the potential change for a single variable after conditioning:

\begin{proposition} \label{prop:diff-cond}
Let $\pE: \mathbb{R}_{d}[X_1, \dots, X_n] \to \mathbb{R}$ be any degree-$d$ pseudo-expectation for some $d > 2$ and let $i \in [n]$ be such that $\pE[X_i] \ne -1, 1$. Then, for any $j \in [n]$, we have
\begin{align*}
\left(\frac{1 - \pE[X_i]}{2}\right) \left(\pE|_{X_i=-1}[X_j]\right)^2 + \left(\frac{1 + \pE[X_i]}{2}\right) \left(\pE|_{X_i=1}[X_j]\right)^2 - \pE[X_j]^2 = \frac{(\pE[X_i X_j] - \pE[X_i]\pE[X_j])^2}{1 - \pE[X_i]^2}.
\end{align*}
\end{proposition}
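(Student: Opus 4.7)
The plan is to turn the identity into a routine algebraic verification by introducing three scalars $a := \pE[X_i]$, $b := \pE[X_j]$, and $c := \pE[X_iX_j]$, and then to evaluate each piece of the claimed equation in these coordinates. Since everything in sight depends only on the moments $\pE[X_i]$, $\pE[X_j]$, and $\pE[X_iX_j]$, no further structure of the pseudo-expectation (beyond what is needed to define the conditioning) is used.

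First I would unfold Definition~\ref{def:cond}. We have $\pE[1 \pm X_i] = 1 \pm a$ (both nonzero by the hypothesis $a \ne \pm 1$) and $\pE[X_j(1 \pm X_i)] = b \pm c$ by linearity, hence
\[
\pE|_{X_i = 1}[X_j] \;=\; \frac{b + c}{1 + a}, \qquad \pE|_{X_i = -1}[X_j] \;=\; \frac{b - c}{1 - a}.
\]
Substituting these into the left-hand side of the proposition, the weights $\frac{1 \pm a}{2}$ cancel against one factor of $1 \pm a$ in the squared fractions, leaving
\[
\frac{(b-c)^2}{2(1-a)} \;+\; \frac{(b+c)^2}{2(1+a)} \;-\; b^2.
\]

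Next I would put the first two terms over the common denominator $2(1 - a^2)$. Expanding $(b - c)^2(1+a) + (b+c)^2(1-a)$ collapses to $2b^2 + 2c^2 - 4abc$, so the expression becomes $\frac{b^2 + c^2 - 2abc}{1 - a^2} - b^2$. Combining with $-b^2$ over the same denominator, the $b^2$ terms simplify as $b^2 + c^2 - 2abc - b^2(1 - a^2) = c^2 - 2abc + a^2 b^2 = (c - ab)^2$, yielding
\[
\frac{(c - ab)^2}{1 - a^2} \;=\; \frac{(\pE[X_iX_j] - \pE[X_i]\pE[X_j])^2}{1 - \pE[X_i]^2},
\]
which is exactly the right-hand side.

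I do not expect any real obstacle: the entire proof is a direct substitution once the two conditional expectations are written in closed form, and booleanness of $\pE$ is only invoked implicitly in the very statement of the conditioning operator (we never need to manipulate $\pE[X_i^2]$ or higher powers of $X_i$). The only thing worth double-checking during write-up is that the degree bookkeeping is clean, namely that $\pE|_{X_i = \pm 1}[X_j]$ and $\pE|_{X_i = \pm 1}[X_j]^2$ are well-defined, which they are as long as $d \geqs 2$, matching the hypothesis $d > 2$ in the statement.
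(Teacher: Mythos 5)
Your proposal is correct and is essentially the same direct algebraic verification the paper gives: both unfold Definition~\ref{def:cond} to get closed forms for the conditional expectations and reduce the identity to the computation $(c-ab)^2/(1-a^2)$; the paper merely organizes the algebra through the variance identity $ab^2+(1-a)c^2-(ab+(1-a)c)^2=a(1-a)(b-c)^2$ before computing the difference of the two conditional expectations, while you clear denominators directly. No gaps.
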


\begin{proof}
For succinctness, let $a = (1 - \pE[X_i])/2, b = \pE|_{X_i = 1}[X_j]$ and $c = \pE|_{X_i = -1}[X_j]$. Observe that, from definition of conditioning, we have
\begin{align*}
ab + (1 - a)c = \pE[(1 - X_i)X_j/2] + \pE[(1 + X_i)X_j/2] = \pE[X_j].
\end{align*}
Hence, the left hand side term of the equation in the proposition statement can be rewritten as 
\begin{align} \label{eq:diff-succ}
ab^2 + (1 - a)c^2 - (ab + (1 - a)c)^2
&= a(1 - a)(b - c)^2.
\end{align}
Let $\mu_i = \pE[X_i]$. Now, observe that $b - c$ is simply
\begin{align*}
\pE|_{X_i = -1}[X_j] - \pE|_{X_i = 1}[X_j] &= \frac{\pE[(1 - X_i)X_j]}{1 - \mu_i} - \frac{\pE[(1 + X_i)X_j]}{1 + \mu_i} \\
&= \frac{\pE[(1 + \mu_i)(1 - X_i)X_j - (1 - \mu_i)(1 + X_i)X_j]}{1 - \mu_i^2} \\
&= \frac{\pE[2(\mu_i - X_i)X_j]}{1 - \mu_i^2} \\
&= \frac{2(\mu_i\pE[X_j] - \pE[X_iX_j])}{1 - \mu_i^2}.
\end{align*}
Plugging the above equality back into~\eqref{eq:diff-succ} yields the desired identity.
\end{proof}

With the above lemma ready, we now proceed to the proof of Lemma~\ref{lem:main-conditioning}. Before we do so, let us also note that our choice of potential function $\pE[X_i]^2$ is not of particular importance; indeed, there are many other potential functions that work, such as the entropy of $X_i$.

\begin{proof}[Proof of Lemma~\ref{lem:main-conditioning}]
We describe an algorithm below that finds $\pE'$ by iteratively conditioning the pseudo-distribution on the variable $X_i$ that violates the condition.
\begin{enumerate}
\item Let $\pE_0 = \pE$
\item For $t = 1, \dots, \ell$, execute the following steps.
\begin{enumerate}
\item Let $V_{(-\tau, \tau)}^{t - 1} := \{i \in [n] \mid \pE_{t - 1}[X_i] \in (-\tau, \tau)\}$. \\Moreover, for each $i \in [n]$, let $C^{t - 1}_{\gamma}(i) := \{j \in [n] \mid \pE_{t - 1}[X_iX_j] \in [-\gamma, \gamma]\}$.
\item If $|V_{(-\tau, \tau)}^{t - 1} \setminus C^{t - 1}_{\gamma}(i)| \leqs \frac{n}{\ell(\gamma - \tau^2)^2}$ for all $i \in V_{(-\tau, \tau)}^{t - 1}$, then output $\pE_{t - 1}$ and terminate. \label{step:term}
\item Otherwise, pick $i \in V_{(-\tau, \tau)}^{t - 1}$ such that $|V_{(-\tau, \tau)}^{t - 1} \setminus C^{t - 1}_{\gamma}(i)| > \frac{n}{\ell(\gamma - \tau^2)^2}$. Compute $\Phi(\pE_{t}|_{X_i=1})$ and $\Phi(\pE_{t}|_{X_i=-1})$ and let $\pE_t$ be equal to the one with larger potential.
\end{enumerate}
\item If the algorithm has not terminated, output NULL. \label{step:null-step}
\end{enumerate}
Notice that, if the algorithm terminates in Step~\ref{step:term}, then the output pseudo-distribution obviously satisfies the condition in Lemma~\ref{lem:main-conditioning}. Hence, we only need to show that the algorithm always terminates in Step~\ref{step:term} (and never reaches Step~\ref{step:null-step}). Recall that we let $\Phi(\pE)$ denote $\sum_{i \in [n]} \pE[X_i]^2$. To prove this, we will analyze the change in $\Phi(\pE_t)$ over time. In particular, we can show the following:
\begin{claim} \label{claim:delta-potential}
For every $t \in [\ell]$, $\Phi(\pE_t) - \Phi(\pE_{t - 1}) > n/\ell$.
\end{claim}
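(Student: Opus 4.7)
The strategy is to establish Claim~\ref{claim:delta-potential} by summing the identity of Proposition~\ref{prop:diff-cond} over $j \in [n]$, and then conclude the lemma from an absolute bound on $\Phi$. By positivity applied to $(X_j - \pE_t[X_j])^2$ together with $\pE_t[X_j^2] = 1$, we have $\pE_t[X_j]^2 \leqs 1$ for every $j$ and every $t$, so $0 \leqs \Phi(\pE_t) \leqs n$ throughout the run. If the algorithm ever reached Step~\ref{step:null-step}, iterating Claim~\ref{claim:delta-potential} would give $\Phi(\pE_\ell) > \Phi(\pE_0) + \ell\cdot(n/\ell) \geqs n$, a contradiction. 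Hence the algorithm must terminate in Step~\ref{step:term}; the output $\pE'$ then satisfies the condition of the lemma by construction, and it still satisfies $(\cP, \cQ)$ by the proposition immediately following Definition~\ref{def:cond}. The bound $O(n/d)^{O(d)}$ on the running time is immediate since each of the at most $\ell < d$ conditioning steps is polynomial in the representation size of $\pE_{t-1}$.

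For the claim itself, fix $t \in [\ell]$ and let $i$ be the index chosen in iteration $t$. Since $i \in V_{(-\tau, \tau)}^{t-1}$ we have $\pE_{t-1}[X_i] \in (-\tau, \tau) \subseteq (-1, 1)$, so Proposition~\ref{prop:diff-cond} applies. Summing the identity over all $j \in [n]$, the left-hand side becomes
\[
\frac{1 - \pE_{t-1}[X_i]}{2}\,\Phi(\pE_{t-1}|_{X_i=-1}) + \frac{1 + \pE_{t-1}[X_i]}{2}\,\Phi(\pE_{t-1}|_{X_i=1}) - \Phi(\pE_{t-1}),
\]
which is a convex combination of the two conditioned potentials minus $\Phi(\pE_{t-1})$, and is therefore at most $\max_{b \in \{\pm 1\}} \Phi(\pE_{t-1}|_{X_i=b}) - \Phi(\pE_{t-1}) = \Phi(\pE_t) - \Phi(\pE_{t-1})$ by the algorithm's choice of $\pE_t$.

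It remains to lower-bound the summed right-hand side by $n/\ell$. I would discard all terms except those with $j \in V_{(-\tau, \tau)}^{t-1} \setminus C_\gamma^{t-1}(i)$. For each such $j$, the inequality $|\pE_{t-1}[X_iX_j]| > \gamma$ (from $j \notin C_\gamma^{t-1}(i)$) combined with $|\pE_{t-1}[X_i]\,\pE_{t-1}[X_j]| < \tau^2$ (from $i, j \in V_{(-\tau,\tau)}^{t-1}$) yields a numerator exceeding $(\gamma - \tau^2)^2$, while the denominator $1 - \pE_{t-1}[X_i]^2$ is at most $1$. The selection rule guarantees more than $n/(\ell(\gamma - \tau^2)^2)$ such indices $j$, so the sum strictly exceeds $n/\ell$, which is exactly the claim. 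The only substantive step is the convex-combination observation that passes from the weighted average to the maximum; no serious obstacle is anticipated.
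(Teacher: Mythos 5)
Your proof is correct and follows essentially the same route as the paper: sum the identity of Proposition~\ref{prop:diff-cond} over $j$, pass from the convex combination of the two conditioned potentials to their maximum, restrict the sum to $j \in V_{(-\tau,\tau)}^{t-1} \setminus C_\gamma^{t-1}(i)$, and bound each surviving term below by $(\gamma - \tau^2)^2$. The additional remarks on termination and running time match the paper's surrounding argument as well.
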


\begin{subproof}[Proof of Claim~\ref{claim:delta-potential}]
First, notice that it suffices to prove the following because $\pE_{t - 1}[(1 - X_i)/2] + \pE_{t - 1}[(1 + X_i)/2] = 1$ and, from our choice of $\pE_t$, we have $\Phi(\pE_t) = \max\{\Phi(\pE_{t - 1}|_{X_i=1}), \Phi(\pE_{t - 1}|_{X_i=-1})\}$.
\begin{align}
\left(\pE\left[\frac{1 - X_i}{2}\right] \cdot \Phi(\pE_{t - 1}|_{X_i=-1}) + \pE\left[\frac{1 + X_i}{2}\right] \cdot \Phi(\pE_{t + 1}|_{X_i=1})\right) - \Phi(\pE_{t - 1}) > n/\ell.
\end{align}

Recall that, from our definition of $\Phi$, the left hand side above can simply be written as
\begin{align} \label{eq:term-by-term}
\sum_{j \in [n]} \left(\pE\left[\frac{1 - X_i}{2}\right] \left(\pE_{t - 1}|_{X_i=-1}[X_j]\right)^2 + \pE\left[\frac{1 + X_i}{2}\right] \left(\pE_{t - 1}|_{X_i=1}[X_j]\right)^2 - \pE_{t - 1}[X_j]^2\right).
\end{align}

From Proposition~\ref{prop:diff-cond}, this is equal to
\begin{align}
\sum_{j \in [n]} \frac{\left(\pE_{t - 1}[X_iX_j] - \pE_{t - 1}[X_i]\pE_{t - 1}[X_j]\right)^2}{1 - \pE_{t - 1}[X_i]^2} &\geqs \sum_{j \in V_{(-\tau, \tau)}^{t - 1} \setminus C^{t - 1}_{\gamma}(i)} \frac{\left(\pE_{t - 1}[X_iX_j] - \pE_{t - 1}[X_i]\pE_{t - 1}[X_j]\right)^2}{1 - \pE_{t - 1}[X_i]^2}\\
&> \sum_{j \in V_{(-\tau, \tau)}^{t - 1} \setminus C^{t - 1}_{\gamma}(i)} \left(\gamma - \tau^2\right)^2 \\
\left(\text{From } |V_{(-\tau, \tau)}^{t - 1} \setminus C^{t - 1}_{\gamma}(i)| \geqs \frac{n}{\ell(\gamma - \tau^2)^2}\right) &> n/\ell,
\end{align}
where the second inequality follows from $|\pE[X_i]|, |\pE[X_j]| < \tau$ and $|\pE[X_iX_j]| > \gamma$ for all $j \in V_{(-\tau, \tau)}^{t - 1} \setminus C^{t - 1}_{\gamma}(i)$.
\end{subproof}

It is now easy to see that Claim~\ref{claim:delta-potential} implies that the algorithm never reaches Step~\ref{step:null-step}. Otherwise, we would have $\Phi(\pE_\ell) > n/\ell + \Phi(\pE_{\ell - 1}) > \cdots > n + \Phi(\pE) > n$, a contradiction.
\end{proof}




\section{The Algorithms} \label{sec:alg}

All of our algorithms follow the same three-step blueprint, as summarized below.

{\bf Step I: Solving for Degree-$n/2^{\Omega(r^2)}$ Pseudo-Expectation.} We first consider the system of constraints corresponding to the best known existing polynomial time algorithm for each problem, and we solve for degree-$n/2^{\Omega(r^2)}$ pseudo-expectation for such a system.

{\bf Step II: Conditioning to Get ``Hollow'' Solution.} Then, we apply Lemma~\ref{lem:main-conditioning} to arrive at a degree-2 pseudo-expectation that satisfies the system and that additionally is hollow, i.e., $|V_{(-\tau, \tau)} \setminus C_{\gamma}(i)| \leqs 2^{r^2}$ for all $i \in V_{(-\tau, \tau)}$ for appropriate values of $\tau, \gamma$. Recall here that $V_{(-\tau, \tau)}$ and $C_{\gamma}(i)$ are defined in Lemma~\ref{lem:main-conditioning}.

{\bf Step III: Following the Existing Algorithm.} Finally, we follow the existing polynomial time approximation algorithms (from~\cite{ARV09,Kar09,ACMM05}) to arrive at an approximate solution for the problem of interest. The improvement in the approximation ratio comes from the fact that our pseudo-expectation is now in the ``easy-to-round'' regime, i.e., the ARV Theorem gives separation of $\Omega(1/r)$ for this regime instead of $\Omega(1/\sqrt{\log n})$ for the general regime.

While the last step closely follows the previous known algorithms, there are sometimes subtlety involves (although there is nothing complicated). In particular, the second ``easy-to-round'' case needs not be handled in previous algorithms but have to be dealt with in our case.  

\subsection{Vertex Cover}

\begin{theorem}
For any $r > 1$ (possibly depending on $n$), there exists an $\exp(n/2^{\Omega(r^2)})\poly(n)$-time $\left(2 - \frac{1}{O(r)}\right)$-approximation algorithm for Vertex Cover on $n$-vertex graphs.
\end{theorem}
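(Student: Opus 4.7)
The plan is to instantiate the three-step blueprint from Section~\ref{sec:alg} with Karakostas's SoS relaxation for Vertex Cover. Introduce a $\pm 1$ variable $X_i$ for each vertex (with $X_i = 1$ encoding $i$ in the cover), impose $X_i^2 = 1$, the edge constraint $(1-X_i)(1-X_j) = 0$ for every $\{i,j\} \in E$, and the full family of triangle and antipodal-triangle inequalities used in Corollary~\ref{cor:arv-antipodal}; minimize $\sum_i \pE[(1+X_i)/2]$. I would solve the degree-$d$ relaxation for $d = \Theta(n/2^{r^2})$ in time $O(n/d)^{O(d)} = \exp(n/2^{\Omega(r^2)})\poly(n)$ and let $\mathrm{SoS}$ denote the resulting optimum, which lower-bounds $\mathrm{OPT}$.

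Next I would apply Lemma~\ref{lem:main-conditioning} with $\ell = d/2$ and parameters $\tau = c/r$, $\gamma = 1/2$ for a suitably small constant $c > 0$ (so that $\gamma - \tau^2 = \Omega(1)$ and $n/[\ell(\gamma - \tau^2)^2] \leq 2^{r^2}$) to obtain a degree-$2$ pseudo-expectation $\pE'$ still satisfying the Karakostas system plus the hollowness condition $|V_0 \setminus C_\gamma(i)| \leq 2^{r^2}$ for every $i \in V_0$, where $V_0 := V_{(-\tau,\tau)}$. Because the budget allows $\exp(n/2^{\Omega(r^2)})$ time, I would in fact enumerate all $2^\ell$ conditioning paths of Lemma~\ref{lem:main-conditioning}, retain the hollow endpoints, and among them pick the one with smallest $\sum_i \pE'[(1+X_i)/2]$. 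A martingale calculation -- random conditioning on $X_i = \pm 1$ with marginal probabilities preserves $\mathbb{E}[\pE'[X_j]]$ for every $j$ and terminates with hollowness in at most $\ell$ steps almost surely -- guarantees that some enumerated path has both hollowness at its endpoint and objective value at most $\mathrm{SoS}$.

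Fix any such $\pE'$ and partition $V = V_+ \sqcup V_0 \sqcup V_-$ by the sign of $\pE'[X_i]$ relative to $\pm \tau$. The edge constraint forces $\pE'[X_i] + \pE'[X_j] \geq 0$ for every edge $\{i,j\}$, immediately ruling out all edges inside $V_-$ and all edges between $V_0$ and $V_-$, so $V_+ \cup C$ is a valid vertex cover whenever $C \subseteq V_0$ covers the subgraph induced on $V_0$. To build a small $C$, I would restrict $\pE'$ to the variables indexed by $V_0$ -- all SoS constraints and the hollowness parameter $m = 2^{r^2}$ survive the restriction -- and invoke Corollary~\ref{cor:arv-antipodal} to obtain disjoint $T, T' \subseteq V_0$ with $|T| + |T'| = \Omega(|V_0|)$ and pairwise separation $\Delta = \Omega(1/\sqrt{\log 2^{r^2}}) = \Omega(1/r)$ in the metric $d_{\pE'}$. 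I would then run Karakostas's threshold-rounding procedure~\cite{Kar09} on $\pE'|_{V_0}$ with the pair $(T,T')$ to produce $C$ of expected size at most $(2 - \Omega(1/r))\sum_{i \in V_0} \pE'[(1+X_i)/2]$, derandomizing the random threshold by enumeration.

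For the final bound, combine $|V_+| \leq \tfrac{2}{1+\tau}\sum_{i \in V_+}\pE'[(1+X_i)/2] = (2 - \Omega(1/r))\sum_{i \in V_+}\pE'[(1+X_i)/2]$ (since $\tau = \Theta(1/r)$) with the $V_0$ bound above and discard the non-negative contributions from $V_-$: the total cover has expected size at most $(2 - \Omega(1/r))\sum_i \pE'[(1+X_i)/2] \leq (2 - \Omega(1/r))\,\mathrm{SoS} \leq (2 - \Omega(1/r))\,\mathrm{OPT}$. The main technical obstacle I expect is verifying that Karakostas's rounding analysis transfers cleanly from an ordinary SDP solution to a degree-$2$ pseudo-expectation restricted to a subset: his argument only uses the triangle and antipodal-triangle inequalities together with the well-separated pair $(T,T')$ supplied by (antipodal) ARV, both of which $\pE'|_{V_0}$ provides, but the bookkeeping must be done with care. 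A secondary subtlety is rigorously justifying the martingale/enumeration argument that lets us find a conditioning path which is simultaneously hollow and objective-preserving.
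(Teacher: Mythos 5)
Your overall route is the same as the paper's: the same relaxation (booleanity, edge constraints $(1-X_i)(1-X_j)=0$, antipodal triangle inequalities), conditioning via Lemma~\ref{lem:main-conditioning} to obtain hollowness on the ``fractional'' set $V_0$, the antipodal ARV corollary (Corollary~\ref{cor:arv-antipodal}) to extract a large independent set inside $V_0$, and the same $\frac{2}{1\pm\tau}$ accounting at the end. The one genuine gap is your treatment of the objective value after conditioning. The procedure of Lemma~\ref{lem:main-conditioning} deterministically picks the branch of larger potential, and nothing controls $\sum_i\pE'[(1+X_i)/2]$ along that path. Your proposed repair does not work as stated: (i) the claim that \emph{random} conditioning with the marginal probabilities ``terminates with hollowness in at most $\ell$ steps almost surely'' is false --- the potential is only a submartingale whose conditional increment exceeds $n/\ell$ in expectation while a bad vertex exists, so individual random paths need not become hollow within $\ell$ steps; and (ii) even granting that the leaves of the fully expanded depth-$\ell$ conditioning tree average (with the product-of-marginals weights) back to $\pE$, and hence have average objective $\mathrm{SoS}$, nothing prevents the hollow leaves from all being the expensive ones while the cheap leaves are non-hollow, so no single leaf is guaranteed to be simultaneously hollow and objective-preserving.

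The paper sidesteps this entirely: it encodes the objective as a constraint of the polynomial system, $OBJ-\sum_{i}(1+X_i)/2\geqs 0$, and binary-searches over $OBJ$ for the smallest feasible value $OBJ^*\leqs OPT$. By the proposition following Definition~\ref{def:cond}, \emph{every} conditioning preserves the whole system, so the degree-2 pseudo-expectation produced by Lemma~\ref{lem:main-conditioning} automatically satisfies $\pE'\left[\sum_i(1+X_i)/2\right]\leqs OBJ^*$ with no path enumeration at all. With that substitution the rest of your argument goes through and matches the paper: the hollowness parameters are admissible, the edge constraints force $\pE'[(X_i+X_j)^2]<4\tau$ for edges inside $V_0$ so that each of the $\Omega(1/r)$-antipodally-separated sets $T,T'$ is independent (the paper takes the larger of the two directly rather than threshold-rounding, but this is the same idea), and dropping the nonnegative contribution of $V_-$ is legitimate by positivity.
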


\begin{proof}
On input graph $G = (V = [n], E)$, the algorithm works as follows.

\paragraph{Step I: Solving for Degree-$n/2^{\Omega(r^2)}$ Pseudo-Expectation.} For every real number $OBJ \in \mathbb{R}$, let $(\cP^{VC}_{G, OBJ}, \cQ^{VC}_{G, OBJ})$ be the following system of polynomial constraints:
\begin{enumerate}
\item (Boolean) For all $i \in [n]$, $X_i^2 - 1 = 0$.
\item (Edge Cover Condition) For all $(i, j) \in E$, $(1 - X_i)(1 - X_j) = 0$.
\item (Triangle Inequalities) For all $i, j, k \in [n]$, 
\begin{align*}
(X_i - X_k)^2 + (X_k - X_j)^2 - (X_i - X_j)^2 \geqs 0, \\
(X_i + X_k)^2 + (X_k + X_j)^2 - (X_i - X_j)^2 \geqs 0, \\
(X_i - X_k)^2 + (X_k + X_j)^2 - (X_i + X_j)^2 \geqs 0, \\
(X_i + X_k)^2 + (X_k - X_j)^2 - (X_i + X_j)^2 \geqs 0. \\
\end{align*}
\item (Objective Bound) $OBJ - \sum_{i \in [n]} (1 + X_i)/2 \geqs 0$.
\end{enumerate}
Let $D := \lceil 1000n/2^{r^2} \rceil + 2$. The algorithm first uses binary search to find the largest $OBJ$ such that there exists a degree-$D$ pseudo-expectation for $(\cP^{VC}_{G, OBJ}, \cQ^{VC}_{G, OBJ})$. Let this value of $OBJ$ be $OBJ^*$, and let $\pE$ be a degree-$D$ pseudo-expectation satisfying $(\cP^{VC}_{G, OBJ^*}, \cQ^{VC}_{G, OBJ^*})$.

Notice that this step of the algorithm takes $O(n/D)^{O(D)}n^{O(1)} = \exp\left(O(nr^2/2^{r^2})\right)n^{O(1)} = \exp\left(n/2^{\Omega(r^2)}\right)n^{O(1)}$ time. Moreover, observe that the integral solution is a solution to the system with $OBJ = OPT$ where $OPT$ is the size of the optimal vertex cover of $G$. Thus, $OBJ^* \leqs OPT$.

\paragraph{Step II: Conditioning to Get ``Hollow'' Solution.} Use Lemma~\ref{lem:main-conditioning} to find an a degree-2 pseudo-expectation $\pE'$ for $(\cP^{VC}_{G, OBJ^*}, \cQ^{VC}_{G, OBJ^*})$ such that for all $i \in V_{(-0.1, 0.1)}$, $|V_{(-0.1, 0.1)} \setminus C_{0.1}(i)| < 2^{r^2}$.

\paragraph{Step III: Following Karakostas's Algorithm.} The last step of our algorithm proceeds exactly in the same manner as Karakostas's~\cite{Kar09}. First, let $\tau := 1/(10Cr)$ where $C > 1$ is a constant to be specified later; observe that $\tau < 0.1$. We divide the vertices into three groups: (i) $i$'s whose $\pE'[X_i] \geqs \tau$, (ii) $i$'s whose $\pE'[X_i] \leqs -\tau$ and (iii) $i$'s with $|\pE'[X_i]| < \tau$. More formally, let $V_{\geqs \tau} = \{i \in [n] \mid \pE'[X_i] \geqs \tau\}$, $V_{\leqs -\tau} = \{i \in [n] \mid \pE'[X_i] \leqs -\tau\}$ and $V_{(-\tau, \tau)} = \{i \in [n] \mid |\pE'[X_i]| < \tau\}$. The key lemma from~\cite{Kar09} translates in our settings to the following claim. 

\begin{claim}
There exists an absolute constant $\delta > 0$ such that, for any sufficiently large constant $C$, $V_{(-\tau, \tau)}$ contains an independent set of size $\delta|V_{(-\tau, \tau)}|$. Moreover, such an independent set can be found (with probability 2/3) in polynomial time.
\end{claim}

\begin{subproof}
Consider any edge $(i, j) \in E$ such that $i, j \in V_{(-\tau, \tau)}$. From the edge cover constraint, we have
\begin{align*}
\pE'[X_iX_j] = \pE'[X_i] + \pE'[X_j] - 1 < 2\tau - 1.
\end{align*}
As a result, for every $(i, j) \in E \cap (V_{(-\tau, \tau)} \times V_{(-\tau, \tau)})$, we have
\begin{align} \label{eq:edge-distance}
\pE'[(X_i + X_j)^2] = 2(1 + \pE'[X_iX_j]) < 4\tau. 
\end{align}

Now, from $\tau > 0.1$, the hollowness guarantee from Step II allows us to invoke the antipodal version of the ARV structural lemma (Corollary~\ref{cor:arv-antipodal}). This gives us subsets $T, T' \subseteq [n]$ such that $|T| + |T'| \geqs \zeta |V_{(-\tau, \tau)}|$ and, for every $i, i' \in T$ and $j, j' \in T'$, we have $\pE[(X_i + X_{i'})^2], \pE[(X_j + X_{j'})^2] \geqs \theta/r$ where $\theta, \zeta > 0$ are both absolute constants (not depending on $C$).

Observe that, for any $C > 1/\theta$, we have $4\tau < \theta/r$; in other words, for such $C$, \eqref{eq:edge-distance} implies that both $T$ and $T'$ are independent sets. Since $|T| + |T'| \geqs \zeta |V_{(-\tau, \tau)}|$, at least one of them must be an independent set of size at least $(\zeta/2) |V_{(-\tau, \tau)}|$, thereby proving the claim with $\delta = \zeta / 2$.
\end{subproof}

Our algorithm finds an independent set $I \subseteq V_{(-\tau, \tau)}$ of size at least $\delta|V_{(-\tau, \tau)}|$ using the claim above. It then outputs the set $V_{\geqs \tau} \cup (V_{(-\tau, \tau)} \setminus I)$. We now analyze the correctness of our algorithm. To see that the algorithm outputs a valid vertex cover of $G$, first observe that from the edge covering condition, if $(i, j) \in E$, then we have
\begin{align*}
\pE'[X_i] + \pE'[X_j] = 1 - \pE'[X_iX_j] = \frac{1}{2} \pE'[X_i^2] + \frac{1}{2} \pE'[X_j^2] - \pE'[X_iX_j] = \frac{1}{2} \pE'[(X_i - X_j)^2] \geqs 0.
\end{align*}
This implies that $V_{\geqs \tau}$ already cover all edges except those whose both endpoints lie in $V_{(-\tau, \tau)}$. Now, since $I$ is an independent set, $(V_{(-\tau, \tau)} \setminus I)$ must indeed cover all edges within $V_{(-\tau, \tau)}$ and, hence, the output set is a valid vertex cover.

Finally, we will argue that the output solution is of size at most $(2 - 1/O(r)) \cdot OBJ^* \leqs (2 - 1/O(r)) \cdot OPT$. To see this, first observe that
\begin{align*}
|V_{\geqs \tau}| \leqs \frac{2}{1 + \tau} \sum_{i \in V_{\geqs \tau}} \pE'[(1 + X_i)/2] = \left(2 - \frac{1}{O(r)}\right) \sum_{i \in V_{\geqs \tau}} \pE'[(1 + X_i)/2] 
\end{align*}
Next, suppose that we choose $C$ such that $1/(10C) < \delta/2$, we have
\begin{align*}
|V_{(-\tau, \tau)} \setminus I| \leqs (1 - \delta) \cdot |V_{(-\tau, \tau)}| &\leqs (1 - \delta) \cdot \frac{2}{1 - \tau} \sum_{i \in V_{(-\tau, \tau)}} \pE'[(1 + X_i)/2] \\
(\text{From our choice of } C) &\leqs (1 - \delta) \cdot \frac{2}{1 - \delta/2} \sum_{i \in V_{(-\tau, \tau)}} \pE'[(1 + X_i)/2] \\
&\leqs (2 - \delta) \sum_{i \in V_{(-\tau, \tau)}} \pE'[(1 + X_i)/2].
\end{align*}
By summing the two inequalities, we have
\begin{align*}
|V_{\geqs \tau} \cup (V_{(-\tau, \tau)} \setminus I)| &\leqs \left(2 - \frac{1}{O(r)}\right) \sum_{i \in V_{\geqs \tau}} \pE'[(1 + X_i)/2]  + (2 - \delta) \sum_{i \in V_{(-\tau, \tau)}} \pE'[(1 + X_i)/2] \\
&\leqs \left(2 - \frac{1}{O(r)}\right) \cdot OBJ^*,
\end{align*}
which concludes our proof.
\end{proof}

\subsection{Balanced Separator}

\begin{theorem}
For any $r > 1$ (possibly depending on $n$), there exists an $\exp(n/2^{\Omega(r^2)})\poly(n)$-time $O(r)$-approximation for Balanced Separator on $n$-vertex graphs.
\end{theorem}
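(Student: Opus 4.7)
I follow the same three-step blueprint used for Vertex Cover. In Step I, I set up a polynomial system $(\cP^{\mathrm{BS}}_{G,OBJ},\cQ^{\mathrm{BS}}_{G,OBJ})$ in $\pm 1$ variables (with the intent $X_i=-1 \Leftrightarrow i\in S_0$) containing the Boolean constraints $X_i^2=1$, the four triangle inequalities as in $(\cP^{VC}_{G,OBJ},\cQ^{VC}_{G,OBJ})$, the balance constraint $\sum_{i,j}(X_i-X_j)^2 \geqs \tfrac{16}{9}n^2$ (the SoS encoding of $|S_0|,|S_1|\geqs n/3$, since $\sum_{i,j}(X_i-X_j)^2 = 8|S_0||S_1|$ on integral solutions), and the objective bound $OBJ - \sum_{(i,j)\in E}\tfrac{1}{4}(X_i-X_j)^2 \geqs 0$. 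Binary search over $OBJ$ yields the smallest $OBJ^*$ for which a boolean degree-$D$ pseudo-expectation $\pE$ exists, with $D=\lceil Cn/2^{r^2}\rceil+2$ for a sufficiently large absolute constant $C$; this runs in $\exp(n/2^{\Omega(r^2)})n^{O(1)}$ time and guarantees $OBJ^*\leqs OPT$. In Step II, I invoke Lemma~\ref{lem:main-conditioning} with $\tau=0.4$, $\gamma=0.5$, and $\ell=\Theta(n/2^{r^2})$ chosen so that $n/(\ell(\gamma-\tau^2)^2)\leqs 2^{r^2}$, producing a degree-2 pseudo-expectation $\pE'$ that still satisfies the full system and in addition enjoys the hollowness $|V_{(-\tau,\tau)}\setminus C_\gamma(i)|\leqs 2^{r^2}$ for every $i\in V_{(-\tau,\tau)}$.

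\textbf{Integral case: $|V_{(-\tau,\tau)}|\leqs n/100$.} Let $T=\{i:\pE'[X_i]\leqs -\tau\}$ and $T'=\{i:\pE'[X_i]\geqs \tau\}$. Pseudo-Cauchy--Schwarz gives $\bigl(\sum_i\pE'[X_i]\bigr)^2 \leqs \pE'\bigl[(\sum_i X_i)^2\bigr] = n^2-\tfrac{1}{2}\sum_{i,j}\pE'[(X_i-X_j)^2] \leqs n^2/9$, so $\bigl|\sum_i\pE'[X_i]\bigr|\leqs n/3$. A direct counting argument (bounding each $\pE'[X_i]$ above by $1$ on $T'$, by $-\tau$ on $T$, by $\tau$ on $V_{(-\tau,\tau)}$, and vice versa) then forces both $|T|,|T'|\geqs \Omega(n)$: otherwise the sum would exceed $n/3$ in absolute value. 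Moreover, positivity gives the pseudo-variance bound $\pE'[(X_i-X_j)^2]\geqs (\pE'[X_i]-\pE'[X_j])^2\geqs 4\tau^2$ for $i\in T$, $j\in T'$, so $T$ and $T'$ are $\Omega(1)$-separated in $d_{\pE'}$. A random-threshold rounding with a constant-width window centered on $T$ then produces a cut $(S_0,S_1)$ with $|S_0|\geqs |T|$, $|S_1|\geqs |T'|$, and expected cut size $O(OBJ^*) \leqs O(OPT)$.

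\textbf{Fractional case: $|V_{(-\tau,\tau)}|>n/100$.} I apply Theorem~\ref{thm:arv-sos} to the restriction of $\pE'$ to $V_{(-\tau,\tau)}$. The Boolean and triangle-inequality hypotheses transfer immediately. For the spread hypothesis, hollowness yields $\pE'[X_iX_j]\leqs \gamma$ and hence $\pE'[(X_i-X_j)^2]\geqs 2(1-\gamma)$ for all but at most $2^{r^2}$ values of $j$ in $V_{(-\tau,\tau)}$, so $\sum_{i,j\in V_{(-\tau,\tau)}}\pE'[(X_i-X_j)^2] \geqs 2(1-\gamma)\bigl(|V_{(-\tau,\tau)}|^2 - |V_{(-\tau,\tau)}|\cdot 2^{r^2}\bigr)=\Omega(|V_{(-\tau,\tau)}|^2)$, where the last equality uses $|V_{(-\tau,\tau)}|\geqs n/100 \gg 2^{r^2}$ (which we may assume, since otherwise $r\geqs \Omega(\sqrt{\log n})$ and Karakostas's polynomial-time algorithm already gives the bound). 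The hollowness hypothesis of Theorem~\ref{thm:arv-sos} then holds with $\beta=1-\gamma$ and $m=2^{r^2}$. The theorem produces disjoint $T,T'\subseteq V_{(-\tau,\tau)}$, each of size $\Omega(n)$, with $d_{\pE'}(T,T')\geqs \Delta=\Omega(1/\sqrt{\log m})=\Omega(1/r)$. Standard ARV threshold rounding -- pick $\theta$ uniformly from $[0,\Delta)$ and output $S_0=\{i\in [n]: d_{\pE'}(i,T)\leqs \theta\}$ -- then yields $|S_0|\geqs |T|=\Omega(n)$, $|V\setminus S_0|\geqs |T'|=\Omega(n)$ (since $T,T'$ are $\Delta$-separated), and, by the triangle inequality in $d_{\pE'}$, expected cut size $\sum_{(i,j)\in E}d_{\pE'}(i,j)/\Delta = O(r)\cdot OBJ^* \leqs O(r)\cdot OPT$.

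\textbf{Main obstacle.} The only non-routine piece is verifying the spread hypothesis of Theorem~\ref{thm:arv-sos} on the restricted metric on $V_{(-\tau,\tau)}$ in the fractional case: the hollowness produced by Lemma~\ref{lem:main-conditioning} is precisely what precludes the pathological configuration in which almost all pairs are very close, and this recovers $\Omega(1)$-spread as soon as $|V_{(-\tau,\tau)}|$ is substantially larger than $2^{r^2}$. Everything else is a direct analogue of Karakostas's and ARV's roundings, with the only additional bookkeeping being to check that the chosen constants ($\tau=0.4$, $\gamma=0.5$, the $n/100$ threshold) satisfy all the numerical constraints simultaneously and that the balance $|S_0|,|V\setminus S_0|\geqs c'n$ holds for an absolute constant $c'>0$.
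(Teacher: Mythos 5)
Your proposal is correct and follows the paper's proof essentially step for step: the same degree-$n/2^{\Omega(r^2)}$ relaxation with boolean, triangle, balance and objective constraints, conditioning via Lemma~\ref{lem:main-conditioning}, a case split between an ``integral'' case (where $V_{(-\tau,\tau)}$ is small and the balance constraint alone yields two large $\Omega(1)$-separated sets) and a ``fractional'' case (where Theorem~\ref{thm:arv-sos} applied to $V_{(-\tau,\tau)}$ yields $\Omega(1/r)$-separated sets), followed by the standard threshold rounding. The only deviations are minor and valid --- you derive $|\sum_i \pE'[X_i]| \leqs n/3$ from the quadratic spread constraint by positivity rather than including the linear balance constraints explicitly as the paper does, and you verify the spread hypothesis in the fractional case from the hollowness guarantee together with $|V_{(-\tau,\tau)}| \gg 2^{r^2}$ rather than from the balance constraint --- apart from the small slip that the polynomial-time fallback for $r = \Omega(\sqrt{\log n})$ is the ARV Balanced Separator algorithm, not Karakostas's (which is for Vertex Cover).
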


\begin{proof}
On input graph $G = (V = [n], E)$, the algorithm works as follows.

\paragraph{Step I: Solving for Degree-$n/2^{\Omega(r^2)}$ Pseudo-Expectation.} For every real number $OBJ \in \mathbb{R}$, let $(\cP^{BS}_{G, OBJ}, \cQ^{BS}_{G, OBJ})$ be the following system of equations:
\begin{enumerate}
\item (Boolean) For all $i \in [n]$, $X_i^2 - 1 = 0$.
\item (Balance) $\sum_{i, j \in [n]} (X_i - X_j)^2 - 16n^2/9 \geqs 0,n/3 - \sum_{i \in [n]} X_i \geqs 0$ and $\sum_{i \in [n]} X_i + n/3 \geqs 0$.
\item (Triangle Inequalities) For all $i, j, k \in [n]$, 
\begin{align*}
(X_i - X_k)^2 + (X_k - X_j)^2 - (X_i - X_j)^2 \geqs 0 \\
(1 - X_i)^2 + (1 - X_j)^2 - (X_i - X_j)^2 \geqs 0, \\
(1 + X_i)^2 + (1 - X_j)^2 - (X_i - X_j)^2 \geqs 0.
\end{align*}
\item (Objective Bound) $4 \cdot OBJ - \sum_{(i, j) \in E} (X_i - X_j)^2 \geqs 0$.
\end{enumerate}
Let $D := \lceil 1000n/2^{r^2} \rceil + 2$. The algorithm first uses binary search to find the largest $OBJ$ such that there exists a degree-$D$ pseudo-expectation for $(\cP^{BS}_{G, OBJ}, \cQ^{BS}_{G, OBJ})$. Let this value of $OBJ$ be $OBJ^*$, and let $\pE$ be a degree-$D$ pseudo-expectation satisfying $(\cP^{BS}_{G, OBJ^*}, \cQ^{BS}_{G, OBJ^*})$.

Again, observe that this step takes $\exp\left(n/2^{\Omega(r^2)}\right)\poly(n)$ time and $OBJ^* \leqs OPT$ where $OPT$ is the number of edges cut in the balanced separator of $G$.

\paragraph{Step II: Conditioning to Get ``Hollow'' Solution.} Use Lemma~\ref{lem:main-conditioning} to find an a degree-2 pseudo-expectation $\pE'$ for $(\cP^{BS}_{G, OBJ^*}, \cQ^{BS}_{G, OBJ^*})$ such that for all $i \in V_{(-0.9, 0.9)}$, $|V_{(-0.9, 0.9)} \setminus C_{0.9}(i)| < 2^{r^2}$.

\paragraph{Step III: Following ARV Algorithm.}
The last step follows the ARV algorithm~\cite{ARV09}. The first step in the algorithm is to use the structural lemma to obtain two large well separated set. While in the traditional setting, the structural theorem can be applied immediately; we have to be more careful and treat the two ``easy-to-round'' cases differently. This is formalized below.

\begin{claim}
There exist disjoint subsets $T, T' \subseteq [n]$ that are $\Omega(1/r)$-separated. Moreover, these subsets can be found (with probability 2/3) in polynomial time.
\end{claim}

\begin{subproof}
Similar to before, for every $a, b \in \mathbb{R}$, let $V_{\geqs a} = \{i \in [n] \mid \pE'[X_i] \geqs a\}$, $V_{\leqs b} = \{i \in [n] \mid \pE'[X_i] \leqs b\}$ and $V_{(a, b)} = \{i \in [n] \mid |\pE'[X_i]| < \tau\}$.

Let $\tau = 0.9$. We consider the following two cases:
\begin{enumerate}
\item $|V_{\geqs \tau}| \geqs 0.1n$ or $|V_{\leqs -\tau}| \geqs 0.1n$. Suppose without loss of generality that it is the former. We claim that $|V_{\leqs 0.8}| \geqs 0.2n$. To see that this is the case, observe that
\begin{align*}
n/3 \geqs \sum_{i \in [n]} \pE'[X_i] 
&= \sum_{i \in V \setminus V_{\leqs 0.8}} \pE'[X_i] + \sum_{i \in V_{\leqs 0.8}} \pE'[X_i] \\
&\geqs 0.8(n - |V_{\leqs 0.8}|) - |V_{\leqs 0.8}| \\
&= 0.8n - 1.8|V_{\leqs 0.8}| 
\end{align*}
which implies that $|V_{\leqs 0.8}| \geqs (0.8n - n/3)/1.8 > 0.2n$ as desired. Let $T = V_{\geqs \tau}$ and $T' = V_{\leqs 0.8}$. As we have shown, $|T|, |T'| \geqs \Omega(n)$. Moreover, for every $i \in T$ and $j \in T'$, triangle inequality implies that
\begin{align*}
\pE'[X_iX_j] \leqs \pE'[X_iX_j] + \pE'[(1 - X_i)(1 + X_j)] = 1 - \pE'[X_i] + \pE'[X_j] < 1 - 0.9 + 0.8 = 0.9.
\end{align*}
That is, we have $\pE'[(X_i - X_j)^2] = 2 - 2\pE'[X_iX_j] > 0.2$, completing the proof for the first case.
\item $|V_{\geqs \tau}| < 0.1n$ and $|V_{\leqs -\tau}| < 0.1n$. This implies that $|V_{(-\tau, \tau)}| \geqs 0.8n$. Moreover, observe that 
\begin{align*}
\sum_{i, j \in V_{(-\tau, \tau)}} \pE[(X_i - X_j)^2] &= \sum_{i, j \in [n]} \pE[(X_i - X_j)^2] - \sum_{\substack{i, j \in [n] \\ i \in V_{\geqs \tau} \text{ or } j \in V_{\geqs \tau}}} \pE[(X_i - X_j)^2] - \sum_{\substack{i, j \in [n] \\ i \in V_{\leqs \tau} \text{ or } j \in V_{\leqs \tau}}} \pE[(X_i - X_j)^2] \\
&\geqs 16n^2/9 - 8n|V_{\geqs \tau}| - 8n|V_{\leqs \tau}| \\
&> 0.1n^2.
\end{align*}

Hence, applying the ARV Structural Theorem (Theorem~\ref{thm:arv-sos}) to $V_{(-\tau, \tau)}$ yields the desired $T, T'$.
\end{enumerate}
Thus, in both cases, we can find the desired $T, T'$ in randomized polynomial time.
\end{subproof}

Once we have found the sets $T, T'$, we use the following rounding scheme from~\cite{LR99,ARV09}:
\begin{itemize}
\item Pick $\theta$ uniformly at random from $[0, d(T, T'))$.
\item Let $S = \{i \in [n] \mid d(i, T) < \theta\}$.
\item Output $(S, V \setminus S)$.
\end{itemize}
Observe that $T \subseteq S$ and $T' \subseteq (V \setminus S)$, which means that $|S|, |V \setminus S| > \Omega(n)$; in other words, the output is a valid (pseudo-)solution for the Balanced Separator Problem. Moreover, for every $(i, j) \in E$, it is easy to see that the probability that the two endpoints end up in different sets is at most $|d(i, T) - d(j, T)|/d(T, T') \leqs d(i, j)/d(T, T') \leqs O(r) \cdot d(i, j)$. As a result, the expected number of edges cut by our solution is $O(r) \cdot \sum_{(i, j) \in E} d(i, j) = O(r) \cdot OBJ^*$, which completes our proof.
\end{proof}

\subsection{Uniform Sparsest Cut}

\begin{theorem}
For any $r > 1$ (possibly depending on $n$), there exists an $\exp(n/2^{\Omega(r^2)})\poly(n)$-time $O(r)$-approximation for Uniform Sparsest Cut on $n$-vertex graphs.
\end{theorem}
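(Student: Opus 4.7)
My plan is to follow the three-step blueprint used for Balanced Separator, adapted to the ratio objective of Uniform Sparsest Cut.

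In Step~I, I set up a system $(\cP^{USC}_{G, OBJ}, \cQ^{USC}_{G, OBJ})$ on Boolean variables $X_i \in \{\pm 1\}$ (with $X_i = -1$ encoding $i \in S$), consisting of $X_i^2 - 1 = 0$, the antipodal triangle inequalities from Corollary~\ref{cor:arv-antipodal}, and the ratio bound
\begin{align*}
OBJ \cdot \sum_{i, j \in [n]} (X_i - X_j)^2 - 2n \sum_{(i, j) \in E} (X_i - X_j)^2 \geqs 0,
\end{align*}
which, via the sandwich $|S||V \setminus S|/n \leqs \min(|S|, |V \setminus S|) \leqs 2|S||V \setminus S|/n$, captures the sparsity ratio up to a factor of $2$. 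Binary-searching for the smallest $OBJ^*$ admitting a degree-$D$ pseudo-expectation $\pE$, where $D := \lceil 1000 n / 2^{r^2}\rceil + 2$, takes $\exp(n/2^{\Omega(r^2)})\poly(n)$ time, and substituting the integral optimum yields $OBJ^* \leqs O(OPT)$.

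In Step~II, I apply Lemma~\ref{lem:main-conditioning} with $\tau = \gamma = 0.9$ and $\ell = D - 2$ to obtain a degree-2 pseudo-expectation $\pE'$ on the same system satisfying $|V_{(-0.9, 0.9)} \setminus C_{0.9}(i)| < 2^{r^2}$ for every $i \in V_{(-0.9, 0.9)}$. For Step~III, letting $d := d_{\pE'}$, I follow the bias-based case analysis of Balanced Separator to extract disjoint seed sets $T, T' \subseteq [n]$ each of size $\Omega(n)$ that are $\Delta$-separated with $\Delta = \Omega(1/r)$: if a constant fraction of vertices are biased in opposite directions, the antipodal triangle inequalities give constant separation directly, and otherwise $|V_{(-0.9, 0.9)}| \geqs \Omega(n)$ and the hollowness lets me invoke Corollary~\ref{cor:arv-antipodal} on the restriction of $\pE'$ to this set. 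The standard sweep rounding (pick $\theta \in [0, \Delta)$ uniformly at random and output $S := \{i : d(i, T) < \theta\}$) yields $T \subseteq S$ and $T' \subseteq V \setminus S$, so $\min(|S|, |V \setminus S|) \geqs \Omega(n)$; combining the expected cut bound $\sum_{(i,j) \in E} d(i,j)/\Delta = O(r) \sum_{(i,j) \in E} d(i,j)$ with the ratio constraint and $\sum_{i,j} d(i,j) \leqs 4n^2$ gives output sparsity $O(r) \cdot OBJ^* \leqs O(r) \cdot OPT$.

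The main obstacle I anticipate is that Uniform Sparsest Cut has no explicit balance constraint, so (i)~the pseudo-expectation $\pE'$ may be degenerate (e.g., $\pE'[X_i] \approx +1$ for almost every $i$), in which case $OBJ^* = 0$ would be a vacuous lower bound, and (ii)~Corollary~\ref{cor:arv-antipodal} only guarantees $|T| + |T'| = \Omega(n)$, so one side of the output cut could end up empty. I expect to resolve both by a dyadic iteration over a spread parameter $\alpha \in \{2^{-k}\}_{k=0}^{\lceil \log_2 n \rceil}$: add the extra constraint $\sum_{i,j}(X_i - X_j)^2 - \alpha n^2 \geqs 0$ to the system, run Steps~I--III for each $\alpha$, and output the best of the $O(\log n)$ resulting cuts. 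Each run then inherits the Balanced-Separator-style guarantee that both $T$ and $T'$ have size $\Omega_\alpha(n)$, and the level $\alpha^*$ matching the balance of the optimum (namely $\alpha^* \approx |S^*|\cdot|V \setminus S^*|/n^2$) recovers the target $O(r) \cdot OPT$ sparsity.
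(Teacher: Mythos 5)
There is a genuine gap in Step~III, and it is exactly at the point where Uniform Sparsest Cut differs from Balanced Separator. Your dyadic iteration over a spread parameter $\alpha$ does not make the ARV structural theorem applicable when the optimal cut is unbalanced. Both Theorem~\ref{thm:arv-metric} and Corollary~\ref{cor:arv-antipodal} need the spread to be a constant fraction of $\mathrm{diam}^2\cdot|X|^2$; their guarantees are $\Omega_\alpha(\cdot)$ and degenerate for sub-constant $\alpha$. Indeed, for small $\alpha$ the conclusion you want is simply false: if $0.99n$ of the points are within distance $0.001/r$ of one another (which is consistent with $\alpha n^2$-spread for $\alpha = O(1/n)$), there do not exist two $\Omega(n)$-sized, $\Omega(1/r)$-separated sets $T, T'$, so the ``both sides have size $\Omega(n)$'' rounding you borrow from Balanced Separator cannot be carried out at the level $\alpha^*$ matching an unbalanced optimum. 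Relatedly, your hollowness radius is fixed (you take $\gamma = 0.9$, i.e.\ balls of radius $0.2$ in $d_{\pE'}$), which is useless once the relevant cluster of points has diameter $\ll 0.2$: the entire cluster sits inside a single hollowness ball and the parameter $m$ in the structural theorem is still $\Theta(n)$.

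The paper resolves this differently: it guesses the size $t$ of one side of the cut, brute-forces the very unbalanced regime $t \leq n/2^{r^2/100}$ in time $\binom{n}{t} = \exp(n/2^{\Omega(r^2)})$, and for $\kappa := t/n \in (2^{-r^2/100}, 1/2]$ adds an exact size constraint $\sum_{i,j}(X_i-X_j)^2 = 8t(n-t)$. Then (Lemma~\ref{lem:set-pairwise}, case 2) it locates a ball of radius $O(\kappa)$ containing $0.7n$ points, rescales the metric by $\kappa$ so that the spread becomes a constant fraction of the (new) diameter, and applies ARV there to get sets separated by $\Omega(\kappa/r)$ --- note the extra factor $\kappa$, which is why the conditioning must be run with $\tau = 1 - \kappa/10$ and $\gamma = 1 - \kappa/30$ so that the hollowness holds at radius $\Theta(\kappa)$ (the resulting bound $O(2^{2r^2})$ uses $\kappa \geq 2^{-r^2/100}$). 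Finally, because the separation is only $\Omega(\kappa/r)$ rather than $\Omega(1/r)$, the rounding is not a random threshold with both sides of size $\Omega(n)$ but the best-threshold sweep over the level sets of $d(\cdot, T)$, analyzed against the lower bound $\sum_{i,j}|d(i,T)-d(j,T)| \geq \Omega(nt/r)$. To repair your argument you would need to replace the dyadic spread constraint with this guess-the-size, restrict-to-a-ball, and rescale step (or an equivalent weighted version of the structural theorem), and adjust both the conditioning radius and the rounding accordingly.
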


\begin{proof}
Given an input graph $G = (V = [n], E)$. For every $t \in [n]$, let us use $\Phi_G(t)$ to denote $\min_{\substack{S \subseteq V \\ |S| = t}} \Phi_G(S)$. For each $t \in [n]$, we will design an algorithm so that it outputs a set $S \subseteq V$ with $\Phi_G(S) \leqs O(r) \cdot \Phi_G(t)$. By running this algorithm for every $t \in [n]$ and output the set with minimum edge expansion, we can approximate the Uniform Sparsest Cut to within $O(r)$ factor.

Let us now fix $t \in [n]$. Observe that we may assume w.l.o.g. that $t \leqs n/2$. Moreover, when $t \geqs n/2^{r^2/100}$, we can just enumerate all subsets $S \subseteq V$ of size $t$ and find the one with smallest edge expansion; this is an exact algorithm for $\Phi_G(t)$ that runs in time $(n/t)^{O(t)} n^{O(1)} = \exp(n/2^{\Omega(r^2)}) n^{O(1)}$. Hence, from this point onwards, we may assume that $\kappa := t/n$ lies in $(1/2^{r^2/100}, 1/2]$. We will also assume without loss of generality that $r \leqs \sqrt{\log n}/100$; otherwise, ARV algorithm~\cite{ARV09} already gives the desired approximation in polynomial time.

\paragraph{Step I: Solving for Degree-$n/2^{\Omega(r^2)}$ Pseudo-Expectation.} For every real number $OBJ \in \mathbb{R}$, let $(\cP^{SC}_{G, OBJ, t}, \cQ^{SC}_{G, OBJ, t})$ be the following system of equations:
\begin{enumerate}
\item (Boolean) For all $i \in [n]$, $X_i^2 - 1 = 0$.
\item (Size) $\sum_{i, j \in [n]} (X_i - X_j)^2 - 8t(n - t) = 0$.
\item (Triangle Inequalities) For all $i, j, k \in [n]$, 
\begin{align*}
(X_i - X_k)^2 + (X_k - X_j)^2 - (X_i - X_j)^2 \geqs 0 \\
(1 - X_i)^2 + (1 - X_j)^2 - (X_i - X_j)^2 \geqs 0, \\
(1 + X_i)^2 + (1 - X_j)^2 - (X_i - X_j)^2 \geqs 0.
\end{align*}
\item (Objective Bound) $OBJ \cdot t - \sum_{(i, j) \in E} (X_i - X_j)^2 \geqs 0$.
\end{enumerate}
Let $D := \lceil 1000n/2^{r^2} \rceil + 2$. The algorithm first uses binary search to find the largest $OBJ$ such that there exists a degree-$D$ pseudo-expectation for $(\cP^{SC}_{G, OBJ, t}, \cQ^{SC}_{G, OBJ, t})$. Let this value of $OBJ$ be $OBJ^*$, and let $\pE$ be a degree-$D$ pseudo-expectation satisfying $(\cP^{SC}_{G, OBJ^*, t}, \cQ^{SC}_{G, OBJ^*, t})$.

Again, observe that this step takes $\exp\left(n/2^{\Omega(r^2)}\right)\poly(n)$ time and $OBJ^* \leqs \Phi_G(t)$.

\paragraph{Step II: Conditioning to Get ``Hollow'' Solution.} Let $\tau = 1 - \kappa/10$ and let $\gamma = 1 - \kappa/30$. Applying Lemma~\ref{lem:main-conditioning} to $\pE$ gives us a degree-2 pseudo-expectation $\pE'$ for $(\cP^{SC}_{G, OBJ, t^*}, \cQ^{SC}_{G, OBJ, t^*})$ such that, for all $i \in V_{(-\tau, \tau)}$, we have 
$$|V_{(-\tau, \tau)} \setminus C_{\gamma}(i)| \leqs \frac{n}{(1000n/2^{r^2})(\gamma - \tau^2)^2} = O\left(\frac{2^{r^2}}{\kappa^2}\right) \leqs O\left(\frac{2^{r^2}}{2^{-r^2/50}}\right) = O(2^{2r^2})$$
where the second inequality follows from $\kappa \geqs 2^{-r^2/100}$.

\paragraph{Step III: Following ARV Algorithm.}
The last step follows the ARV algorithm~\cite{ARV09} for Sparsest Cut. Here we will adhere to the notation of Lee~\cite{Lee05}; in fact, the proof below is exactly the same as that of Lee with only one exception: we have to consider the second ``easy-to-round'' case, which will be the first case in the lemma below. For convenient, we will write $d(i, \emptyset)$ to denote $\pE[(X_i - 1)^2]$; due to the fact that we add triangle inequalities for $1$ as well, $d$ still remains a valid metric on $[n] \cup \{\emptyset\}$. The key lemma of~\cite{ARV09,Lee05} is the following.

\begin{lemma} \label{lem:set-pairwise}
There exists a set $T \subseteq [n]$ such that
\begin{align} \label{eq:set-spread}
\sum_{i, j \in [n]} |d(i, T) - d(j, T)| \geqs \frac{1}{O(r)} \sum_{i, j \in [n]} d(i, j) = \Omega(nt/r).
\end{align}
\end{lemma}

\begin{subproof}
We consider the following two cases:
\begin{enumerate}
\item $|V_{\geqs \tau}| \geqs 0.1n$ or $|V_{\leqs \tau}| \geqs 0.1n$. Assume without loss of generality that it is the former. Let $T = V_{\geqs \tau}$. We have
\begin{align*}
4\kappa n^2 \leqs 8t(n - t) = \sum_{i, j \in [n]} d(i, j) \leqs \sum_{i, j \in [n]} \left(d(i, \emptyset) + d(j, \emptyset)\right) = 2n\sum_{i \in [n]} d(i, \emptyset) \leqs 2n\sum_{i \in [n]} \left(d(i, T) + \kappa\right)
\end{align*}
where the last inequality comes from the fact that $d(i, T) \subseteq \cB(\emptyset, \kappa)$.

The above inequality implies that $\sum_{i \in [n]} d(i, T) \geqs 2\kappa n$. As a result, we have
\begin{align*}
\sum_{i, j \in [n]} |d(i, T) - d(j, T)| \geqs \sum_{j \notin T, i \in T} d(i, T) \geqs 2\kappa n|T| \geqs 0.2 \kappa n^2 > \Omega(1) \cdot \sum_{i, j \in [n]} d(i, j)
\end{align*}
as desired.
\item $|V_{\geqs \tau}| < 0.1n$ and $|V_{\leqs \tau}| < 0.1n$. In this case, we have $|V_{(-\tau, \tau)}| > 0.8n$. Observe that
\begin{align*}
8\kappa n^2 \geqs \sum_{i, j \in [n]} d(i, j)
&\geqs \sum_{i, j \in V_{(-\tau, \tau)}} d(i, j) \\
&\geqs \sum_{i \in V_{(-\tau, \tau)}} \sum_{j \in V_{(-\tau, \tau)} \setminus \cB(i, 100\kappa)} d(i, j) \\
&\geqs \sum_{i \in V_{(-\tau, \tau)}} 100\kappa \cdot |V_{(-\tau, \tau)} \setminus \cB(i, 100\kappa)|.
\end{align*}
Hence, there must exist $i^* \in V_{(-\tau, \tau)}$ such that $|V_{(-\tau, \tau)} \setminus \cB(i^*, 100\kappa)| \leqs 8\kappa n^2 / (100 \kappa \cdot 0.8n) = 0.1n$. In other words, $|V_{(-\tau, \tau)} \cap \cB(i^*, 100\kappa)| > 0.7n$. Let us consider the set $U = V_{(-\tau, \tau)} \cap \cB(i^*, 100\kappa)$. Observe that the guarantee of Step II implies that, for all $i \in U$, we have $|U \setminus \cB(i, \kappa/15)| \leqs O(2^{2r^2})$. Recall that $r \leqs \sqrt{\log n}/100$, meaning that $|U \setminus \cB(i, \kappa/15)| < 0.5n$ for sufficiently large $n$. As a result, we have
\begin{align*}
\sum_{i, j \in U} d(i, j) &\geqs \sum_{i \in U} \sum_{j \in U \setminus \cB(i, \kappa/15)} d(i, j) \\
&\geqs \sum_{i \in U} (\kappa/15) \cdot |U \setminus \cB(i, \kappa/15)| \\
&\geqs 0.7n \cdot (\kappa/15) \cdot 0.2n \\
&\geqs \Omega(\kappa n^2).
\end{align*}

Notice that the metric space $(U, d)$ has diameter $O(\kappa)$ and that it is $\Omega(\kappa |U|^2)$-separated. Hence, we can now apply the ARV Theorem (Theorem~\ref{thm:arv-metric}\footnote{Notice that here all distances are scaled by a factor of $\kappa$ and hence the $\Omega(\kappa/r)$-separation.}) on $U$ which gives us the sets $T, T'$ of size $\Omega(|U|) = \Omega(n)$ which are $\Omega(\kappa/r)$-separated. This means that
\begin{align*}
\sum_{i, j \in [n]} |d(i, T) - d(j, T)| \geqs \sum_{j \in T, i \in T'} d(i, T) \geqs \sum_{j \in T, i \in T'} \Omega(\kappa/r) = \Omega(\kappa n^2 / r) = \Omega(1/r) \cdot \sum_{i, j \in [n]} d(i, j).
\end{align*} 
\end{enumerate}
\end{subproof}

Finally, given the set $T$ from Lemma~\ref{lem:set-pairwise}, we consider the following algorithm:
\begin{itemize}
\item Sort vertices by the distance to $T$ in increasing order. Let $\pi(1), \dots, \pi(n)$ be the sorted list.
\item Consider sets $S_\ell := \{\pi(1), \dots, \pi(\ell)\}$ for all $\ell \in [n - 1]$.
\item Output $S_\ell$ that minimizes $\Phi_G(S_\ell)$ among all $\ell \in [n - 1]$.
\end{itemize}
The output set has expansion
\begin{align*}
\min_{\ell \in [n - 1]} \Phi_G(S_\ell) &\leqs 2n \cdot \min_{\ell \in [n - 1]} \frac{E(S_\ell, V \setminus S_\ell)}{2|S_\ell| \cdot |V \setminus S_\ell|} \\
&= 2n \cdot \min_{\ell \in [n - 1]} \frac{\sum_{(i, j) \in E} |\ind[i \in S_\ell] - \ind[j \in S_\ell]|}{\sum_{i, j \in V} |\ind[i \in S_\ell] - \ind[j \in S_\ell]|} \\
&\leqs 2n \left(\frac{\sum_{\ell \in [n - 1]} (d(\pi(\ell + 1), T) - d(\pi(\ell), T)) \cdot \sum_{(i, j) \in E} |\ind[i \in S_\ell] - \ind[j \in S_\ell]|}{\sum_{\ell \in [n - 1]} (d(\pi(\ell + 1), T) - d(\pi(\ell), T)) \cdot \sum_{i, j \in V} |\ind[i \in S_\ell] - \ind[j \in S_\ell]|}\right) \\
&= 2n \left(\frac{\sum_{(i, j) \in E} |d(i, T) - d(j, T)|}{\sum_{(i, j) \in [n]} |d(i, T) - d(j, T)|}\right) \\
&\overset{\eqref{eq:set-spread}}{\leqs} O(r/t) \left(\sum_{(i, j) \in E} |d(i, T) - d(j, T)|\right) \\
&\leqs O(r/t) \cdot \sum_{(i, j) \in E} d(i, j) \\
&\leqs O(r) \cdot OBJ^*,
\end{align*}
which concludes our proof.
\end{proof}

\subsection{Minimum 2CNF Deletion and Minimum UnCut}

\begin{theorem} \label{thm:cnfdel-uncut}
For any $r > 1$ (possibly depending on $n$), there exists an $\exp(n/2^{O(r^2)})\poly(n)$-time $O(r)$-approximation for Minimum 2CNF Deletion and Minimum UnCut where $n$ denote the number of input variables for Minimum 2CNF Deletion and the number of input vertices for Minimum UnCut.
\end{theorem}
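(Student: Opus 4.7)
The plan is to apply the same three-step blueprint used for Balanced Separator and Uniform Sparsest Cut, substituting the antipodal SDP of Agarwal \etal~\cite{ACMM05} and its rounding in place of the ARV relaxation. I describe Min UnCut in detail; Min 2CNF Deletion then follows from the standard reduction in~\cite{ACMM05} that turns a 2CNF formula into a weighted antipodal Min UnCut-type instance on $2n$ literals, with the antipodal identification enforced by the linear equality $X_i + X_{-i} = 0$, which is preserved under SoS conditioning. Since doubling the number of variables only changes $\exp(n/2^{\Omega(r^2)})$ by a constant in the exponent, both the running time and approximation ratio transfer.

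For Step~I, I would define the system $(\cP^{UC}_{G, OBJ}, \cQ^{UC}_{G, OBJ})$ consisting of the booleanness constraints $X_i^2 - 1 = 0$, the four antipodal triangle inequalities from Corollary~\ref{cor:arv-antipodal} for all $i, j, k \in [n]$, and the objective bound $4 \cdot OBJ - \sum_{(i,j) \in E} (X_i + X_j)^2 \geqs 0$; here $(X_i + X_j)^2/4 \in \{0, 1\}$ is precisely the indicator of $(i,j)$ being \emph{uncut} under a boolean assignment. Binary search for the largest feasible $OBJ^*$ via a degree-$D$ pseudo-expectation with $D := \lceil 1000 n / 2^{r^2} \rceil + 2$ then takes $\exp(n / 2^{\Omega(r^2)}) \poly(n)$ time and yields $OBJ^* \leqs OPT$. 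Step~II applies Lemma~\ref{lem:main-conditioning} with $\tau = \gamma = 0.1$ to produce a degree-$2$ pseudo-expectation $\pE'$ for the same system satisfying $|V_{(-0.1, 0.1)} \setminus C_{0.1}(i)| < 2^{r^2}$ for every $i \in V_{(-0.1, 0.1)}$. Note that, by definition, $V_{(-\tau,\tau)} \setminus C_\gamma(i) = \{j \in V_{(-\tau,\tau)} : |\pE'[X_iX_j]| > \gamma\}$, so this is already the two-sided hollowness demanded by Corollary~\ref{cor:arv-antipodal} (with $\beta = 1 - \gamma$).

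Step~III proceeds by the two-case split familiar from the Balanced Separator proof. If $|V_{\geqs 0.1}| \geqs 0.1 n$ (the symmetric case is analogous), I take $T = V_{\geqs 0.1}$ and, by thresholding as in Case~1 of the Balanced Separator analysis, extract $T' \subseteq V_{\leqs -0.1}$ of size $\Omega(n)$; booleanness together with the antipodal triangle inequalities then forces $\pE'[(X_i - X_j)^2]$ across $T, T'$ and $\pE'[(X_i + X_{i'})^2], \pE'[(X_j + X_{j'})^2]$ within $T$ and $T'$ to all be $\Omega(1)$. Otherwise $|V_{(-0.1, 0.1)}| \geqs 0.8 n$, and I apply Corollary~\ref{cor:arv-antipodal} to the restriction to obtain $T, T'$ of size $\Omega(n)$ with the same separation guarantees at level $\Omega(1/r)$. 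Given such $T, T'$, I invoke the ACMM line-embedding rounding~\cite{ACMM05}: form the pseudo-metric $d = d_{\pE'}$ from Proposition~\ref{prop:pE-metric}, order the vertices by a combined distance to $T$ and to $T'$, and round via a random threshold (taking advantage of the antipodal symmetry to produce a valid cut); the standard analysis then converts the $\Omega(1/r)$ antipodal separation into the desired $O(r)$-approximation for the uncut count.

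The main obstacle I anticipate is verifying that ACMM's rounding, which leverages the antipodal structure to produce a balanced cut, goes through with pseudo-expectations rather than honest inner products of vectors. Because all inequalities ACMM use (the four antipodal triangle inequalities, booleanness, and rearrangements thereof) are explicitly part of our constraint system and therefore hold under $\pE'$, this verification should be a routine but careful line-by-line port of their argument, with the only substantive change being that the $\Omega(1/\sqrt{\log n})$ separation from the original ARV rounding is replaced by our $\Omega(1/r)$ separation. Once this port is complete, the approximation ratio is $O(r)$ and the running time is $\exp(n / 2^{\Omega(r^2)}) \poly(n)$, as claimed.
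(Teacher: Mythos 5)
Your Steps~I and~II are fine for Min UnCut, but there are two genuine gaps. First, the reduction for Min 2CNF Deletion: a clause $x_i \vee x_j$ is violated iff $X_i = X_j = -1$, so its indicator is $\tfrac14(1-X_i)(1-X_j) = \tfrac14(1+X_{-i})(1-X_j)$, which is a \emph{directed} cut indicator on the literal graph, not an ``uncut'' term of the form $\tfrac14(X_u+X_v)^2$ (the latter is $1$ when both literals are true as well). This objective cannot be massaged into an undirected Min UnCut-type instance; that is why the paper (following \cite{ACMM05}) routes both problems through Min \emph{Symmetric Directed} Cut (Proposition~\ref{prop:sym-dicut}) and proves Theorem~\ref{thm:sym-dicut} for that problem, working with the symmetric directed metric $d^{\text{dir}}_{\pE}(i,j) = \pE[(1+\sgn(i)X_{|i|})(1-\sgn(j)X_{|j|})]$.

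Second, and more seriously, your Step~III rounding does not work even for Min UnCut. Producing one well-separated pair $(T,T')$ and cutting at a single random threshold only charges edges whose endpoints straddle the threshold; an edge $(i,j)$ with $\pE'[(X_i+X_j)^2]\approx 0$ (nearly antipodal endpoints) both of whose endpoints sit far from $T\cup T'$ can land entirely on one side, yet the relaxation pays essentially nothing for it, so the required bound $\Pr[\text{uncut}] \leqs O(r)\cdot\pE'[(X_i+X_j)^2]$ fails. This is precisely why ACMM's rounding --- and the paper's port of it --- is an \emph{iterative} region-growing procedure driven by a volume-based ARV lemma (Lemma~\ref{lem:arv-directed}): each iteration extracts $S_\ell$ with $d(S_\ell,-S_\ell)\geqs\Omega(1/r)$ and $\vol(M_\ell\setminus(S_\ell\cup -S_\ell))\leqs(1-\Omega(1))\vol(M_\ell)$, removes a random ball $T_\ell\cup -T_\ell$, and recurses on the residual set; the geometric decay of volume makes the per-iteration cost $O(r)\cdot\vol(M_\ell)$ sum to $O(r)\cdot\vol(V)\leqs O(r)\cdot OBJ^*$. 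Your Case~1 has a further flaw: with no balance constraint in Min UnCut, $|V_{\geqs 0.1}|\geqs 0.1n$ does not force $|V_{\leqs -0.1}|=\Omega(n)$, so the claimed $T'$ need not exist; the paper instead folds the ``integral'' vertices into iteration $\ell=0$ of the same region-growing loop via $S_0=\{i : \pE'[X_i]\leqs -0.1\}\cup\{-i : \pE'[X_i]\geqs 0.1\}$. To repair your proposal you would need to adopt the symmetric-directed-cut formulation and the iterative rounding wholesale, not just the separated-set step.
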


The algorithms of~\cite{ACMM05} for both Min 2CNF Deletion and Min UnCut are derived via an algorithm for a more general problem called \emph{Minimum Symmetric Directed Cut} as defined below.

{\bf Min Symmetric DiCut.} A directed graph $G = (V, E)$ is said to be \emph{symmetric} if (i) the vertex set $V$ is $[-n] \cup [n]$ where $[-n] = \{-n, \dots, -1\}$ and (ii) an arc $(i, j)$ belongs to $E$ if and only if $(-j, -i)$ also belongs to $E$. For every $S \subseteq V$, we use $-S$ to denote $\{-i \mid i \in S\}$. A set $S$ is said to be symmetric iff $S = -S$. Furthermore, we say that a cut $(S, V \setminus S)$ is symmetric iff $V \setminus S = -S$.

In Min Symmetric DiCut, we are given as an input a symmetric directed graph $G = (V, E)$ and the goal is to find a symmetric cut $(S, -S)$ that minimizes the number of arcs going from $S$ to $-S$.

\begin{proposition}[{\cite{ACMM05}}] \label{prop:sym-dicut}
If there exists a $T(n)$-time $\rho(n)$-approximation for Min Symmetric DiCut, then there also exists $O(T(n))$-time $\rho(n)$-approximation for Min 2CNF Deletion and Min UnCut.
\end{proposition}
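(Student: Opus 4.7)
The plan is to invoke Proposition~\ref{prop:sym-dicut} as a black-box reduction: it suffices to design an $\exp(n/2^{O(r^2)})\poly(n)$-time $O(r)$-approximation for Min Symmetric DiCut on $n$-vertex symmetric digraphs, and the proposition then yields the desired algorithms for Min 2CNF Deletion (where $n$ is the number of variables) and Min UnCut (where $n$ is the number of vertices). The constant-factor blow-up in vertex count inside the reduction is absorbed into the constant hidden inside $O(r^2)$ in the exponent of the running time.

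For Min Symmetric DiCut on input $G = ([-n] \cup [n], E)$, I follow the three-step blueprint used in the preceding subsections. Step~I: I solve a degree-$D$ SoS relaxation for $D = \lceil 1000\, n / 2^{r^2} \rceil + 2$, with $\pm 1$ variables $X_{-n},\dots,X_{-1},X_1,\dots,X_n$, enforcing the symmetric-cut condition $X_i + X_{-i} = 0$, booleanness $X_i^2 = 1$, the full four-way antipodal triangle inequalities required by Corollary~\ref{cor:arv-antipodal}, and an objective-bound inequality $4 \cdot OBJ - \sum_{(i,j) \in E}(1+X_i)(1-X_j) \geqs 0$ that upper-bounds the number of arcs going from the side $X = +1$ to the side $X = -1$. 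A binary search over $OBJ$ returns the smallest feasible $OBJ^*$, which is a lower bound on the true optimum. Step~II: I apply Lemma~\ref{lem:main-conditioning} with suitable absolute constants $\tau,\gamma$ to condition $\pE$ into a degree-2 pseudo-expectation $\pE'$ that is hollow on $V_{(-\tau,\tau)}$, i.e., $|V_{(-\tau,\tau)} \setminus C_\gamma(i)| \leqs 2^{O(r^2)}$ for every $i \in V_{(-\tau,\tau)}$. Step~III: I follow the ACMM rounding~\cite{ACMM05} for Min Symmetric DiCut, splitting into two ``easy-to-round'' cases exactly as in the Balanced Separator and Sparsest Cut proofs. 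If a constant fraction of vertices have $|\pE'[X_i]| \geqs \tau$, a direct thresholding by $\pE'[X_i]$ yields a symmetric cut losing only an $O(1)$ factor. Otherwise, most vertices lie in $V_{(-\tau,\tau)}$, and I apply the antipodal ARV Structural Theorem (Corollary~\ref{cor:arv-antipodal}) to obtain disjoint $T,T' \subseteq V_{(-\tau,\tau)}$ of total size $\Omega(n)$ with the pairwise and antipodal separations $\pE'[(X_i - X_j)^2], \pE'[(X_i + X_{i'})^2], \pE'[(X_j + X_{j'})^2] \geqs \Omega(1/r)$; a threshold cut based on the distance to $T$ in the natural extended metric on $[-n] \cup [n]$, where $d_{\pE'}(-i,j) := \pE'[(X_i + X_j)^2]$, then produces the desired symmetric cut.

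The main obstacle is verifying that the threshold cut in the fractional case is both genuinely \emph{symmetric} and has expected cost $O(r) \cdot OBJ^*$. Here the antipodal symmetry $d_{\pE'}(-i,-j) = d_{\pE'}(i,j)$, which follows directly from the constraint $X_i + X_{-i} = 0$, is essential: it allows the [ACMM05] rounding scheme to be applied verbatim on the hollow restricted subspace, paying only the $\Omega(1/r)$ separation gap of Corollary~\ref{cor:arv-antipodal} in place of the $\Omega(1/\sqrt{\log n})$ gap available in the polynomial-time setting of~\cite{ACMM05}. For any arc $(i,j) \in E$, the integration-over-$\theta$ argument bounds the probability that $(i,j)$ is cut by $O(r) \cdot d_{\pE'}(i,j)$, and summing over $(i,j) \in E$ against the objective-bound inequality gives an expected cost of $O(r) \cdot OBJ^*$. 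Combined with Proposition~\ref{prop:sym-dicut}, this yields Theorem~\ref{thm:cnfdel-uncut}.
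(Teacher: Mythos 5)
You have not actually proved the statement in question. Proposition~\ref{prop:sym-dicut} is a \emph{reduction} claim: its proof must exhibit approximation-preserving, essentially linear-time reductions from Min 2CNF Deletion and from Min UnCut to Min Symmetric DiCut. Your proposal instead opens by ``invoking Proposition~\ref{prop:sym-dicut} as a black box'' and then spends all of its effort sketching an SoS-plus-ARV algorithm for Min Symmetric DiCut itself. That is the content of a different result in the paper (Theorem~\ref{thm:sym-dicut}, which together with the proposition yields Theorem~\ref{thm:cnfdel-uncut}); it is not, and cannot be, a proof of the proposition. Using the proposition as a black box inside its own proof is circular, and nothing in your write-up explains why an algorithm for Min Symmetric DiCut transfers to the other two problems.

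What the paper actually does, and what is missing from your proposal, is the following pair of gadget constructions. For Min 2CNF Deletion on variables $x_1,\dots,x_n$, build the symmetric digraph on $[-n]\cup[n]$ where each clause $b_1 \vee b_2$ contributes the two arcs $(-\sgn(b_1)\cdot\var(b_1),\ \sgn(b_2)\cdot\var(b_2))$ and $(\sgn(b_2)\cdot\var(b_2),\ -\sgn(b_1)\cdot\var(b_1))$; assignments correspond bijectively to symmetric cuts, and the number of arcs crossing from $S$ to $-S$ is exactly twice the number of violated clauses, so a $\rho(n)$-approximate symmetric dicut pulls back to a $\rho(n)$-approximate deletion set. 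For Min UnCut on $G'=([n],E')$, each edge $\{i,j\}$ contributes arcs $(-i,j)$ and $(-j,i)$, and symmetric cuts correspond to cuts of $G'$ with arc-cost exactly twice the number of uncut edges. Both constructions take time linear in the input, giving the $O(T(n))$ bound. If you want your overall argument for Theorem~\ref{thm:cnfdel-uncut} to be complete, you need to supply these (or equivalent) reductions; the algorithmic sketch you wrote belongs to the companion theorem, not here.
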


\begin{proof}[Proof Sketch of Proposition~\ref{prop:sym-dicut}]
We can reduce Min 2CNF Deletion to Min Symmetric DiCut as follows. Suppose that the variable set in the Min 2CNF Deletion input is $\cX = \{x_1, \dots, x_n\}$. Let the input graph of Min Symmetric DiCut be $G = (V, E)$ such that $V = [-n] \cup [n]$ and, for each clause which is an OR of two literals $b_1$ and $b_2$, we add two arcs $(-\sgn(b_1) \cdot \var(b_1), \sgn(b_2) \cdot \var(b_2))$ and $(\sgn(b_2) \cdot \var(b_2), -\sgn(b_1) \cdot \var(b_1))$ where $\sgn(b) \in \{\pm 1\}$ is -1 iff the clause $b$ is a negation of a variable and $\var(b) \in [n]$ denote the index of the variable corresponding to $b$. Observe that there is a one-to-one correspondence between assignments from $\cX$ to $\{0, 1\}$ and symmetric cuts in $G$ such that the number of arcs cut is exactly twice the number of clauses unsatisfied. Hence, a $T(n)$-time $\rho(n)$-approximation algorithm for Min Symmetric DiCut translates directly to an $O(T(n))$-time $\rho(n)$-approximation algorithm for Min 2CNF Deletion.

The reduction from Min UnCut to Min Symmetric DiCut is similar. Suppose that the input graph to Min UnCut is $G' = (V', E')$ where $V' = [n]$. Then, we create the input graph $G = (V, E)$ for Min Symmetric Cut where $V = [-n] \cup [n]$ and, for each edge $\{i, j\} \in E'$, we add two arcs $(-i, j)$ and $(-j, i)$ to $E$. Analogous to before, it is simple to see that there is a one-to-one correspondence between cuts of $G'$ and symmetric cuts of $G$ such that the number of arcs cut in $G$ is exactly twice the number of uncut edges in $G'$. Thus, a $T(n)$-time $\rho(n)$-approximation algorithm for Min Symmetric DiCut implies to an $O(T(n))$-time $\rho(n)$-approximation algorithm for Min UnCut.
\end{proof}

Given Proposition~\ref{prop:sym-dicut}, we can focus our attention to design an approximation algorithm for Min Symmetric DiCut. In particular, to show Theorem~\ref{thm:cnfdel-uncut}, it suffices to prove the following:

\begin{theorem} \label{thm:sym-dicut}
For any $r > 1$ (possibly depending on $n$), there exists an $\exp(n/2^{O(r^2)})\poly(n)$-time $O(r)$-approximation for Min Symmetric DiCut.
\end{theorem}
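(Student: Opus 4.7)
The plan is to instantiate the same three-step blueprint used for Balanced Separator and Sparsest Cut, adapted to the antipodal structure of Min Symmetric DiCut. In Step I, introduce SoS variables $X_1,\ldots,X_n$ together with antipodal copies $X_{-i}$ subject to the linear constraint $X_i + X_{-i} = 0$, so that any boolean assignment corresponds to a symmetric cut $(S, -S)$ via $S = \{i : X_i = 1\}$. The constraint system will consist of booleaness $X_i^2 = 1$, all four sign variants of the $\ell_2^2$ triangle inequality on $\{X_i, X_{-i}\}_{i \in [n]}$ (as required by Corollary~\ref{cor:arv-antipodal}), and the objective bound $4 \cdot OBJ - \sum_{(u,v) \in E} (1 + X_u)(1 - X_v) \geqs 0$, since an arc $(u,v)$ crosses a symmetric cut iff $X_u = 1$ and $X_v = -1$. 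Binary search for the largest $OBJ$ admitting a feasible degree-$D$ pseudo-expectation, with $D = \lceil 1000 n/2^{r^2} \rceil + 2$; any integral symmetric cut is feasible, so $OBJ^* \leqs OPT$, and this step runs in $\exp(n/2^{\Omega(r^2)})\poly(n)$ time.

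In Step II, apply Lemma~\ref{lem:main-conditioning} with constants such as $\tau = 0.1$, $\gamma = 0.1$ and $\ell = D - 2$ to obtain a degree-2 pseudo-expectation $\pE'$ for the same system such that $|V_{(-\tau,\tau)} \setminus C_\gamma(i)| \leqs 2^{r^2}$ for every $i \in V_{(-\tau,\tau)}$. Because of the antipodal identification $X_{-i} = -X_i$, the induced metric $d_{\pE'}$ is invariant under the global antipodal map (that is, $d(-u,-v) = d(u,v)$), and the hollowness hypothesis of Corollary~\ref{cor:arv-antipodal} on the ``doubled'' vertex set $[-n] \cup [n]$ follows directly from the single-sided hollowness output of Lemma~\ref{lem:main-conditioning}.

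In Step III, I follow the ACMM rounding, splitting into two easy-to-round cases exactly as in the Balanced Separator and Sparsest Cut proofs. If at least $\Omega(n)$ indices $i$ satisfy $|\pE'[X_i]| \geqs \tau$, then $T := V_{\geqs \tau}$ and $-T$ are antipodal disjoint sets which are $\Omega(1)$-separated in $d_{\pE'}$ by a short triangle inequality argument. Otherwise $|V_{(-\tau,\tau)}| \geqs \Omega(n)$, and the antipodal ARV theorem (Corollary~\ref{cor:arv-antipodal}) applied to $V_{(-\tau,\tau)}$ yields disjoint sets $T, T' = -T$ of total size $\Omega(n)$ with $d_{\pE'}(T, -T) \geqs \Delta = \Omega(1/r)$. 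In both cases, perform the region-growing rounding: pick $\theta$ uniformly from $[0, \Delta/2)$ and output $S = \{u \in [-n] \cup [n] : d_{\pE'}(u, T) < \theta\}$. The expected number of arcs cut is $O(r) \cdot \sum_{(u,v) \in E} d_{\pE'}(u,v) = O(r) \cdot OBJ^*$ by the standard per-arc bound.

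The main technical obstacle is verifying that the output cut is actually \emph{symmetric}, rather than merely small, since the Min Symmetric DiCut objective is defined only on symmetric cuts. The key points are (i) $T = -(-T)$ is an antipodal pair by construction, (ii) $d_{\pE'}$ is invariant under $i \mapsto -i$, so $d_{\pE'}(-u, T) = d_{\pE'}(u, -T)$, and (iii) for any $u$, the triangle inequality gives $d_{\pE'}(u, T) + d_{\pE'}(u, -T) \geqs d_{\pE'}(T, -T) \geqs \Delta$; hence when $\theta < \Delta/2$, exactly one of $u$ and $-u$ lies in $S$, so $-S = V \setminus S$ and the cut is symmetric. Combined with Proposition~\ref{prop:sym-dicut}, this yields Theorem~\ref{thm:cnfdel-uncut} for Min 2CNF Deletion and Min UnCut.
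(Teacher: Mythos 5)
Your Steps I and II match the paper's, and the reduction to Min Symmetric DiCut via Proposition~\ref{prop:sym-dicut} is the intended route. The gap is in Step III: a \emph{single} application of the ARV separation followed by one ball-growing does not produce a symmetric cut. Your point (iii) only shows that \emph{at most} one of $u, -u$ can lie in $S = \{w \mid d(T, w) < \theta\}$ when $\theta < \Delta/2$; it does not show that at least one does. A vertex $u$ with $d(T,u)$ and $d(T,-u)$ both at least $\Delta/2$ (i.e., far from both $T$ and $-T$, which the ARV theorem in no way forbids) is never captured by the ball, so both $u$ and $-u$ land in $V \setminus S$ and the output is not of the form $(S,-S)$. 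Worse, you cannot just assign the leftover vertices arbitrarily: two unassigned vertices $u,v$ may satisfy $d(u,v)=0$ yet be forced onto opposite sides, cutting arcs that your per-arc bound $|d(T,u)-d(T,v)| \leqs d(u,v)$ never charges.

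This is exactly why the paper (following Agarwal \etal~\cite{ACMM05}) uses an \emph{iterative} scheme: Lemma~\ref{lem:arv-directed} guarantees not only $d(S_\ell, -S_\ell) \geqs \Omega(1/r)$ but also that the volume of the \emph{uncovered} part $M_\ell \setminus (S_\ell \cup -S_\ell)$ drops by a constant factor, so after ball-growing one recurses on $M_{\ell+1}$, and the expected number of arcs cut telescopes as $\sum_\ell O(r)(1-C)^{\ell-1}\vol(V) = O(r)\cdot\vol(V)$. The second easy-to-round case (many $i$ with $|\pE'[X_i]| \geqs \tau$) is likewise handled as the $\ell = 0$ iteration of this loop rather than as a terminal case. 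A secondary caution: the directed metric is not symmetric in its two arguments, so one must consistently grow balls using $d(S_\ell, \cdot)$ (distance \emph{from} the seed set) for the bound $d(S_\ell, v) \leqs d(S_\ell, u) + d(u,v)$ on the cut probability of the arc $(u,v)$ to go through. To repair your proof you should replace the one-shot rounding with this recursion; the rest of your argument then goes through.
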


To prove the theorem, it will be convenient to define the notion of \emph{symmetric directed metric} used in the work of Agarwal \etal~\cite{ACMM05}. The notations surrounding symmetric directed metric are defined in an analogous fashion to those of metric (Definition~\ref{def:metric}):

\begin{definition}[Directed Metric-Related Notions]
A \emph{symmetric directed metric} $d$ on a symmetric set $X \subseteq [-n] \cup [n]$ is a distance function $d: X \times X \to \mathbb{R}_{\geqs 0}$ that satisfies (1) $d(x, x) = 0$, (2) symmetry $d(x, y) = d(-y, -x)$ and (3) triangle inequality $d(x, z) \leqs d(x, y) + d(y, z)$, for all $x, y, z \in X$. We use the following notations throughout this section:
\begin{itemize}
\item For $x \in X$ and $S, T \subseteq X$, $d(x, S) := \min_{y \in S} d(x, y)$ and $d(S, T) := \min_{y \in S} d(y, T)$. 
\item We say that $S, T$ are \emph{$\Delta$-separated} iff $d(S, T) \geqs \Delta$.
\item An \emph{(open) ball of radius $r$ around $x$} denoted by $\cB_d(x, r)$ is defined as $\{y \in X \mid d(x, y) < r\}$. 
\item A metric space $(X, d)$ is said to be \emph{($r$, $m$)-hollow} if $|\cB_d(x, r)| \leqs m$ for all $x \in X$.
\end{itemize}
\end{definition}

We will need an additional notation of \emph{volume} of a set of vertices which is simply the total distance of all edges with both endpoints lie in the set:

\begin{definition}[Volume]
Given a directed graph $G = (V, E)$ where $V \subseteq [n] \cup [-n]$ is a symmetric set and a symmetric directed metric $d$ on $V$, the volume of $M \subseteq V$ is defined as $\vol_{d, G}(M) := \sum_{(i, j) \in E \atop i, j \in M} d(i, j)$. 
\end{definition}

Similar to the case of (undirected) metric above, boolean degree-2 pseudo-expectation naturally induces a symmetric directed metric on $[-n] \cup [n]$, as specified below.

\begin{definition}\label{prop:pE-directed-metric}
Let $\pE: \mathbb{R}_2[X_1, \dots, X_n] \to \mathbb{R}$ be any degree-2 pseudo-expectation that satisfies the following triangle inequalities for all $i, j, k \in [n]$: $\pE[(X_i - X_j)^2] \leqs \pE[(X_i - X_k)^2] + \pE[(X_k - X_j)^2], \pE[(X_i - X_j)^2] \leqs \pE[(X_i + X_k)^2] + \pE[(X_k + X_j)^2], \pE[(X_i + X_j)^2] \leqs \pE[(X_i - X_k)^2] + \pE[(X_k + X_j)^2]$ and 
$\pE[(X_i + X_j)^2] \leqs \pE[(X_i + X_k)^2] + \pE[(X_k - X_j)^2]$. 

Define $d^{\text{dir}}_{\pE}: ([n] \cup [-n]) \times ([n] \cup [-n]) \to \mathbb{R}_{\geqs 0}$ by $d^{\text{dir}}_{\pE}(i, j) = \pE[(1 + \sgn(i) \cdot X_{|i|})(1 - \sgn(j) \cdot X_{|j|})]$. Then, $d^{\text{dir}}_{\pE}$ is a symmetric directed metric on $[-n] \cup [n]$.
\end{definition}

Finally, we state the version of the ARV Lemma used in the symmetric directed metric case. This version is closely related to the antipodal version of the ARV Lemma (Corollary~\ref{cor:arv-antipodal}), with two exceptions: (1) the ``size'' of $S$ is not measured in terms of $|S|$ but rather in $\vol(S)$ and (2) the distance is now in terms of the directed metric instead of the usual metric distance. For a full proof of how to derive such a variant from the standard version, please refer to Lemma 4.6 of~\cite{ACMM05}.

\begin{lemma}[ARV Lemma: Directed Metric Version~\cite{ACMM05}] \label{lem:arv-directed}
Let $\pE: \mathbb{R}_2[X_1, \dots, X_n] \to \mathbb{R}$ be any degree-2 pseudo-expectation that satisfies the following conditions for any $\beta > 0$ and $m \in \N$:
\begin{itemize}
\item (Boolean) For every $i \in [n]$, $\pE[X_i^2] = 1$. 
\item (Triangle Inequality) For every $i, j, k \in [n]$, 
\begin{align*}
\pE[(X_i - X_j)^2] \leqs \pE[(X_i - X_k)^2] + \pE[(X_k - X_j)^2], \\
\pE[(X_i - X_j)^2] \leqs \pE[(X_i + X_k)^2] + \pE[(X_k + X_j)^2], \\
\pE[(X_i + X_j)^2] \leqs \pE[(X_i - X_k)^2] + \pE[(X_k + X_j)^2], \\
\pE[(X_i + X_j)^2] \leqs \pE[(X_i + X_k)^2] + \pE[(X_k - X_j)^2].
\end{align*}
\item (Hollowness) For all $i \in [n]$, $|\{j \in [n] \mid |\pE[X_iX_j]| > 1 - \beta\}| \leqs m$.
\end{itemize}
Let $G = ([n] \cup [-n], E)$ be any graph and $M \subseteq [-n] \cup [n]$ be any symmetric set. Then, there exists a randomized polynomial time algorithm that, with probability 2/3, produces a subset $S \subseteq M$ such that $\frac{\vol_{d^{\text{dir}}_{\pE}, G}(M \setminus (S \cup -S))}{\vol_{d^{\text{dir}}_{\pE}, G}(M)} \leqs 1 - \Omega_{\beta}(1)$ and $d^{\text{dir}}_{\pE}(S, -S) \geqs \Omega_{\beta}(1/\sqrt{\log m})$.
\end{lemma}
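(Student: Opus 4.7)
The plan is to adapt the proof of the antipodal ARV Structural Theorem (Corollary~\ref{cor:arv-antipodal}) to the directed-metric setting with $G$-volume as the ``largeness'' measure, paralleling Lemma~4.6 of~\cite{ACMM05}; the latter supplies the complete argument in the SDP formulation, and I would translate it into the pseudo-expectation framework used here.

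\textbf{Geometric setup.} By the Gram decomposition of the PSD moment matrix of $\pE$, obtain unit vectors $v_0,v_1,\dots,v_n$ in Euclidean space with $\langle v_0,v_i\rangle=\pE[X_i]$ and $\langle v_i,v_j\rangle=\pE[X_iX_j]$. For each $a\in[-n]\cup[n]$ set $u_a:=v_0+\sgn(a)\,v_{|a|}$, so $u_{-a}=2v_0-u_a$. A direct expansion gives
\[
d^{\text{dir}}_\pE(a,b)=\pE\bigl[(1+\sgn(a)X_{|a|})(1-\sgn(b)X_{|b|})\bigr]=\langle u_a,u_{-b}\rangle.
\]
The four triangle-inequality hypotheses correspond exactly to the $\ell_2^2$ triangle inequalities on the extended configuration $\{u_a\}_{a\in[-n]\cup[n]}$, making it a negative-type metric, and the hollowness hypothesis becomes: for each $a$, at most $O(m)$ elements $b\in[-n]\cup[n]$ satisfy $\|u_a-u_b\|^2<O(\beta)$ (over both sign choices on $b$).

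\textbf{Weighted Gaussian projection and chaining.} Sample $g\sim\mathcal{N}(0,I)$ and set $\xi_a:=\langle g,u_a\rangle$. Pick thresholds $\sigma_-<\sigma_+$ with $\sigma_+-\sigma_-=\Omega_\beta(1)$ such that the tail sets $L:=\{a\in M:\xi_a\le\sigma_-\}$ and $R:=\{a\in M:\xi_a\ge\sigma_+\}$ each carry a constant-in-$\beta$ fraction of $\vol_{d^{\text{dir}}_\pE,G}(M)$; such thresholds exist by averaging the $\xi$-CDF against the $G$-edge-volume measure, and one can further symmetrize so that $L\cap(-L)=R\cap(-R)=\emptyset$ at negligible volume loss. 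Now invoke the weighted version of Lee's matching/chaining argument (Lemmas~4.1 and~4.2 of~\cite{Lee05}): with probability at least $2/3$ over $g$, either $d^{\text{dir}}_\pE(L,R)\ge\Omega_\beta(1/\sqrt{\log m})$ directly, or the weighted matching produces $\Omega(\vol_{d^{\text{dir}}_\pE,G}(M))$ mass of close $(a,b)\in L\times R$ pairs; in the latter case, the Gaussian tail separation combined with the triangle inequalities forces too many $u_a$'s to cluster within $\ell_2^2$-radius $O(\beta)$ of a common center, contradicting hollowness once the constant $c_\beta$ in $\Delta=c_\beta/\sqrt{\log m}$ is chosen small enough.

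\textbf{Assembling $S$, and the main obstacle.} Set $S:=L\cup(-R)$; then $-S=(-L)\cup R$, $S\cap(-S)=\emptyset$, and $S\cup(-S)\supseteq L\cup R\cup(-L)\cup(-R)$, hence $\vol_{d^{\text{dir}}_\pE,G}(M\setminus(S\cup(-S)))\le(1-\Omega_\beta(1))\,\vol_{d^{\text{dir}}_\pE,G}(M)$. For any $a\in S,b\in-S$, the symmetry $d^{\text{dir}}_\pE(a,b)=d^{\text{dir}}_\pE(-b,-a)$ reduces the distance to $d^{\text{dir}}_\pE(L,R)$, giving the required $\Omega_\beta(1/\sqrt{\log m})$ lower bound. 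The hard step is the chaining: Lee's original matching/cover analysis was written for the uniform counting measure (as used to derive Corollary~\ref{cor:arv-antipodal}), and it must be redone with the edge-weighted $G$-volume measure while preserving the $1/\sqrt{\log m}$ quantitative bound. Carrying out this weighted random-projection and cover analysis is precisely what Lemma~4.6 of~\cite{ACMM05} accomplishes, and in the pseudo-expectation formulation above the argument goes through verbatim.
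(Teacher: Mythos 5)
The paper offers no proof of this lemma: it is imported as a black box, with the reader referred to Lemma~4.6 of~\cite{ACMM05} for the derivation from the standard antipodal ARV theorem (Corollary~\ref{cor:arv-antipodal}), together with the implicit claim --- analogous to the remark after Theorem~\ref{thm:arv-metric} --- that Lee's chaining argument yields the hollowness-refined bound $\Omega_\beta(1/\sqrt{\log m})$ rather than $\Omega_\beta(1/\sqrt{\log n})$. Your proposal is therefore a reconstruction of the cited argument rather than an alternative to anything written in the paper, and its skeleton --- the Gram vectors, the identity $d^{\text{dir}}_{\pE}(a,b)=\langle u_a,u_{-b}\rangle$ (which checks out), the Gaussian projection with volume-weighted tails, and the Lee-style matching/cover analysis played off against hollowness --- is indeed how \cite{ACMM05} and \cite{Lee05} proceed. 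Like the paper, you defer the genuinely hard step (redoing the chaining under the edge-volume measure while preserving the $1/\sqrt{\log m}$ bound) to the reference, so the proposal sits at essentially the same level of rigor as the paper's citation.

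One step of your assembly is wrong as written. Having set $S=L\cup(-R)$, you assert that the symmetry $d^{\text{dir}}_{\pE}(a,b)=d^{\text{dir}}_{\pE}(-b,-a)$ reduces every pair in $S\times(-S)$ to a pair in $L\times R$. It does not: for $a\in L$ and $b\in -L\subseteq -S$ the symmetric pair lies in $L\times(-L)$, and in particular the antipodal pairs contribute $d^{\text{dir}}_{\pE}(a,-a)=\|u_a\|^2=2\bigl(1+\sgn(a)\,\pE[X_{|a|}]\bigr)$, which can be arbitrarily close to $0$ when $\pE[X_{|a|}]$ is near $-\sgn(a)$ --- and such an $a$ can still land in the Gaussian tail, since tail membership constrains $\langle g,v_{|a|}\rangle$, not $\|u_a\|$. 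The actual mechanism in \cite{ACMM05} (and in Karakostas's antipodal trick that the paper invokes for Corollary~\ref{cor:arv-antipodal}) is to enforce the exact pairing $a\in L\iff -a\in R$, e.g.\ by taking thresholds symmetric about $\langle g,v_0\rangle$, so that $S=L$ and $-S=R$ and only $L\times R$ pairs arise; the problematic near-integral vertices are then excised beforehand (this is precisely the role of the separate $\ell=0$ step in the paper's Min Symmetric DiCut algorithm). Your ``symmetrize at negligible volume loss'' gestures at this, but the guarantee $d^{\text{dir}}_{\pE}(S,-S)\ge\Omega_\beta(1/\sqrt{\log m})$ needs the exact antipodal pairing plus a separate treatment of the pairs $(a,-a)$, not an approximate volume argument.
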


With all the preliminaries in place, we proceed to prove Theorem~\ref{thm:sym-dicut}. As with the previous proofs, to ease the notations, we will drop the subscripts when the graph, metric or pseudo-distribution are already clear from the context.

\begin{proof}[Proof of Theorem~\ref{thm:sym-dicut}]
On input graph $G = (V = [-n] \cup [n], E)$, the algorithm works as follows.

\paragraph{Step I: Solving for Degree-$n/2^{\Omega(r^2)}$ Pseudo-Expectation.} For every real number $OBJ \in \mathbb{R}$, let $(\cP^{VC}_{G, OBJ}, \cQ^{VC}_{G, OBJ})$ be the following system of polynomial constraints:
\begin{enumerate}
\item (Boolean) For all $i \in [n]$, $X_i^2 - 1 = 0$.
\item (Triangle Inequalities) For all $i, j, k \in [n]$, 
\begin{align*}
(X_i - X_k)^2 + (X_k - X_j)^2 - (X_i - X_j)^2 \geqs 0, \\
(X_i + X_k)^2 + (X_k + X_j)^2 - (X_i - X_j)^2 \geqs 0, \\
(X_i - X_k)^2 + (X_k + X_j)^2 - (X_i + X_j)^2 \geqs 0, \\
(X_i + X_k)^2 + (X_k - X_j)^2 - (X_i + X_j)^2 \geqs 0. \\
\end{align*}
\item (Objective Bound) $OBJ - \sum_{(i, j) \in E \cap ([n] \times [n])} (1 - X_iX_j) + \sum_{(i, j) \in E \cap ([-n] \times [n])} (1 + X_iX_j) \geqs 0$.
\end{enumerate}
Let $D := \lceil 1000n/2^{r^2} \rceil + 2$. The algorithm first uses binary search to find the largest $OBJ$ such that there exists a degree-$D$ pseudo-expectation for $(\cP^{VC}_{G, OBJ}, \cQ^{VC}_{G, OBJ})$. Let this value of $OBJ$ be $OBJ^*$, and let $\pE$ be a degree-$D$ pseudo-expectation satisfying $(\cP^{VC}_{G, OBJ^*}, \cQ^{VC}_{G, OBJ^*})$.

Notice that this step of the algorithm takes $O(n/D)^{O(D)}n^{O(1)} = \exp\left(O(nr^2/2^{r^2})\right)n^{O(1)} = \exp\left(n/2^{\Omega(r^2)}\right)n^{O(1)}$ time. Observe also that the integral solution is a solution with $OBJ = OPT$ where $OPT$ is the minimum number of arcs cut by any symmetric cut of $G$. Thus, $OBJ^* \leqs OPT$.

\paragraph{Step II: Conditioning to Get ``Hollow'' Solution.} Use Lemma~\ref{lem:main-conditioning} to find an a degree-2 pseudo-expectation $\pE'$ for $(\cP^{VC}_{G, OBJ^*}, \cQ^{VC}_{G, OBJ^*})$ such that for all $i \in V_{(-0.1, 0.1)}$, $|V_{(-0.1, 0.1)} \setminus C_{0.1}(i)| < 2^{r^2}$.

\paragraph{Step III: Following Agarwal \etal's Algorithm.} The last step of our algorithm proceeds exactly in the same manner as Agarwal \etal's~\cite{ACMM05}. The algorithm proceed in iterations as follows.
\begin{enumerate}
\item First, initialize $M_0 \leftarrow [n] \cup [-n]$ and $\ell \leftarrow 0$.
\item While $M_\ell$ is not empty, execute the following:
\begin{enumerate}
\item If $\ell = 0$, let $S_0 = \{i \mid i \in [n] \text{ and } \pE'[X_i] \leqs -0.1\} \cup \{-i \mid i \in [n] \text{ and } \pE'[X_i] \geqs 0.1\}$. If $S_0 = \emptyset$, then let $\ell \leftarrow 1$ and skip the following steps.
\item Otherwise, use Lemma~\ref{lem:arv-directed} to find a set $S_\ell \subseteq M_\ell$ such that $\vol(M \setminus (S_\ell \cup -S_{\ell})) \leqs (1 - C) \cdot \vol(M_\ell)$ for some constant $C > 0$ and $d(S_\ell, -S_\ell) \geqs \Omega(1/r)$.
\item Pick $\theta$ uniformly at random from $[0, d(S_\ell, -S_\ell)/2)$.
\item Let $T_\ell = \{i \in [-n] \cup [n] \mid d(S_\ell, i) \leqs \theta\}$, $M_{\ell + 1} \leftarrow M_\ell \setminus (T_\ell \cup -T_\ell)$ and $\ell \leftarrow \ell + 1$.
\end{enumerate}
\item Output the cut $(S, -S)$ where $S = \cup_{\ell \geqs 0} T_\ell$.
\end{enumerate}

To bound the expected number of arcs cut, first observe that $\vol(M_\ell)$ shrinks by a factor of $(1 - C)$ in each iteration, i.e., $\vol(M_\ell) \leqs (1 - C)^{\ell - 1} \cdot \vol(V)$. Next, consider the arcs cut in the $\ell$-th step for $\ell \geqs 1$, i.e., the arcs $(i, j)$ that lies in $(T_\ell \times M_\ell) \cup (M_{\ell} \times -T_{\ell})$. Consider any arc $(i, j) \in M_\ell \times M_\ell$. The probability that the arc is cut in the $\ell$-th iteration is at most $d(i, j)/(d(S_\ell, -S_{\ell})/2) \leqs O(r) \cdot d(i, j)$. Hence, in total the expected number of arcs cut in this iteration is at most
\begin{align*}
\sum_{(i, j) \in E \cap (M_{\ell} \times M_{\ell})} O(r) \cdot d(i, j) = O(r) \cdot \vol(M_\ell) \leqs O(r) \cdot (1 - C)^{\ell - 1} \cdot \vol(V).
\end{align*}
As a result, the expected total number of arcs cut in all iterations $\ell \geqs 1$ is at most $\sum_{\ell \geqs 1} O(r) \cdot (1 - C)^{\ell - 1} \cdot \vol(V) \leqs O(r) \cdot \vol(V)$.

It can be similarly argued that the expected number of arcs cut in the first step is $O(\vol(V))$. Thus, the expected total number of arcs cut is $O(r) \cdot \vol(V)$. Finally, observe that the objective bound can be written as $OBJ \geqs \vol(V)/2$. As a result, this yields an $O(r)$-approximation for the problem.
\end{proof}

\section{Conclusion and Open Questions} \label{sec:open}

In this work, we use the conditioning framework in the SoS Hierarchy together with the ARV Structural Theorem to design ``fast'' exponential time approximation algorithms for Vertex Cover, Uniform Sparsest Cut and related problems that achieve significant speed-up over the trivial ``limited brute force'' algorithms. While we view this as a step towards ultimately understanding the time vs approximation ratio trade-off for these problems, many questions remain open.

First and most importantly, as discussed in the introduction, current lower bounds do not rule out subexponential time approximation algorithms in the regime of our study. For instance, an 1.9-approximation algorithm for Vertex Cover could still possibly be achieved in say $2^{O(\sqrt{n})}$ time. Similarly for Uniform Sparsest Cut and Balanced Separator, $O(1)$-approximation for them could still possibly be achieved in subexponential time. The main open question is to either confirm that such algorithms exist, or rule them out under certain believable complexity hypotheses.

Another, perhaps more plausible, direction is to try to extend our technique to other problems for which the best known polynomial time approximation algorithms employ the ARV Structural Theorem. This includes Balanced Vertex Separator, (Non-uniform) Sparsest Cut, and Minimum Linear Arrangement. While the first problem admits $O(\sqrt{\log n})$-approximation in polynomial time~\cite{FHL08}, several more ingredients beyond the ARV Theorem are required to make the algorithm work. On the other hand, the latter two problems only admit $O(\sqrt{\log n} \log \log n)$-approximation~\cite{ALN05,CHKR10,FL07}. It seems challenging to remove this $\log \log n$ factor and achieve a constant factor approximation, even in our ``fast'' exponential time regime.

\bibliography{main}
\bibliographystyle{alpha}

\end{document}